\newtheorem{theorem}{Theorem}
\newtheorem{fact}[theorem]{Fact}
\newtheorem{lemma}[theorem]{Lemma}
\newtheorem{corollary}[theorem]{Corollary}
\newtheorem{definition}{Definition}
\theoremstyle{definition}
\newtheorem{game}{Game}
\newcommand{\qnote}[1]{{\color{red}{\bf [Qiang: #1]}}}
\newcommand{\ignore}[1]{}
\newcommand{\secp}{n}
\DeclareMathOperator{\Exp}{\mathbb{E}}
\newcommand{\resamp}{\mathsf{R}}
\newcommand\CC{\ensuremath{\mathbb C}\xspace}
\newcommand\RR{\ensuremath{\mathbb R}\xspace}
\newcommand\ZZ{\ensuremath{\mathbb Z}\xspace}
\newcommand{\DS}{\textsl{DS}}
\newcommand{\eqdef}{\stackrel{\operatorname{def}}{=}}
\newcommand{\bool}[1][\relax]{{\ensuremath{{\{0,1\}}^{#1}}}\xspace}
\DeclareMathOperator{\subv}{\mathsf{Subv}}
\DeclareMathOperator{\pred}{\mathsf{Pred}}
\DeclareMathOperator{\forwardpred}{\mathsf{Pred}^{\rightarrow}}
\DeclareMathOperator{\backwardpred}{\mathsf{Pred}^{\leftarrow}}
\DeclareMathOperator{\selfref}{\mathsf{Selfref}}
\DeclareMathOperator{\tv}{\mathsf{tv}}
\newcommand{\cP}{\ensuremath{\mathcal{P}}\xspace}
\newcommand{\cF}{\ensuremath{\mathcal{F}}\xspace}
\newcommand{\cG}{\ensuremath{\mathcal{G}}\xspace}
\newcommand{\cA}{\ensuremath{\mathcal{A}}\xspace}
\newcommand{\cS}{\ensuremath{\mathcal{S}}\xspace}
\newcommand{\cAtwo}{\ensuremath{\mathcal{A}}\xspace}
\newcommand{\cStwo}{\ensuremath{\mathcal{S}_\cAtwo}\xspace}
\newcommand{\cD}{\ensuremath{\mathcal{D}}\xspace}
\newcommand{\sD}{\ensuremath{\widehat{\cD}}\xspace}
\newcommand{\sE}{\ensuremath{\widehat{\cE}}\xspace}
\newcommand{\cE}{\ensuremath{\mathcal{E}}\xspace}
\newcommand{\cW}{\mathcal{W}}
\newcommand{\cC}{\mathcal{C}}
\newcommand{\rC}{\mathit{C}}
\newcommand{\badG}{\tilde{\cG}}
\newcommand\Verify{\mathsf{Verify}}
\newcommand{\negl}{\ensuremath{\mathsf{negl}}\xspace}
\newcommand{\poly}{\ensuremath{\mathrm{poly}}\xspace}
\newcommand{\source}{\operatorname{source}}
\newcommand{\target}{\operatorname{target}}
\definecolor{darkred}{rgb}{0.5, 0, 0}
\definecolor{darkgreen}{rgb}{0, 0.5, 0}
\definecolor{darkblue}{rgb}{0,0,0.5}
\definecolor{webred}{rgb}{0.5,0,0}
\definecolor{webblue}{rgb}{0,0,0.8}
\newcommand{\SPEC}{\textsc{spec}}
\newcommand{\IMPL}{\textsc{impl}}
\newtheorem*{theorem*}{Theorem}
\renewcommand\subsubsection{\@startsection{subsubsection}{3}{\z@}%
                       {-18\p@ \@plus -4\p@ \@minus -4\p@}%
                       {0.5em \@plus 0.22em \@minus 0.1em}%
                       {\normalfont\normalsize\bfseries\boldmath}}
\title{Correcting Subverted Random Oracles}
\author{
Alexander Russell\thanks{University of Connecticut, acr@cse.uconn.edu} 
\and Qiang Tang\thanks{The University of Sydney, qiang.tang@sydney.edu.au} 
\and Moti Yung\thanks{Google Inc and Columbia University, moti@cs.columbia.edu} 
\and Hong-Sheng Zhou\thanks{Virginia Commonwealth University, hszhou@vcu.edu}
\and Jiadong Zhu\thanks{University of Connecticut, jiadong.zhu@uconn.edu} \thanks{Corresponding Author}
}
\begin{document}
\maketitle

%\input{abstract}
%%%%%%%

\begin{abstract}
%Employing randomness is paramount in cryptography.
  The random oracle methodology has proven to be a powerful tool for
  designing and reasoning about cryptographic schemes.  In this paper,
  we focus on the basic problem of correcting faulty---or
  adversarially corrupted---random oracles, so that they can be
  confidently applied for such cryptographic purposes.
 
  We prove that a simple construction can transform a ``subverted''
  random oracle---which disagrees with the original one at a small
  fraction of inputs---into an object that is {\em indifferentiable}
  from a random function, even if the adversary is made aware of all
  randomness used in the transformation.
%Subverted oracle is, in fact,  an oracle implementation that the adversary programed and modified a 
%negligible fraction of entries of, so as to remain evasive (but at the same time to keep a small number of trigger or target points in the design). 
%Our subversion model and techniques are motivated by (but are somewhat different from) the traditional computer science methodologies of the ``program checking and self-correcting'' area, and the long tradition  of extraction of  randomness.
%
  Our results permit future designers of cryptographic primitives in
  typical %many of the
  kleptographic settings (i.e., those permitting adversaries that
  subvert or replace basic cryptographic algorithms) to use random
  oracles as a trusted black box.
%
%We note that in order to analyze our construction, we develop a general rejection re-sampling 
%Lemma which is a tool of possible independent interests.

\ignore{%%%%%%%%
Employing randomness is paramount in cryptography.
The random oracle methodology, in particular, has proven to be a powerful tool 
for efficiently designing and reasoning about cryptographic schemes. 
In this paper, we focus on the basic problem of correcting faulty---or
adversarially corrupted---random oracles, so that they can be
confidently applied for such cryptographic purposes.  

%We % for the first time, 
%adapt the classical theory of \emph{self-correcting programs} to the
%setting of \emph{distributions of functions}, and 
We prove that a simple construction can transform a ``subverted''
random oracle into a construction that is {\em indifferentiable} from a random function.  To
analyze the construction, we develop a general re-sampling lemma,
which controls the effect of adversarial ``resampling'' on a
probability distribution; this is a tool of possible independent
interest.
Our techniques permit future designers of cryptographic primitives in
typical %many of the
kleptographic settings (i.e., against subverting adversaries)
to use random oracles as a trusted black box.
%
%We then apply our techniques to construct the first
%subversion-resistant digital signature scheme that requires only
%offline testing.

\qnote{Crooked random oracle methodology.}
}%%%%%%%%%%%%
% We believe
% 
%
%Along the way, we develop a new technical lemma, rejection resampling lemma, which can be useful in other settings. 

\end{abstract}

%%% Local Variables:
%%% mode: latex
%%% TeX-master: "mainTrigger"
%%% End:
%\]
%\end{document}

%\newpage
%\setcounter{page}{1}
\pagenumbering{arabic}
\pagestyle{plain}

%!TEX root = mainTrigger.tex

\section{Introduction}

The random oracle methodology of Bellare and
Rogaway~\cite{CCS:BelRog93} has proven to be a powerful tool for
designing and reasoning about cryptographic schemes.  It consists of
the following two steps: (i) design a scheme $\Pi$ in which all
parties (including the adversary) have oracle access to a common truly
random function, and establish the security of $\Pi$ in this favorable
setting; (ii) instantiate the random oracle in $\Pi$ with a suitable
cryptographic hash function (such as SHA256) to obtain an instantiated
scheme $\Pi'$.  The random oracle heuristic states that if the
original scheme $\Pi$ is secure, then the instantiated scheme $\Pi'$
is also secure.
% (unless something is really wrong with SHA256, say).
While this heuristic can fail in various %implementation
settings~\cite{STOC:CanGolHal98} the basic framework remains a
fundamental design and analytic tool. In this work we focus
on the problem of correcting faulty---or adversarially
corrupted---random oracles so that they can be confidently applied for
such cryptographic purposes.\footnote{This work expands on the preliminary conference version of this article which appeared in CRYPTO'18. Several improvements are made to the conference version in this paper. First, this paper fills some gaps in the security proof of the conference paper. Second, this paper applies more powerful analysis tools and establishes significantly tighter results. Third, the security proof in the conference version indicated how the major technical hurdles can be overcome without developing the full details. This article fills those details by game transitions from the construction to the perfect random function.}

Specifically, given a function $\tilde{h}$ drawn from a distribution
which agrees in most places with a uniform function, we would like to
produce a corrected version which appears uniform to adversaries with
a polynomially bounded number of queries. This model is partially
motivated by
% (but are somewhat different from)
the traditional study of ``program checking and self-correcting''
\cite{Blum88,STOC:BluKan89,STOC:BluLubRub90}: the goal in this theory
is to transform a program that is faulty at a small fraction of inputs
(modeling an evasive adversary) to a program that is correct at all
points with overwhelming probability. Our setting intuitively adapts
this classical theory of self-correction to the study of ``self-correcting
a probability distribution.''  Notably, the functions to be corrected
are structureless, instead of heavily structured. Despite that, the
basic procedure for correction and portions of the technical
development are analogous.

One particular motivation for correcting random oracles in a
cryptographic context arises from recent work studying design
%against
and security in the subversion (i.e., \emph{kleptographic}) setting.
In this setting, the various components of a cryptographic scheme may
be subverted by an adversary, so long as the tampering cannot be
detected via blackbox testing. This is a challenging framework
%to
%develop secure primitives in
(as highlighted by Bellare, Paterson, and Rogaway
\cite{C:BelPatRog14}), because many basic cryptographic techniques are
not directly available: in particular, the random oracle paradigm is
directly undermined. In terms of the discussion above, the random
oracle---which is eventually to be replaced with a concrete
function---is subject to adversarial subversion which complicates even
the first step of the random oracle methodology above. Our goal is
to provide a generic approach that can rigorously ``protect'' the
usage of random oracles from subversion and, essentially, establish a
``crooked'' random oracle methodology.

\ignore{%%%%%%%
To be concrete, let us consider a simple real-world example: password
login in Linux/Unix system.  During system initialization, the root
user chooses a master password $\alpha$ and the system stores the
digest $\rho=h(\alpha)$, where $h$ is a given hash function normally
modeled as a random oracle.  During actual login, the operating system
receives input $x$ and accepts this password if $h(x) =\rho$. An
attractive feature of this practice is that it is still secure if
$\rho$ is accidentally leaked.
 %This has been widely deployed in practice.  
%and on input $x$ outputs $1$ iff  $H(x) = \rho$ is an obfuscation of $f_\alpha$ with high probability over $H$. 
In the presence of kleptographic attacks, however, the module that
implements the hash function $h$ may be strategically subverted,
yielding a new function $\tilde{h}$ which destroys the security of the
scheme above: for example, the adversary may choose a relatively short
random string $z$ and define $\tilde{h}(y) = h(y)$ \emph{unless $y$
  begins with $z$}, in which case $\tilde{h}(zx) = x$. Observe that
$h$ and $\tilde{h}$ are indistinguishable by black-box testing ({\em
  as they differ only at an exponentially small number of places}); on
the other hand, the adversary can login as the system administrator
using $\rho$ and its knowledge of the backdoor $z$ (without knowing
the actual password $\alpha$, presenting $z \rho$ instead). Thus, an
attacker who has the luxury of deploying a subverted hash function may
exploit this later with dramatic effect on the system's security.
}%%%%%%%%

\subsection{Our contributions}
We first describe two concrete scenarios where hash functions are
subverted in the kleptographic setting.  We then express the security
properties by adapting the successful framework of
\emph{indifferentiability}~\cite{TCC:MauRenHol04,C:CDMP05} to our
setting with adversarial subversion. This framework provides a
satisfactory guarantee of modularity---that is, that the resulting
object can be directly employed by other constructions demanding a
random oracle. We call this new notion ``crooked'' indifferentiability
to reflect the role of adversary in the modeling; see below. (A formal
definition appears in Section~\ref{sec:model}.)

We prove that a simple construction involving only  \emph{public}
randomness can boost a ``subverted'' random oracle into a construction
that is indifferentiable from a random function. (Section
\ref{sec:construction}, \ref{sec:analysis}).
%We remark that our technical development establishes a novel
%``rejection re-sampling'' lemma, controlling the distribution emerging
%from adversarial re-sampling of product distributions. This may be a
%technique of independent interest.
%
We expand on these contributions below.

\medskip
\noindent{\bf Consequences of kleptographic hash subversion.} We first
illustrate the security failures that can arise from use of hash
functions that are subverted at only a negligible fraction of inputs
with two concrete examples:

\medskip
\noindent
\emph{(1) Chain take-over attack on blockchain.} For simplicity,
consider a proof-of-work blockchain setting where miners compete to
find a solution $s$ to the ``puzzle''
$h(\textsf{pre}||\textsf{transactions}||s) \leq d$, where \textsf{pre}
denotes the hash of the previous block, \textsf{transactions} denotes
the set of valid transactions in the current block, and $d$ denotes
the difficulty parameter. Here $h$ is intended to be a strong hash
function. Note that %here %suppose
the mining machines use a program $\tilde{h}(\cdot)$ (or a dedicated
hardware module) which could be designed by a clever adversary.  Now
if $\tilde{h}$ has been subverted so that $\tilde{h}(*||z)=0$ for a
randomly chosen $z$---and $\tilde{h}(x)=h(x)$ in all other
cases---this will be difficult to detect by prior black-box testing;
%(\emph{as they differ only at on exponentially small
% set});
on the other hand, the adversary who created
$\tilde{h}$ has the luxury of solving the
proof of work without any effort for any challenge, and thus can
completely control the blockchain. (A fancier
subversion can tune the ``backdoor'' $z$ to other parts of the input
so that it cannot be reused by other parties; e.g.,
$\tilde{h}(w||z) = 0$ if $z=f(w)$ for a secret pseudorandom function
known to the adversary.)

\medskip
\noindent
\emph{(2) System sneak-in attack on password authentication.}
%Before presenting our main results, we take a look at the following simple motivation example. 
%The starting point for our work is the simple observation
%Lynn et al~\cite{EC:LynPraSah04} showed that a commonly used practice for hiding passwords can be viewed as a provably secure obfuscation of a ``point function'' under the random oracle model. 
%That is, consider the family of functions $\{ f_\alpha \}$ where $f_\alpha(x) = 1$ if $x = \alpha$, and  $f_\alpha(x) = 0$ otherwise. 
In Unix-style systems, during system initialization, the root user chooses a master password
$\alpha$ and the system stores the digest $\rho=h(\alpha)$, where $h$
is a given hash function normally modeled as a random oracle.
% If $h$ is a random oracle (with a large enough range), then the program which stores $\rho = H(\alpha)$,  where $\alpha$ is the password. 
During login, %(e.g, using command \$ -su),
the operating system receives input $x$ and accepts this password if
$h(x) =\rho$. An attractive feature of this practice is that it is still secure if $\rho$ is accidentally leaked.
 %This has been widely deployed in practice.  
%and on input $x$ outputs $1$ iff  $H(x) = \rho$ is an obfuscation of $f_\alpha$ with high probability over $H$. 
In the presence of kleptographic attacks, however, the module that
implements the hash function $h$ may be strategically subverted,
yielding a new function $\tilde{h}$ which destroys the security of the
scheme above: for example, the adversary may choose a relatively short
random string $z$ and define $\tilde{h}(y) = h(y)$ \emph{unless $y$
  begins with $z$}, in which case $\tilde{h}(zx) = x$. As above, $h$
and $\tilde{h}$ are indistinguishable by black-box testing; on the
other hand, the adversary can login as the system administrator using
$\rho$ and its knowledge of the backdoor $z$ (without knowing the
actual password $\alpha$), presenting $z \rho$ instead.
%Thus, again,

Notice that, in the last two examples, if the target functions are subverted at a non-negligible portion of inputs, the subversion will be detected by a polynomial time black-box test with non-negligible probability.

\medskip
\noindent\textbf{The model of ``crooked'' indifferentiability.}
The problem of cleaning defective randomness has a long history in
computer science. Our setting requires that the transformation must be
carried out by a local rule and involve an exponentially small amount
of public randomness (in the sense that we wish to clean a defective
random function $h: \{0,1\}^n \rightarrow \{0,1\}^n$ with only a polynomial
length random string). 
%\hnote{@Alex, we need to consider arbitrary input, instead of fixed-sized input, i.e.,  $h: \{0,1\}^* \rightarrow \{0,1\}^n$}
%Following the %classical self-correcting program model,
The basic  framework of correcting a subverted random oracle is the following: 

First, a function $h: \{0,1\}^n \rightarrow \{0,1\}^n$ is drawn
uniformly at random. 
Then, an adversary may \emph{subvert} the
function $h$, yielding a new function $\tilde{h}$. The subverted
function $\tilde{h}(x)$ is described by an adversarially-chosen
(polynomial-time) algorithm $\tilde{H}^h(x)$, with oracle access to $h$. We insist that
$\tilde{h}(x) \neq h(x)$ only at a negligible fraction of
inputs.\footnote{%
  %We remark that this condition of defective randomness
  %is adapted directly from the model of classical self-correcting
  %programs \cite{Blum88,STOC:BluKan89,STOC:BluLubRub90};
  %: transforming a program that is faulty at a small
  %fraction of inputs (modeling an evasive adversary) to a program that is correct at all points with
  %an overwhelming probability; 
  %also,
  We remark that tampering with even a negligible fraction of inputs
  can have devastating consequences in many settings of interest:
  e.g., the blockchain and password examples above. Additionally, the
  setting of negligible subversion is precisely the desired parameter
  range for existing models of kleptographic subversion and
  security. In these models, when an oracle is non-negligibly
  defective, this can be easily detected by a watchdog using a simple sampling and
  testing regimen, see e.g., \cite{AC:RTYZ16}.} Next, the function
$\tilde{h}$ is ``publicly corrected'' to a function $\tilde{h}_R$
(defined below) that involves some public randomness $R$ selected
after $\tilde{h}$ is supplied.\footnote{We remark that in many
  settings, e.g., the model of classical self-correcting programs, we
  are permitted to sample fresh and ``private'' randomness for each
  query; in our case, we may only use a single polynomial-length
  random string for all points. Once $R$ is generated, it is made
  public and fixed, which implicitly defines our corrected function
  $\tilde{h}_R(\cdot)$. This latter requirement is necessary in our
  setting as random oracles are typically used as a public object---in particular, our attacker must have
  full knowledge of $R$.}

\ignore{%%%%%%%
\begin{itemize}
\item A function $h: \{0,1\}^n \rightarrow \{0,1\}^n$ is drawn uniformly at random.
\item An adversary may \emph{subvert} the function $h$, yielding a new
  function $\tilde{h}$. The subverted function $\tilde{h}(x)$ is
  described by an adversarially-chosen algorithm $H^h(x)$, with oracle
  access to $h$, which determines the value of $\tilde{h}(x)$ after
  making a polynomial number of adaptive queries to $h$. We insist
  that $\tilde{h}(x) = h(x)$ with probability $1- \epsilon$ over
  random choice of $x$.
\item The function $\tilde{h}$ is ``corrected'' to a function
  $\tilde{h}_R$ by a procedure which may involve public randomness.
\end{itemize}
}%%%%%%%
We wish to show that the resulting function (construction) is ``as
good as'' a random oracle, in the sense of indifferentiability. The
successful framework of indifferentiability asserts that a
construction $\rC^H$ (having oracle access to an ideal primitive $H$)
is indifferentiable from another ideal primitive $\cF$ if there
exists a simulator $\cS$ so that $(\rC^H,H)$ and $(\cF,\cS)$ are
indistinguishable to any distinguisher $\cD$.

%<<<<<<< HEAD
To reflect our setting, an \emph{$H$-crooked-distinguisher ${\sD}$} is
introduced; the $H$-crooked-distinguisher ${\sD}$ first prepares the
subverted implementation $\tilde{H}$ (after querying $H$ first); then
a fixed amount of (public) randomness $R$ is drawn and published; the
construction $\rC$ uses only subverted implementation $\tilde{H}$ and
$R$. Now following the indifferentiability framework, we will ask for
a simulator $\cS$, such that $(\rC^{\tilde{H}^H}(\cdot,R),H)$ and
$(\cF,\cS^{\tilde{H}}(R))$ are indistinguishable to any
$H$-crooked-distinguisher $\sD$ (even one who knows $R$). A similar
security preserving theorem \cite{TCC:MauRenHol04,C:CDMP05} also holds
in our model. See Section \ref{sec:model} for details.
%=======
%To reflect our setting, an \emph{$H$-crooked-distinguisher} ${\sD}$ is introduced;   the $H$-crooked-distinguisher ${\sD}$ first prepares the subverted implementation $\tilde{H}$ (after querying $H$ first); then a fixed quantity  of (public) randomness $R$ is drawn and published; the construction $\rC$ uses only the subverted implementation $\tilde{H}$ and $R$. Now following the indifferentiability framework,  we will ask for a simulator  $\cS$
%such that $(\rC^{\tilde{H}^H}(\cdot,R),H)$ and $(\cF,\cS^{\tilde{H}}(R))$ are indistinguishable to any $H$-crooked-distinguisher $\sD$, even one with knowledge of $R$.
% A similar security preserving theorem \cite{TCC:MauRenHol04,C:CDMP05} also holds in our model.
%See Section \ref{sec:model} for details.
%>>>>>>> 2ab2d09935d0b59ad38c27fb7e2c0c6ab821bd0a
%that $\tilde{h}_R(x)$ is nearly uniform, even
%when conditioned on a family of adaptive queries to $h$. (Of course, to rule out the trivial cases, answers to those queries should not fully decide $\tilde{h}_R(x)$ yet).  

%%%%% R became r ????????? I changed-- Moti

\medskip
\noindent{\bf The construction.} 
The construction depends on a parameter
$\ell = \poly(n)$ and public randomness $R = (r_1, \ldots, r_\ell)$,
where each $r_i$ is an independent and uniform element of
$\{0,1\}^n$. For simplicity, the construction relies on a family of
independent random oracles $h_i(x)$, for
$i \in \{0, \ldots, \ell\}$. (Of course, these can all be extracted
from a single random oracle with slightly longer inputs by defining
$\tilde{h}_i(x) = \tilde{h}(i,x)$ and treating the output of $h_i(x)$
as $n$ bits long.) Then we define
\[
\tilde{h}_R(x) = \tilde{h}_0\left(\bigoplus_{i = 1}^{\ell} \tilde{h}_i(x \oplus r_i)\right) = \tilde{h}_0\bigg(\tilde{g}_R(x)\bigg)\,.
\]
Note that 
%this specification involves the underlying random oracle
%$h_i$ (for $i \in \{0, \ldots, \ell\}$), the random strings $r_i$, and
%the $\oplus$ function. 
the adversary is permitted to subvert the
function(s) $h_i$ by choosing an algorithm $H^{h_*}(x)$ so that
$\tilde{h}_i(x) = H^{h_*}(i,x)$. 
%Here, the parameter $z$ is an
%extra random string that the adversary may supply to $H$ (and will be
%known to adversary when it later queries the corrected function).  
Before diving into the  analysis, let us first quickly demonstrate how some simpler constructions fail. 
%We
%use the notation $h_*$ to denote the collection of functions
%$h_0, \ldots, h_\ell$.

\medskip
\noindent{\em Simpler constructions and their shortcomings.}
During the stage when the adversary manufactures the hash functions
$\tilde{h}_*=\{\tilde{h}_i\}_{i=0}^\ell$, the randomness
$R:=r_1,\ldots,r_\ell$ are not known to the adversary; they become
public in the second query phase. If the ``mixing'' operation is not
carefully designed, the adversary could choose inputs accordingly,
trying to ``peel off'' $R$. We discuss a few examples:
\begin{enumerate}
\item $\tilde{h}_R(x)$ is simply defined as
  $\tilde{h}_1(x\oplus r_1)$. A straightforward attack is as follows:
  the adversary can subvert $h_1$ in a way that $\tilde{h}_1(m)=0$ for
  a random input $m$; the adversary then queries $m\oplus r_1$ on
  $\tilde{h}_R(\cdot)$ and can trivially distinguish $\tilde{h}$ from a
  random function.
  %This is because the adversary can trivially find a trigger input.
%
\item $\tilde{h}_R(x)$ is defined as
  $\tilde{h}_1(x\oplus r_1) \oplus \tilde{h}_2 (x\oplus r_2)$. Now a
  slightly more complex attack can still succeed: the adversary
  subverts $h_1$ so that $\tilde{h}_1(x)=0$ if $x=m||*$, that is, when
  the first half of $x$ equals to a randomly selected string $m$ with
  length $n/2$; likewise, $h_2$ is subverted so that
  $\tilde{h}_2(x)=0$ if $x=*||m$, that is, the second half of $x$
  equals $m$. Then, the adversary queries $m_1||m_2$ on
  $\tilde{h}_R(\cdot)$, where $m_1=m\oplus r_{1,0}$, and
  $m_2=m\oplus r_{2,1}$, and $r_{1,0}$ is the first half of $r_1$, and
  $r_{2,1}$ is the second half of $r_2$.  Again, trivially, it can be
  distinguished from a random function.

  This attack can be generalized in a straightforward fashion to any
  $\ell\leq n/\lambda$: the input can be divided in into consecutive
  substrings each with length $\lambda$, and the ``trigger'' 
  substrings can be planted in each chunk.
%
%\item
\end{enumerate}

% % We consider now 
% To expand on the discussion above, we let
% $h: \{0,1\}^n \rightarrow \{0,1\}^n$ denote a random oracle, where
% each $h(x)$ is chosen independently and uniformly in $\{0,1\}^n$. We
% consider \emph{adversarial subversion} of the random oracle
% $h$. Specifically, the adversary may establish a polynomial-time
% algorithm $H$ which provides---given oracle access to $h$---an
% ``implementation of $h$'' which may not perfectly agree with $h$. The
% ``implementation'' may additionally depend on a random string $z$;
% thus we write $\tilde{h}(x) = H^h(x,z)$, and refer to $\tilde{h}$ as
% the ``subverted $h$.''

% Our goal is to show that at the cost of introducing a polynomial
% amount of \emph{public} randomness, it is possible to transform the
% subverted $\tilde{h}$ into an ``immunized'' function which---assuming
% that $\tilde{h} \neq h$ with only negligible probability---has
% properties that closely approximate that of a random oracle. Specifically, ``unqueried'' values of $\tilde{h}$ should be nearly uniform even conditioned on a polynomial number of adaptive queries to $\tilde{h}$, the random string $z$, and full knowledge of the original subverting algorithm $H$. 

%\smallskip 

%\smallskip 
%\noindent\textit{Some remarks on the proof.} 
%\paragraph{Overview of the proof.} 
\medskip
\noindent\textbf{Challenges in the analysis.} To analyze security in
the ``crooked'' indifferentiability framework, our simulator needs to
ensure consistency between two ways of generating output values: one
is directly from the construction $C^{\tilde{H}^h}(x,R)$; the other
calls for an ``explanation'' of $F$---a truly random function---via
reconstruction from related queries to $H$ (in a way consistent with
the subverted implementation $\tilde{H}$). To ensure a correct
simulation, the simulator must suitably answer related queries
(defining one value of $C^{\tilde{H}^h}(x,R)$). Essentially, the proof
relies on an unpredictability property of the internal function
$\tilde{g}_R(x)$ to guarantee the success of simulation. In
particular, for any input $x$ (if not yet ``fully decided'' by
previous queries), the output of $\tilde{g}_R(x)$ is unpredicatable to
the distinguisher even if she knows the public randomness $R$ (even
conditioned on adaptive queries generated by $\sD$).

Section~\ref{sec:analysis} develops the detailed security analysis.
The proof of correctness for this construction is complicated by the
fact that the ``defining'' algorithm $\tilde{H}$ is permitted to make
adaptive queries to $h$ during the definition of $\tilde{h}$; in
particular, this means that even when a particular ``constellation''
of points (the $h_i(x \oplus r_i)$) contains a point that is left
alone by $\tilde{H}$---which is to say that it agrees with
$h_i()$---there is no guarantee that $\bigoplus_i h_i(x \oplus r_i)$
is uniformly random. This suggests focusing the analysis on
demonstrating that the constellation associated with every
$x \in \{0,1\}^n$ will have at least one ``good'' component, which is
(i.) not queried by $\tilde{H}^h(\cdot)$ when evaluated on the other
terms, and (ii.)  answered honestly. Unfortunately, actually
identifying such a good point with certainty appears to require that
we examine \emph{all} of the points in the constellation for $x$, and
this interferes with the standard ``exposure martingale'' proof that
is so powerful in the random oracle setting (which capitalizes on the
fact that ``unexamined'' values of $h$ can be treated as independent
and uniform values). Thus, even such elementary aspects of the
internal function require some care and must be handled with
appropriate union bounds. This points to a second difficulty, which is
that structural properties must generally be established with
exponentially small (that is $2^{-cn}$ for $c > 1$) error probability,
to permit a union bound over all inputs, or must be shown to fail with
negligible probability conditioned on a typical transcript of the
distinguisher; thus one must either establish very small error or
carry significant conditioning through the proof. In general, we aim
for the first of these options in order to simplify the
presentation. For example, part of analysis is supported with a
Fourier-analytic argument that establishes the near uniformity of the
random variable $x_1 \oplus \cdots \oplus x_\ell$ where each
$x_i \in \mathbb{Z}_2^n$ is drawn from a sufficiently rich multiset. This
provides a simple criterion for the initial random function $h$ from
which follows a number of strong properties that can conveniently
avoid the bookkeeping associated with transcript conditioning.
%
%To sidestep this difficulty, we prove a
%``resampling'' lemma, which lets us examine all points in a particular
%constellation, identify one ``good'' one of interest, and then
%\emph{resample} this point so as to ``forget'' about all possible
%conditioning this value might have. The resampling lemma gives a
%precise bound on the effects of such conditioning.

\smallskip
\noindent{\bf Applications:}
Our correction function can be easily applied to save the faulty hash implementation in several important application scenarios.
\begin{itemize}
\item Immediate applications, as explained in the motivational examples, one may apply our technique directly in the following scenarios to defend against backdoors in hash implementation.  

(1) {\em Salvaging PoW blockchain against hash subversion}. For proof-of-work based blockchains, 
%all miners will check whether a puzzle solution $s$ is valid by $h(\textsf{pre}||\textsf{transactions}||s)\leq d$. 
%In case 
as discussed above, miners may % mostly
rely on a common library $\tilde{h}$ for the hash evaluation, perhaps
cleverly implemented by an adversary. Here $\tilde{h}$ is determined
before the chain has been deployed.  We can then prevent the adversary
from capitalizing on this subversion by applying our correction
function.
In particular, the public randomness $R$ can be embedded in the genesis
block; the function
$\tilde{h}_R(\cdot)$ is then used for mining (and verification) rather than
 $\tilde{h}$. % (still inheriting the optimization of the underlying $\tilde{h}$ module).
%If $\tilde{h}$ appears in one update of an existing cryptocurrency, i.e., $\tilde{h}$ appears later than the genesis block, then the consensus protocol should also be updated: to take a newly generated randomness $R$, say from NIST beacon, or, generated via the blockchain itself, and upgrade the hash library to $\tilde{h}_R(\cdot)$.  Here, we rely on the trust that $R$ would be uniformly generated, whether we can achieve similar correction result by reducing such a trust would be an interesting open problem.

\smallskip
(2) {\em Preventing system sneak-in.}
%To authenticate to the Linux
%operating system one provides a password $pwd$ and the system checks
%whether $h(pwd)\stackrel{?}{=}d$, where $d$ is the stored digest
%generated during setup. In case the module implementing the hash
%function $h$ is subverted as described, the backdoor holder can login
%as the user by exploiting the trigger input ($z$ in this example). This 
The system sneak-in can also be resolved
immediately by applying our correcting random oracle.
During system initialization (or even when the operating system is
released), the system administrator generates some randomness $R$ and
wraps the hash module $\tilde{h}$ (potentially subverted) to define
$\tilde{h}_R(\cdot)$. The password $\alpha$ then gives rise to the
digest $\rho=\tilde{h}_R(\alpha)$ together with the randomness
$R$. Upon receiving input $x$, the system first ``recovers''
$\tilde{h}_R(\cdot)$ based on the previously stored $R$, and then tests if
$\rho=\tilde{h}_R(x)$.  The access will be enabled if the test is
valid. As the corrected random oracle ensures the output to be uniform
for every input point,
%no information will be revealed about the
%original master password $\alpha$ and, indeed,
this remains secure in the face of subversion.\footnote{Typical authentication of this form also uses password ``salt,'' but this doesn't change the structure of the attack or the solution.}

\item Extended applications:

\smallskip
(3) {\em Application to cliptographically secure digital signatures.} In a follow-up work, \cite{PKC:CRTYZZ19}, we constructed the first digital signature scheme that is secure against kleptographic attacks with only an offline watchdog  via {\em corrected} full domain hash. There, a further interpretation of our result is used that the corrected random oracle is indifferentiable also to a {\em keyed} random oracle.
%\qnote{to be added}

\smallskip
(4) \noindent{\em Relation to random oracle against pre-processing attacks.} It is easy to see that our model is strictly stronger than the pre-processing model considered in \cite{EC:DodGuoKat17,EC:CDGS18}. There, the adversary made some  random oracle queries first and compress the transcripts as an auxiliary input. While in our model, besides this auxiliary input (which can be considered as part of the backdoor), the adversary is allowed to further subvert the implementation of the hash! It follows that our construction can also be directly used to defend the pre-processing attack.

\end{itemize}

\subsection{Related Work}

\noindent{\em Related work on indifferentiability.} The notion of
indifferentiability was proposed in the elegant work of Maurer et
al.~\cite{TCC:MauRenHol04}; this notably extends the classical concept
of indistinguishability to circumstances where one or more of the
relevant oracles are publicly available (such as a random oracle). It
was later adapted by Coron et al.~\cite{C:CDMP05}; several other
variants were proposed and studied in
\cite{TCC:DodPun06,EC:DodPun07}. A line of notable work applied the framework to to the 
% the equivalence of random oracle model and
ideal cipher problem: in particular the Feistel construction (with a
small constant number of rounds) is indifferentiable from an ideal
cipher, see \cite{JC:CHKPST16,EC:DacKatThi16,C:DaiSte16}. Our work
adapts the indifferentiability framework to the setting where the
construction uses only a subverted implementation
% , (we call it crooked indifferentiability)
and the construction aims to be indifferentiable from a clean random
oracle.

\smallskip
\noindent{\em Related work on self-correcting programs. }
The theory of program self-testing, and self-correcting, was pioneered
by the work of Blum et
al.~\cite{Blum88,STOC:BluKan89,STOC:BluLubRub90}.  This theory
addresses the basic problem of program correctness by verifying
relationships between the outputs of the program on related, but
randomly selected, inputs; additionally, the theory studies
transformations of faulty programs that are almost correct (faulty at
negligible fraction of inputs) into ones that are correct at every
point with an overwhelming probability. See Rubinfeld's
thesis~\cite{Rubinfeld:1991} for great reading.  Our results can be
seen as a distributional version of this theory: (i.) we ``correct''
independent distributions rather than structured functions, (ii) we
insist on using only an exponentially small amount of public
randomness.
%(ii) we only require the output distribution to be
%corrected (not necessarily the output value when evaluating at each
%input).  \ignore{%%%
%  Specifically, consider the situation where a program $F$ is supposed
%  to compute some function $f$ A checker for $f$ verifies whether the
%  program $F$ computes $f$ on a particular input $x$; a self-tester
%  for $f$ verifies whether the program $F$ is correct on most inputs;
%  and a self-corrector for $f$ uses a program $F$, which is correct on
%  most inputs, to compute $f$ correctly everywhere. All these tasks
%  are supposed to be achieved algorithmically by probabilistic
%  procedures, and the stated requirements should be obtained with high
%  probability. Checkers and self-testers/correctors can only access
%  the program as a black box, and should do something different and
%  simpler than to actually compute the function $f$.}%%%

 \smallskip
\noindent{\em Related work on random oracles.}
The random oracle methodology/heuristic was first explicitly introduced by Bellare and Rogaway~\cite{CCS:BelRog93}; this methodology can significantly simplify both cryptographic constructions and proofs, even though  there exist schemes which are secure using random oracles, but cannot be instantiated in the standard model~\cite{STOC:CanGolHal98}.
% demonstrated that the random oracle methodology is not sound in the sense that 
Soon after, many separations have been shown for new classes of cryptographic tasks~\cite{C:Nielsen02a,C:Pass03,FOCS:GolKal03,TCC:CanGolHal04,EC:BelBolPal04,C:DodOliPie05,C:LeuNgu09,EC:KilPie09,TCC:BonSahWat11,SCN:GKMZ16}. 
In addition, efforts have  been made to identify instantiable assumptions/models in which we may analyze interesting cryptographic tasks~\cite{C:Canetti97,STOC:CanMicRei98,ICALP:CanDak08,AC:BCFW09,C:BolFis05,AC:BolFis06,C:KilONeSmi10,C:BelHoaKee13}.
Also, we note that research efforts have also been made to investigate
{\em weakened} idealized
models~\cite{PKC:NumIssTan08,PKC:KMTX10,SAC:Liskov06,RSA:KatLucThi15}. Finally,
there are several recent approaches that study random oracles in the
auxiliary input model (or with
pre-processing)~\cite{EC:DodGuoKat17,EC:CDGS18}. Our model is strictly
stronger than the pre-processing model: besides pre-processing
queries, for example, the adversary may embed some preprocessed
information into the subverted implementation; furthermore, our
subverted implementation can further misbehave in ways that cannot be
captured by any single-shot polynomial-query adversary because the
subversion at each point is determined by a local adaptive computation.
%,
%which makes the situation much more complicated.
%
%Our results strengthen the random oracle methodology in the sense that using our construction, we can even tolerate a faulty hash implementation.

\smallskip
\noindent{\em Related work on kleptographic security.}  
Kleptographic attacks were originally introduced by Young and
Yung~\cite{C:YouYun96,EC:YouYun97}: In such attacks, the adversary
provides subverted implementations of the cryptographic primitive,
trying to learn secrets without being detected.
%
%As mentioned previously, research interest in the study of kleptographic implementations began in the 1990s and early 2000s~\cite{C:YouYun96, EC:YouYun97, PKC:JueGua02, ISC:GBPG03}. 
In recent years, several remarkable allegations of cryptographic tampering~\cite{PLS13,ReutersRSA2013}, including detailed investigations~\cite{Checkoway14,juniper_paper}, have produced a renewed interest in both kleptographic attacks and in techniques for preventing them~\cite{IEEESP:BBCL13,C:BelPatRog14,EC:DGGJR15,EC:MirSte15,C:DodMirSte16,FSE:DegFarPoe15,CCS:AteMagVen15,CCS:BelJaeKan15,EPRINT:SFKR15,EC:BelHoa15,CACM:AABB+15,C:DPSW16,AC:RTYZ16,CCS:RTYZ17,C:CamDriLeh17}. None of those work considered how to actually correct a subverted random oracle.

\smallskip
\noindent{\em Similar constructions in other context}. Our
construction follows a simple design approach, applying the hash
function to the XOR of multiple hash values. The construction calls to
mind many classical constructions for other purposes, e.g., hardness
amplification such as the Yao XOR lemma, weak PRF \cite{EC:Myers01},
and randomizers in the bounded storage model \cite{JC:DziMau04}. Our
construction has to have an ``extra'' layer of hash application ($h_0$
in our parlance) to wrap the XOR of terms, and our analysis is of
course very different from these classical results due to our 
starting point of a subverted implementation. %\cite{FOCS:Yao82},

\medskip
\noindent{\bf Relationship to the preliminary version.}
This paper expands on the preliminary conference version of this
article which appeared in CRYPTO '18~\cite{C:RTYZ18}. The conference
version provided a proof sketch which indicated how the major
technical hurdles can be overcome without developing the full
details. This article, while filling in those details, additionally
establishes somewhat tighter and simpler results. Specifically, in
many cases, we are able to improve the analysis to show that certain
events of interest fail with exponential, rather than negligible,
probability; this achieves tighter bounds and also simplifies the
presentation.

\medskip
\noindent{\bf Other follow-up work.} In~\cite{BNR20} Bhattacharyya et
al.\ pointed out a gap in the security sketch in our preliminary
version \cite{C:RTYZ18} and provide an alternate (and independent)
approach to prove that the construction of~\cite{C:RTYZ18} is
secure. They also explore an alternate, sponge-based construction that
reduces the number of external random bits to linear.

\ignore{%%%%%%
Regarding digital signatures, Ateniese et al.~\cite{CCS:AteMagVen15} first extended this line of study to digital signature scheme.  %Their defending mechanisms cannot, however, achieve the IND-CPA security for public key encryption---the gold standard security notion---or address the (necessarily randomized) process of key generation as they implicitly assume honest key generation.  
Then, Russell et al.~\cite{AC:RTYZ16}  proposed defending mechanisms for digital signatures even against {\em subverted randomness generation}.
 However, these previous results \cite{AC:RTYZ16,CCS:AteMagVen15} essentially require an online watchdog which has to take all the transcript that communicated between the adversary and the challenger.  We remove this requirement and construct the first subversion resistant digital signature with only an offline watchdog. %An online watchdog can ensure consistency of implementation to the specification, e.g., the $\Verify$ algorithm, however, it is much more costly to instantiate such a watchdog in practice. 
 }%%%%
 
%These % state of affairs is the direct
%motivate our pursue for general technique of destroying steganographic channels.
%for our focus on the \emph{complete subversion model}
%discussed above, where \emph{all} cryptographic algorithms---including
%key generation---are subject to kleptographic (i.e., substitution)
%attacks.
% This forces us to manage
%certain randomized algorithms (such as key generation) in a
%kleptographic setting.
% Our solution involves new
%techniques (aside from reliance on deterministic algorithms) for
%destroying subliminal channels.
%The details of the model, with
%associated commentary about its relevance to practice, appear below. 

%  generally
% accepted setting that a proper defense against kleptographic attacks
% should allow %, in fact,
% {\em all} the underlying algorithms to be subverted, and we give an
% affirmative answer under this \emph{complete subversion model} (i.e.,
% allow also the key generation algorithm to be backdoored).

\ignore{%%%%

Dodis et al.~\cite{EC:DGGJR15} (and followup work \cite{C:DPSW16}) studied %the rigorous study of
pseudorandom generators in the kleptographic setting, %in such settings %developing an alternative
%family of kleptographic attacks on %a fundamental primitive,
%pseudorandom generators 
 formalizing the \textsf{Dual\_EC}
subversion. % Subsequent work considers stronger attacks \cite{C:DPSW16}. 
%In the latter model, the adversary
%subverts the security of the PRG by opportunistically setting the
%public parameters of the generator while privately keeping some backdoor information
%(instead of providing an implementation; 
%cf.~the augmented system parameter model above).
%and thus she is able to break the PRG security in several ways. 
%They prove the equivalence of such a ``backdoored PRG'' and public key
%encryption with pseudorandom ciphertexts. Then they proposed and
%They analyzed immunizing strategies obtained by applying a keyed hash
%function to the output (of the PRG). Note that the (hash) key plays a
%special role in their model: 
The immunizing strategy used by these works applies a keyed hash to the PRG output, assuming that the key
is selected uniformly and is {\em unknown}
to the adversary during the public parameter generation phase. 
%These
%results likewise inspire our adoption of the amplified \emph{complete
  %subversion model}, which permits significantly expanded subversion
%and excludes such reliance on public randomness beyond the reach of
%the adversary.
% We remark that our results on OWFs can be applied to construct a specific
%``backdoor-free'' PRG following the classic Blum-Micali
%framework. Moreover, our general immunizing
%strategy, % can be applied here,
%randomizing the public parameter of a backdoored PRG instead of
%randomizing the PRG
%output, %is indeed effective. This indeed allows us to
%permits us to bypass an impossibility result established by Dodis et
%al.\ for general public immunization based on the PRG output.
%
%Other works assume the user to have clean randomness to %different angles of %considerations about
%defend against mass surveillance. For example,
In~\cite{EC:MirSte15,C:DodMirSte16}, the authors proposed a
general framework of safeguarding protocols by randomizing the
incoming/outgoing messages via a trusted (reverse) firewall. Their
results demonstrate that with a trusted random source, many tasks
become achievable. 
%As they rely on a ``subversion-free'' firewall,
%these results require a more generous setting than provided by our
%\emph{complete subversion model}. 
%
Other work suggests the use of various  {\em trusted} random sources as a defending strategy, including \cite{EC:BelTac16,C:BelKanRog16}.

%and propose two defending mechanisms, one utilizes a unique signature
%scheme assuming the key generation and verify algorithms to be honest;
%the other adopts the reverse firewall model that assumes trusted
%randomness.

 Existing work has thus identified striking, general attacks on
 randomized algorithms. %and highlighted the importance of deterministic
 %algorithms in this setting. 
 %Additionally, it has proposed various
 %techniques to mitigate kleptographic attacks in settings tuned to
 %particular primitives. 
 However, the corresponding defenses have relied on
 security models that either permit the user access to
 clean randomness beyond the reach of the adversary, or entirely protect certain algorithms
from subversion.

Russell et al.~\cite{AC:RTYZ16,EPRINT:RTYZ16} recently formalized two aspects of conventional wisdom in practice and proposed defending mechanisms against {\em subverted randomness generation}. In \cite{AC:RTYZ16}, the authors introduced the general models  which permit the adversary to subvert all the relevant cryptographic algorithms, and constructed (trapdoor) one-way permutations, pseudorandom generators  and signature schemes in this ``complete subversion'' model.
%describing a general, rigorous immunization strategy. % to clip the power of kleptographic subversions.
%We augment this strategy with a ``split program'' model that can directly inform  practical deployment.
%We then apply these tools to construct the first 
%examined two standard applications of (trapdoor) one-way
%permutations in the kleptographic setting: % in the
%model. %this complete subversion model. First, %we consider construction of ``higher level'' primitives via black-box reductions.
% digital signature scheme that preserves
%existential unforgeability when {\em all} algorithms (including key
%generation) %, which was not considered to be under attack before) 
%are
%subject to kleptographic attacks. 
%We additionally gave two different constructions for backdoor-free PRG.
%We additionally demonstrate that the
%classic Blum--Micali pseudorandom generator (PRG), using an
%``immunized'' one-way permutation, yields a backdoor-free PRG. Second,
%we apply our general immunization strategy to directly yield a
%backdoor-free PRG, which %This
%notably amplifies previous results of Dodis et
%al.~\cite{EC:DGGJR15}, %Ganesh,Golovnev, Juels, and Ristenpart~[Eurocrypt '15],
%which require an honestly generated random key. 
%Our methods sidestep an interesting impossibility result (also due to~\cite{EC:DGGJR15}) concerning ``immunization'' based on hashing the output of PRGs. 
In \cite{EPRINT:RTYZ16}, the authors proposed a general split program framework for destroying the subliminal channels in subverted randomized algorithms and we construct the first IND-CPA secure PKE in the kleptographic setting. %See \S\ref{sec:primitives} for more details.
% \qnote{they show impossibility without the key if apply the hash to the output of PRG, would it be confusing if we just say amplify their result?}

%We believe that these techniques are essential, particularly in settings where cryptography is implemented in hardware or proprietary software. This is because a number of works have illustrated the ease of subverting such implementations. For example, Becker et al.~\cite{Becker2013} considered a technique for embedding malicious implementations in microprocessors by modifying the dopant polarity of several transistors. King et al.~\cite{King2008} developed a hardware trojan embedded in a CPU that could (non-cryptographically) grant an attacker access to a full system. Lin et al.~\cite{CHES:LKGPB09} examined the problem of developing hardware trojans designed to leak secret keys via a side-channel. While we do not address mitigation of side channels explicitly in this work, we note it as an interesting future direction. Finally, Schneier et al.~\cite{EPRINT:SFKR15} offer a compendium of historical techniques to surreptitiously weaken encryption.

 In this section, we consider one main application of our previous results: how to design digital signatures against the kleptographic attacks with only an offline watchdog. As defined in \cite{AC:RTYZ16}, an offline watchdog only does a one time testing, and then become offline. While all previous results \cite{AC:RTYZ16,CCS:AteMagVen15} essentially require an online watchdog which has to take all the transcript that communicated between the adversary and the challenger.  An online watchdog can ensure consistency of implementation to the specification, e.g., the $\Verify$ algorithm, however, it is much more costly to instantiate such a watchdog in practice. 
%to ensure the consistency of implementation and specification.
%,  an {\em offline} watchdog only does one-time testing in the cliptographic game.

}%%%%%

\ignore{%%%
Cliptography studies how to preserve security of cryptographic tools without relying on the trust of implementations \cite{AC:RTYZ16}. In particular, cliptography aims to design a specification $\Pi_\SPEC$ of a cryptographic primitive $\Pi$, such that, for any a potentially subverted implementation $\Pi_\IMPL$, the following holds: either the security of $\Pi$ can be preserved even using $\Pi_\IMPL$; or the subversion of $\Pi_\IMPL$ can be detected by an efficient algorithm $\cW$ (called watchdog who also has $\Pi_\SPEC$) via only black-box access to $\Pi_\IMPL$. The main difficulty of achieving meaningful positive result in this new setting is that the adversarial implementation $\Pi_\IMPL$ can be used to leak secrets, and at the same time, the output distribution of $\Pi_\IMPL$ could be statistically indistinguishable from that of $\Pi_\SPEC$, (demonstrated in \cite{C:YouYun96, EC:YouYun97} twenty years ago).

In light of the startling consequence of the Snowden revelation, renewed attention was drawn by the crypto community, ~\cite{C:BelPatRog14,EC:DGGJR15,EC:BelHoa15,EC:MirSte15,FSE:DegFarPoe15,CCS:BelJaeKan15,CCS:AteMagVen15,AC:RTYZ16,C:DodMirSte16}. Despite those recent efforts, many  problems remain open. Notably, there are two types of devastating attacks: the first is subverted randomized algorithm can be used to build steganographic channel to leak secret exclusively to the backdoor holder by using biased randomness; the second is that  a ``hidden trigger" $x$ might be  embedded in the subverted implementation,  when inputting $x$, the  subverted algorithm may output arbitrary value including the secret directly. The subverted implementations in both cases cannot be detected by any efficient watchdog.

Very recently, Russell, Tang, Yung, Zhou showed that via a simple decomposition and amalgamation, one can completely destroy  the subliminal channels in subverted randomized algorithms,  \cite{EPRINT:RTYZ16}. This give us confidence that one may still hope to build a parallel theory of cryptography in this new setting with those crippling subversion attacks. 
%However, how to deal with the hidden-trigger attack is still out of reach. 
In this paper, we will design a new general tool for dealing with the second challenge of defending against hidden-trigger attacks.
}%%%%%

%%% Local Variables:
%%% mode: latex
%%% TeX-master: "mainTrigger"
%%% End:

%%%%%%%%%%%%%%%%%%%%%
%Modeling 
%\newpage

\section{The Model: Crooked Indifferentiability}
\label{sec:model}

%\bigtodo{
%many papers should be cited in this modeling section; e.g., RSS11, DGMT13, Mitt14
%add the discussions about multi-stage Indiff; how to view our crooked-Indiff
%}

\subsection{Preliminary: Indifferentiability}
\label{sec:indifferentiability}

The notion of indifferentiability proposed in the elegant work of
Maurer et al.~\cite{TCC:MauRenHol04} has proven to be a powerful tool
for studying the security of hash function and many other primitives.
The notion extends the classical concept of indistinguishability to
the setting where one or more oracles involved in the construction are
publicly available. The indifferentiability framework
of~\cite{TCC:MauRenHol04} is built around random systems providing
interfaces to other systems. 
%\greynote{Coron et al.~\cite{C:CDMP05} demonstrate an equivalent indifferentiability framework built around Interactive Turing Machines (as in~\cite{FOCS:Canetti01}).} 
Coron et al.~\cite{C:CDMP05} demonstrate a strengthened\footnote{Technically, the quantifiers in the security definitions in the original~\cite{TCC:MauRenHol04} and in the followup ~\cite{C:CDMP05} are different; in the former, a simulator needs to be constructed for each adversary, while in the latter a simulator needs to be constructed for {\em all} adversaries.}
 indifferentiability framework built around Interactive
Turing Machines (as in~\cite{FOCS:Canetti01}). Our presentation
borrows heavily from~\cite{C:CDMP05}.  In the next subsection, we will
introduce our new notion, \emph{crooked indifferentiability}.

\paragraph{Defining indifferentiability.}
An \emph{ideal primitive} is an algorithmic entity which receives
inputs from one of the parties and returns its output immediately to
the querying party.  We now proceed to the definition of
indifferentiability~\cite{TCC:MauRenHol04,C:CDMP05}:

%First, let us recall the indifferentiability notion introduced by Maurer et al \cite{TCC:MauRenHol04}. It is an extension on the classical indistinguishability notion, and both private and public interfaces are available. A construction $\rC$ having an ideal object $\cG$ (e.g., a random oracle $H:\bool^\lambda\rightarrow\bool^\lambda$) as a component, can replace another ideal object $\cF$, if $\rC$  is indifferentiable with $\cF$.  More specifically, 

\begin{definition}[Indifferentiability~\cite{TCC:MauRenHol04,C:CDMP05}]
A Turing machine $\rC$ with oracle access to an ideal primitive $\cG$ is said to be   $(t_\cD,t_\cS,q,\epsilon)$-indifferentiable from an ideal primitive $\cF$, if there is a simulator $\cS$, such that for any distinguisher $\cD$, it holds that :
$$\left| \Pr[\cD^{\rC,\cG}(1^\secp) = 1] - \Pr[\cD^{\cF,\cS}(1^\secp) = 1] \right| \leq \epsilon\,.$$
  The simulator $\cS$ has oracle access to $\cF$ and runs in time at most $t_\cS$. The distinguisher $\cD$ runs in time at most $t_\cD$ and makes at most $q$ queries. Similarly, $\rC^\cG$ is said to be (computationally) indifferentiable from $\cF$ if $\epsilon$ is a negligible function of the security parameter $\secp$ (for polynomially bounded $t_\cD$ and $t_\cS$).
  %where $\cD$ makes at most $q$ queries, and runs in time $t_\cD$, and the simulator runs in time $t_\cS$. 
  See Figure \ref{fig:indiff}.
  
\end{definition}

\begin{figure}[htbp!]
  \begin{center}
    \begin{tikzpicture}
     % \draw[ultra thick, rounded corners=2mm] (0,.25) rectangle (4.2,2.75);
    %  \node at (-2,1.5) {$(pk,s)$};
    %  \node at (3.2,1.5) {$\boxed{\PRG_{\SPEC}^\cF}$};
    %  \draw[thin, rounded corners=2mm] (2.55,1) rectangle +(1.2,1) node[pos=.5] {$\PRG_\SPEC^\cF$};
     % \node at (5.2,1.5) {$(s',r)$};
            \draw[thin, rounded corners=2mm] (-3.5,1.5) rectangle +(.9,.9) node[pos=.5] {$\rC$};
      \draw[->,thin] (-2.5,2) -- (-1.6,2);

      \draw[thin, rounded corners=2mm] (-1.5,1.5) rectangle +(.9,.9) node[pos=.5] {$\cG$};
      \draw[thin] (0,2.5) -- (0,1);
   %   \node at (1,2) {$\boxed{h_{\SPEC}}$};
     % \draw[thick, rounded corners=2mm,->] (-1.5,1.6) -- (-1,1.6) -- (-1,2) -- node[above] {$pk$}   (0,2) -- (.5,2);
    %  \draw[thick, rounded corners=2mm] (-1.5,1.4) -- (-1,1.4) -- (-1,1) -- node[above] {$s$}   (0,1) -- (.5,1);
      %\draw[thick, rounded corners=2mm,->] (.5,1) -- (2,1) -- (2,1.4) -- (2.5,1.4);
    %  \draw[thick, rounded corners=2mm,->] (1.5,2) -- node[above] {$\widetilde{pk}$} (2,2) --  (2,1.6) -- (2.5,1.6);
      %\draw[->,thick] (3.85,1.5) -- (4.7,1.5);
      %\node at (3.5,.6) {$\PRG_{\SPEC}^\cG$};

                \draw[thin, rounded corners=2mm] (.5,1.5) rectangle +(.9,.9) node[pos=.5] {$\cF$};
      \draw[->,thin] (2.4,2) -- (1.6,2);

      \draw[thin, rounded corners=2mm] (2.5,1.5) rectangle +(.9,.9) node[pos=.5] {$\cS$};
           
       %     \draw[thin, dashed, rounded corners=2mm,->] (.5,1) -- (2,1) -- (2,1.4) -- (2.5,1.4);
\draw[thin,dashed,->] (0.2,0.45) .. controls (1,1) and (2.2,.4) .. (3,1.4);
      \draw[thin,dashed,->] (-0.2,0.45) .. controls (.3,1) and (.9,.9) .. (1,1.4);
\draw[thin,dashed,->] (0.2,0.45) .. controls (-.3,1) and (-.9,.9) .. (-1,1.4);
      \draw[thin,dashed,->] (-0.2,0.45) .. controls (-1,1) and (-2.2,.4) .. (-3,1.4);

            \draw[thin, rounded corners=2mm] (-.6,-.5) rectangle +(1.2,.9) node[pos=.5] {$\cD$};

    \end{tikzpicture}
    %\vspace{-4mm}
  \end{center}
\caption{
\label{fig:indiff} 
The indifferentiability notion: the distinguisher $\cD$ either interacts with algorithm $\rC$ and ideal primitive $\cG$, or with ideal primitive $\cF$ and simulator $\cS$. Algorithm $\rC$ has oracle access to $\cG$, while simulator $\cS$ has oracle access to $\cF$.}
%\vspace{-4mm}
\end{figure}

As illustrated in Figure \ref{fig:indiff}, the role of the simulator is to simulate the ideal primitive
$\cG$ so that no distinguisher can tell whether it is interacting with $\rC$ and $\cG$, or with $\cF$ and
$\cS$; in other words, the output of $\cS$ should look ``consistent'' with what the distinguisher
can obtain from $\cF$. Note that the simulator does not observe the distinguisher's queries to
$\cF$; however, it can call $\cF$ directly when needed for the simulation. Note that, in some sense, the simulator must ``reverse engineer'' the construction $\rC$ so that the simulated oracle appropriately induces $\cF$ and, of course, possesses the correct marginal distribution. 

\paragraph{Replacement.}
 It is shown in \cite{TCC:MauRenHol04} that if $\rC^\cG$ is indifferentiable from $\cF$, then $\rC^\cG$ can replace $\cF$ in any cryptosystem, and the resulting cryptosystem is at least as secure in the  $\cG$  model as in the $\cF$ model. 

We use the definition of \cite{TCC:MauRenHol04} to specify what it means for a cryptosystem to be at least as secure in the  $\cG$  model as in the $\cF$ model. A cryptosystem is modeled as an Interactive Turing Machine with an interface to an adversary $\cA$ and to a public oracle. 
The cryptosystem is run by an environment $\cE$ which provides a binary output and also runs the adversary. In the  $\cG$  model, cryptosystem $\cP$ has oracle access to $\rC$ whereas attacker $\cA$ has oracle access to $\cG$. In the $\cF$ model, both $\cP$ and $\cA$ have oracle access to $\cF$. The definition is illustrated in Figure~\ref{fig:composition}.

\begin{figure}[htbp!]
  \begin{center}
    \begin{tikzpicture}
     
            \draw[thin, rounded corners=2mm] (-3.5,1.5) rectangle +(.9,.9) node[pos=.5] {$\rC$};
                       % \draw[thin, rounded corners=2mm] (-3.,1.75) rectangle +(.5,.5) node[pos=.5] {$\badG$};

      \draw[->,thin] (-2.5,2) -- (-1.6,2);

      \draw[thin, rounded corners=2mm] (-1.5,1.5) rectangle +(.9,.9) node[pos=.5] {$\cG$};
      \draw[thin] (0,2.5) -- (0,-2.5);

                \draw[thin, rounded corners=2mm] (-3.5,0) rectangle +(.9,.9) node[pos=.5] {$\cP$};
                      %  \draw[thin, rounded corners=2mm] (-3.,1.75) rectangle +(.5,.5) node[pos=.5] {$\badG$};

      \draw[->,thin] (-2.5,0.5) -- (-1.6,0.5);
      
            \draw[->,thin] (-3,1) -- (-3,1.4);
            \draw[->,thin] (-1,1) -- (-1,1.4);

      \draw[thin, rounded corners=2mm] (-1.5,0) rectangle +(.9,.9) node[pos=.5] {$\cAtwo$};
             \draw[thin, rounded corners=2mm] (-3.5,-1.5) rectangle +(3,.9) node[pos=.5] {${\cE}$};
                                                                              \draw[->,thin] (-2,-1.5) -- (-2,-2.4);

                      %  \draw[thin, rounded corners=2mm] (-3.,1.75) rectangle +(.5,.5) node[pos=.5] {$\badG$};

      \draw[->,thin] (-2.5,0.5) -- (-1.6,0.5);
      
            \draw[->,thin] (-3,-.5) -- (-3,-0.1);
            \draw[->,thin] (-1,-.5) -- (-1,-.1);

                \draw[thin, rounded corners=2mm] (.5,1.5) rectangle +(.9,.9) node[pos=.5] {$\cF$};
  %    \draw[->,thin] (2.4,2) -- (1.6,2);

    %  \draw[thin, rounded corners=2mm] (2.5,1.5) rectangle +(1.1,1.1) node[pos=.75] {$\cS_z$};
                            %      \draw[thin, rounded corners=2mm] (2.6,1.75) rectangle +(.5,.5) node[pos=.5] {$\badG$};

                                  %%%%right hand side second part
                                  
                                     \draw[thin, rounded corners=2mm] (.5,0) rectangle +(.9,.9) node[pos=.5] {$\cP$};
                      %  \draw[thin, rounded corners=2mm] (-3.,1.75) rectangle +(.5,.5) node[pos=.5] {$\badG$};

      \draw[<->,thin] (-2.5,0.5) -- (-1.6,0.5);
      
            \draw[->,thin] (1,1) -- (1,1.4);
            \draw[->,thin] (3,1) |- (1.6,2);

      \draw[thin, rounded corners=2mm] (2.5,0) rectangle +(.9,.9) node[pos=.5] {$\cStwo$};

                      \draw[thin, rounded corners=2mm] (0.5,-1.5) rectangle +(3,.9) node[pos=.5] {${\cE}$};
                                                        \draw[->,thin] (2,-1.5) -- (2,-2.4);

                      %  \draw[thin, rounded corners=2mm] (-3.,1.75) rectangle +(.5,.5) node[pos=.5] {$\badG$};
% line for $\cE$ to P,A
      \draw[<->,thin] (1.5,0.5) -- (2.4,0.5);
      
           \draw[->,thin] (1,-.5) -- (1,-0.1);
            \draw[->,thin] (3,-.5) -- (3,-.1);

    \end{tikzpicture}
    %\vspace{-4mm}
  \end{center}
\caption{
\label{fig:composition} 
The environment $\cE$ interacts with cryptosystem $\cP$ and attacker $\cAtwo$. In the $\cG$ model (left), $\cP$ has oracle access to $\rC$ whereas $\cAtwo$ has oracle access to $\cG$. In the $\cF$ model, both $\cP$ and $\cStwo$ have oracle access to $\cF$.}
%\vspace{-4mm}
\end{figure}

\begin{definition}
A cryptosystem is said to be at least as secure in the $\cG$ model with algorithm $\rC$ as in the $\cF$ model, 
if for any environment $\cE$ and any attacker $\cAtwo$ in the $\cG$ model, there exists an attacker $\cStwo$ in the $\cF$ model, such that: 
\[
\Pr[\cE(\cP^{\rC^{}},\cAtwo^{\cG})=1]-\Pr[\cE(\cP^\cF,\cStwo^\cF)=1]\leq\epsilon.
\]
where $\epsilon$ is a negligible function of the security parameter $\secp$. 
Similarly, a cryptosystem is said to be computationally at least as secure, etc., if $\cE$, $\cAtwo$ and $\cStwo$ are polynomial-time in $\secp$.
\end{definition}

% Next, we will prove the security preserving lemma for an indifferentiable (with an enhanced simulator) construction, even with only a subverted implementation.
 
 %The following theorem from \cite{TCC:MauRenHol04} shows that security is preserved when replacing an ideal primitive by an indifferentiable one :

We have the following security preserving (replacement) theorem, which says that when an ideal primitive is replaced by an indifferentiable one, the security of the ``big'' cryptosystem remains:
 \begin{theorem}[\cite{TCC:MauRenHol04,C:CDMP05}]
 \label{theorem:composition}
Let $\cP$ be a cryptosystem with oracle access to an ideal primitive $\cF$. Let $\rC$ be an algorithm such that $\rC^{\cG}$ is indifferentiable from $\cF$. Then cryptosystem $\cP$ is at least as secure in the $\cG$ model with algorithm $\rC$ as in the $\cF$ model. 
\end{theorem}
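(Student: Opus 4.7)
The plan is to prove the composition (replacement) theorem by a direct reduction. Given any environment $\cE$ and any attacker $\cAtwo$ in the $\cG$-model, I would construct an attacker $\cStwo$ in the $\cF$-model as follows: $\cStwo$, equipped with oracle access to $\cF$, internally runs $\cAtwo$ and answers each of $\cAtwo$'s public-oracle queries by invoking the indifferentiability simulator $\cS$, which itself has oracle access to $\cF$ (inherited from $\cStwo$). Symbolically, $\cStwo^\cF \equiv \cAtwo^{\cS^\cF}$, where $\cS$ is the simulator guaranteed by the hypothesis that $\rC^\cG$ is indifferentiable from $\cF$.

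Next I would argue indistinguishability by contraposition. Suppose, toward contradiction, that some environment $\cE$ distinguishes $\cE(\cP^{\rC^\cG}, \cAtwo^\cG)$ from $\cE(\cP^\cF, \cStwo^\cF)$ with advantage exceeding $\epsilon$. I would then build an indifferentiability distinguisher $\cD$ that, given access to a pair of oracles $(O_1, O_2)$, internally simulates $\cE$, $\cP$, and $\cAtwo$, routing every oracle call issued by $\cP$ to $O_1$ and every public-oracle call issued by $\cAtwo$ to $O_2$, and finally outputting the bit produced by $\cE$. When $(O_1, O_2) = (\rC^\cG, \cG)$, this faithfully reproduces $\cE(\cP^{\rC^\cG}, \cAtwo^\cG)$; when $(O_1, O_2) = (\cF, \cS^\cF)$, the composition of $\cAtwo$ with $\cS^\cF$ is precisely $\cStwo^\cF$ by construction, and $\cP$ connected to $\cF$ is simply $\cP^\cF$, so the simulation faithfully reproduces $\cE(\cP^\cF, \cStwo^\cF)$. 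Hence $\cD$'s indifferentiability advantage matches $\cE$'s distinguishing advantage, contradicting the assumed bound on indifferentiability.

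The only technical care required is complexity and query bookkeeping: I would verify that $\cStwo$'s running time is polynomial in those of $\cAtwo$ and $\cS$, and that $\cD$'s running time and total query count are polynomial in those of $\cE$, $\cP$, and $\cAtwo$ (plus the per-query cost of $\cS$), so that $\cD$ falls within the $(t_\cD,q)$ budget to which the indifferentiability bound $\epsilon$ applies. No conceptually new ingredient is needed; the main (and really only) obstacle is notational, namely keeping the two interfaces cleanly separated so that $\cP$'s calls are always routed to the ``construction'' side and $\cAtwo$'s public-oracle calls to the ``primitive'' side, mirroring exactly the two experiments of the indifferentiability game. The computational version of the statement follows by the same reduction under the standard polynomial-time quantification of $\cE$, $\cAtwo$, and $\cS$.
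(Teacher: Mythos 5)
Your proposal is correct and follows essentially the same route as the paper: the paper's proof of the analogous crooked-composition theorem (and the cited proofs in \cite{TCC:MauRenHol04,C:CDMP05}) likewise treats the combination of $\cE$, $\cP$, and $\cAtwo$ as an indifferentiability distinguisher, replaces $(\rC^\cG,\cG)$ by $(\cF,\cS)$, and merges $\cAtwo$ with $\cS$ to obtain $\cStwo^\cF \equiv \cAtwo^{\cS^\cF}$. Your reduction is valid here because the paper adopts the \cite{C:CDMP05}-style definition in which a single simulator works for all distinguishers.
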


\subsection{Crooked indifferentiability}
\label{sec:crooked-indiff}

The ideal primitives that we focus on in this paper are random oracles.  
A random oracle~\cite{CCS:BelRog93} is an ideal primitive which provides an independent random output for each new query.  We next formalize a new notion called \emph{crooked indifferentiability} to reflect the challenges in our setting with subversion; our formalization is for random oracles, but the formalization can be naturally extended to other ideal primitives.    

\paragraph{Crooked indifferentiability for random oracles.}
%\label{sec:crooked-indiff-RO}

As mentioned in the introduction, we consider the problem of ``repairing'' a subverted random oracle in such a way that the corrected construction can be used as a drop-in replacement for an unsubverted random oracle. This immediately suggests invoking and appropriately adapting the indifferentiability notion. Specifically, we need to adjust the notion to reflect subversion.

We model the act of \emph{subversion of a (hash) function $H$} as creation of an ``implementation'' $\tilde{H}$ of the new, subverted (hash) function; in practice, this would be the source code of the subverted version of the function $H$. In our setting, however, where $H$ is modeled as a random oracle, we define $\tilde{H}$ as a polynomial-time algorithm with oracle access to $H$; thus the subverted function is $x \mapsto \tilde{H}^H(x)$. 
We proceed to survey the main modifications between crooked indifferentiability and the original notion of indifferentiability. 
\begin{enumerate}
\item The deterministic construction will have oracle access to the random oracle only via the subverted implementation $\tilde{H}$ but not via the ideal primitive $H$. (Operationally, the construction has oracle access to the function $x \mapsto \tilde{H}^H(x)$.)
%This creates lots of difficulty (and even impossibility) for us to develop a suitable construction. 
The construction depends on access to trusted, but public, randomness $R$.  
\item The simulator is provided, as input, the subverted implementation $\tilde{H}$ (a Turing machine) and the public randomness $R$; it has oracle access to the target ideal functionaltiy ($\cF$).
\end{enumerate}
Point (2) is necessary, and desirable, as it is clearly impossible to achieve indifferentiability using a simulator that has no access to $\tilde{H}$ (the distinguisher can simply query an input such that $\rC$ will use a value that is modified by $\tilde{H}$ while $\cS$ has no way to reproduce this). More importantly, we will show below that security will be preserved by replacing an ideal random oracle with a construction satisfying our definition (with an augmented simulator). Specifically, we prove a security preserving (i.e., replacement) theorem akin to those of \cite{TCC:MauRenHol04} and \cite{C:CDMP05} for our adapted notions. 

\begin{definition}[$H$-crooked indifferentiability] 
\label{def:indiff-crooked}
Consider a distinguisher $\sD$ and the following multi-phase real
execution. Initially, the distinguisher $\sD$ commences the first
phase: with oracle access to ideal primitive $H$ the distinguisher
constructs and publishes a \emph{subverted implementation} of $H$;
this subversion is described as a deterministic polynomial time
algorithm denoted $\tilde{H}$. (Recall that the algorithm $\tilde{H}$
implicitly defines a subverted version of $H$ by providing $H$ to
$\tilde{H}$ as an oracle---thus $\tilde{H}^H(x)$ is the value taken by
the subverted version of $H$ at $x$.)  Then, a uniformly random string
$R$ is sampled and published.  Then the second phase begins involving
a deterministic construction $\rC$: the construction
$\rC$ requires the random string $R$ as input and has oracle access to
$\tilde{H}$ (the crooked version of $H$); explicitly this is the
oracle $x \mapsto \tilde{H}^H(x)$.  Finally, the distinguisher $\sD$,
now with random string $R$ as input and full oracle access to the pair
$(\rC, H)$, returns a decision bit $b$.  Often, we call $\sD$ the
$H$-crooked-distinguisher.

Consider now the corresponding multi-phase ideal execution with the same $H$-crooked-distinguisher $\sD$. The ideal execution introduces a simulator $\cS$ responsible for simulating the behavior of $H$; the simulator is provided full oracle access to the ideal object $\cF$. Initially, the simulator must answer any queries made to $H$ by $\sD$ in the first phase. Then the simulator is given the random string $R$ and the algorithm $\langle \tilde{H}\rangle$ (generated by $\sD$ at the end of the first phase) as input. In the second phase, the $H$-crooked-distinguisher $\sD$, now with random string $R$ as input and oracle access to the alternative pair $(\cF, \cS)$, returns a decision bit $b$. 

We say that construction $\rC$ is
$(n_{\source},n_{\target},q_{\sD},q_{\tilde{H}},r,\epsilon)$-$H$-crooked-indifferentiable
from ideal primitive $\cF$ if there is an efficient simulator $\cS$ so
that for any $H$-crooked-distinguisher $\sD$ making no more than
$q_{\sD}(\secp)$ queries and producing a subversion $\tilde{H}$ making
no more than $q_{\tilde{H}}(\secp)$ queries, the real execution and
the ideal execution are indistinguishable.  Specifically,
\[
  \left| \Pr_{u,R,H} \left[\tilde{H} \leftarrow\sD^H(1^\secp)\ ; \
      \sD^{\rC^{\tilde{H}}(R),H}(1^\secp, R) = 1\right] -
    \Pr_{u,R,\cF} \left[\tilde{H} \leftarrow\sD^H(1^\secp)\ ; \
      \sD^{\cF,\cS_{}^{\cF}(R,\langle \tilde{H}\rangle)}(1^\secp, R) = 1\right]
  \right| \leq \epsilon(\secp)\,.
\]
Here $R$ denotes a random string of length $r(\secp)$ and both
$H: \bool^{n_{\source}} \rightarrow \bool^{n_{\source}}$ and
$\cF: \bool^{n_{\target}} \rightarrow \bool^{n_{\target}}$ denote
random functions where $n_{\source}(\secp)$ and $n_{\target}(\secp)$
are polynomials in the security parameter $\secp$. We let $u$ denote
the random coins of $\sD$. The simulator is efficient in the sense
that it is polynomial in $n$ and the running time of the supplied
algorithm $\tilde{H}$ (on inputs of length $n_{\source}$). See
Figure~\ref{fig:indiff-crooked} for detailed illustration of the last
phase in both real and ideal executions. (While it is not explicitly
captured in the description above, the distinguisher $\sD$ is
permitted to carry state from the first phase to the second phase.)
The notation $C^{\tilde{H}}(R)$ denotes oracle access to the function
$x \mapsto \tilde{H}(x)$.
\end{definition}

Our main security proof will begin by demonstrating that in our
particular setting, security in a simpler model suffices: this is the
\emph{abbreviated crooked indifferentiability} model, articulated
below. We then show that---in light of the special structure of our
simulator---it can be effectively lifted to the full model above.

\begin{definition}[Abbreviated $H$-crooked indifferentiability]
  \label{def:abbrev-indiff-crooked}
  The abbreviated model calls for the distinguisher to provide the
  subversion algorithm $\tilde{H}$ at the outset (without the
  advantage of any preliminary queries to $H$). Thus, the abbreviated
  model consists only of the last phase of the full model.  Formally,
  in the abbreviated model the distinguisher is provided as a pair
  $(\sD,\tilde{H})$, the random string $R$ is drawn (as in the full
  model), and insecurity is expressed as the difference between the
  behavior of $\sD$ on the pair $(C^{\tilde{H}}(R),H)$ and the pair
  $(\cF,\cS^\cF(R,\langle\tilde{H}\rangle))$. Specifically, the
  construction $\rC$ is
  $(n_{\source},n_{\target},q_{\sD},q_{\tilde{H}},r,\epsilon)$-Abbreviated-$H$-crooked-indifferentiable
  from ideal primitive $\cF$ if there is an efficient simulator $\cS$
  so that for any $H$-crooked-distinguisher $\sD$ making no more than
  $q_{\sD}(\secp)$ queries and subversion algorithm $\tilde{H}$ making
  no more than $q_{\tilde{H}}(\secp)$ queries, the real execution and
  the ideal execution are indistinguishable:
  \[
    \left| \Pr_{u,R,H} \left[\sD^{\rC^{\tilde{H}}(R),H}(1^\secp, R) = 1\right] - 
      \Pr_{u,R,\cF} \left[\sD^{\cF,\cS_{}^{\cF}(R,\langle\tilde{H}\rangle)}(1^\secp, R) = 1\right] 
    \right| \leq \epsilon(\secp)\,.
  \]
  Here $R$ denotes a random string of length $r(\secp)$ and both
  $H: \bool^{n_{\source}} \rightarrow \bool^{n_{\source}}$ and
  $\cF: \bool^{n_{\target}} \rightarrow \bool^{n_{\target}}$ denote
  random functions where $n_{\source}(\secp)$ and $n_{\target}(\secp)$
  are polynomials in the security parameter $\secp$. We let $u$ denote
  the random coins of $\sD$. The simulator is efficient in the sense
  that it is polynomial in $n$ and the running time of the supplied
  algorithm $\tilde{H}$ (on inputs of length $n_{\source}$).
\end{definition}

Observe that while the abbreviated simulator is a fixed algorithm, its
running time may depend on the running time of $\tilde{H}$---in
particular, the definition permits $\cS$ sufficient running time to
simulate $\tilde{H}$ on a polynomial number of inputs.

Regarding the difference between these notions, observe that the
distinguisher can ``compile into'' the subversion algorithm
$\tilde{H}$ any queries and pre-computation that might have been
advantageous to carry out in phase I; such queries and pre-computation
can also be mimicked by the distinguisher itself. This technique can
effectively simulate the two phase execution with a single
phase. Nevertheless, the models do make slightly different demands on
the simulator which must be prepared to answer some queries (in Phase
I) prior to knowledge of $R$ and $\tilde{H}$.

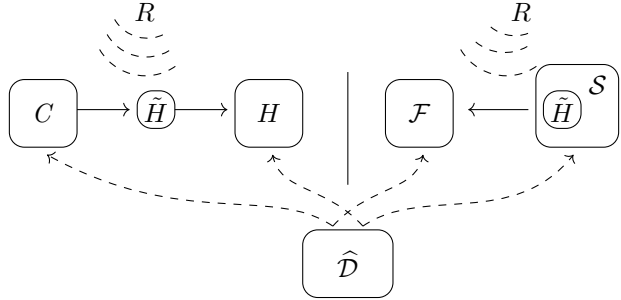
\begin{figure}[htbp!]
  \begin{center}
    \begin{tikzpicture}
     % \draw[ultra thick, rounded corners=2mm] (0,.25) rectangle (4.2,2.75);
    %  \node at (-2,1.5) {$(pk,s)$};
    %  \node at (3.2,1.5) {$\boxed{\PRG_{\SPEC}^\cF}$};
    %  \draw[thin, rounded corners=2mm] (2.55,1) rectangle +(1.2,1) node[pos=.5] {$\PRG_\SPEC^\cF$};
     % \node at (5.2,1.5) {$(s',r)$};
            \draw[thin, rounded corners=2mm] (-4.5,1.5) rectangle +(0.9,0.9) node[pos=.5] {$\rC$} node[pos=2] {$R$};
                        \draw[thin, rounded corners=2mm] (-2.8,1.75) rectangle +(.5,.5) node[pos=.5] {$\tilde{H}$};

      \draw[->,thin] (-3.6,2) -- (-2.9,2);
      \draw[->,thin] (-2.3,2) -- (-1.6,2);

      \draw[thin, rounded corners=2mm] (-1.5,1.5) rectangle +(.9,.9) node[pos=.5] {$H$};
      \draw[thin] (0,2.5) -- (0,1);
   %   \node at (1,2) {$\boxed{h_{\SPEC}}$};
     % \draw[thick, rounded corners=2mm,->] (-1.5,1.6) -- (-1,1.6) -- (-1,2) -- node[above] {$pk$}   (0,2) -- (.5,2);
    %  \draw[thick, rounded corners=2mm] (-1.5,1.4) -- (-1,1.4) -- (-1,1) -- node[above] {$s$}   (0,1) -- (.5,1);
      %\draw[thick, rounded corners=2mm,->] (.5,1) -- (2,1) -- (2,1.4) -- (2.5,1.4);
    %  \draw[thick, rounded corners=2mm,->] (1.5,2) -- node[above] {$\widetilde{pk}$} (2,2) --  (2,1.6) -- (2.5,1.6);
      %\draw[->,thick] (3.85,1.5) -- (4.7,1.5);
      %\node at (3.5,.6) {$\PRG_{\SPEC}^H$};

                \draw[thin, rounded corners=2mm] (.5,1.5) rectangle +(.9,.9) node[pos=.5] {$\cF$} node[pos=2] {$R$};
      \draw[->,thin] (2.4,2) -- (1.6,2);

\draw [thin,dashed] (-3.1,3.2) arc [radius=.5, start angle=210, end angle= 300]; %beacon

\draw [thin,dashed] (-3.2,3) arc [radius=.6, start angle=210, end angle= 310]; % beacon

\draw [thin,dashed] (-3.3,2.8) arc [radius=.7, start angle=210, end angle= 310]; % beacon

\draw [thin,dashed] (1.9,3.1) arc [radius=.5, start angle=230, end angle= 300]; %beacon

\draw [thin,dashed] (1.7,2.9) arc [radius=.6, start angle=230, end angle= 300]; % beacon

\draw [thin,dashed] (1.5,2.8) arc [radius=.7, start angle=210, end angle= 310]; % beacon

      \draw[thin, rounded corners=2mm] (2.5,1.5) rectangle +(1.1,1.1) node[pos=.75] {$\cS$};
                                  \draw[thin, rounded corners=2mm] (2.6,1.75) rectangle +(.5,.5) node[pos=.5] {$\tilde{H}$};
 
       %     \draw[thin, dashed, rounded corners=2mm,->] (.5,1) -- (2,1) -- (2,1.4) -- (2.5,1.4);
\draw[thin,dashed,->] (0.2,0.45) .. controls (1,1) and (2.2,.4) .. (3,1.4);5
      \draw[thin,dashed,->] (-0.2,0.45) .. controls (.3,1) and (.9,.9) .. (1,1.4);
\draw[thin,dashed,->] (0.2,0.45) .. controls (-.3,1) and (-.9,.9) .. (-1,1.4);
      \draw[thin,dashed,->] (-0.2,0.45) .. controls (-1,1) and (-2.2,.4) .. (-4,1.4);

            \draw[thin, rounded corners=2mm] (-.6,-.5) rectangle +(1.2,.9) node[pos=.5] {$\sD$};

    \end{tikzpicture}
    %\vspace{-4mm}
  \end{center}
  \caption{
\label{fig:indiff-crooked} 
The $H$-crooked indifferentiability notion: the distinguisher $\sD$, in the first phase, manufactures and publishes a subverted implementation denoted as $\tilde{H}$, for ideal primitive $H$; then in the second phase, a random string $R$ is published; 
after that, in the third phase, algorithm $\rC$, and simulator $\cS$ are developed; 
the $H$-crooked-distinguisher $\sD$, in the last phase,   
either interacting with algorithm $\rC$ and ideal primitive $H$, or with ideal primitive $\cF$ and simulator $\cS$, return a decision bit. 
Here, algorithm $\rC$ has oracle access to $\tilde{H}$, while simulator $\cS$ has oracle access to $\cF$ and $\tilde{H}$.
}
%\vspace{-4mm}
\end{figure}

\paragraph{Replacement with crooked indifferentiability.} 

The notion of crooked indifferentiability %in the previous subsection 
is formalized for a specific ideal primitive, i.e., random oracles. We note that the formalization can be trivially generalized for all ideal primitives.  

Security preserving (replacement) has been shown in the indifferentiability framework~\cite{TCC:MauRenHol04}: if $\rC^\cG$ is indifferentiable from $\cF$, then $\rC^\cG$ can replace $\cF$ in any cryptosystem, and the resulting cryptosystem in the  $\cG$  model is at least as secure  as that in the $\cF$ model. 
We next show that the replacement property also holds in the crooked indifferentiability framework. 

 Recall that, in the ``standard'' indifferentiability framework~\cite{TCC:MauRenHol04,C:CDMP05},  a cryptosystem can be modeled as an Interactive Turing Machine with an interface to an adversary $\cA$ and to a public oracle. 
There the cryptosystem is run by a  ``standard'' environment $\cE$.  
In our ``crooked'' indifferentiability framework, a cryptosystem has the interface to an adversary $\cA$ and to a public oracle. However, now  the cryptosystem is run by a  crooked-environment $\sE$.

Consider an ideal primitive $\cG$. 
Similar to the $\cG$-crooked-distinguisher, we can define the $\cG$-crooked-environment $\sE$ as follows: 
Initially, the crooked-environment $\sE$ manufactures and then publishes a subverted implementation of the ideal primitive $\cG$, and denotes it $\badG$. Then $\sE$ runs the attacker $\cAtwo$, and 
the cryptosystem $\cP$ is developed.   
In the  $\cG$  model, cryptosystem $\cP$ has oracle access to $\rC$ whereas attacker $\cAtwo$ has oracle access to $\cG$; note that, $\rC$ has oracle access to $\badG$, not to directly $\cG$. In the $\cF$ model, both $\cP$ and $\cAtwo$ have oracle access to $\cF$. Finally, the crooked-environment $\sE$ returns a binary decision output. 
The definition is illustrated in Figure \ref{fig:composition-crooked}.

%Consider the following multi-phase real execution. 
%Initially, a strong-distinguisher $\sD$ who has oracle access to ideal primitive $H$, publishes a subverted implementation of $H$,  and denotes it as $\tilde{H}$.  
%Secondly, a uniformly random string $r$ is sampled and published. 
%Thirdly, a deterministic construction $\rC$ is developed: the construction $\rC$ has random string $r$ as input, and has the oracle access to $\tilde{H}$. 
%Finally, the strong-distinguisher $\sD$, after having random string $r$ as input, and the oracle access to the pair $(\rC, H)$, returns a decision bit $b$. 

\begin{definition}
Consider ideal primitives $\cG$ and $\cF$. 
A cryptosystem $\cP$ is said to be at least as secure in the $\cG$-crooked model with algorithm $\rC$ as in the $\cF$ model, if for any $\cG$-crooked-environment $\sE$ and any attacker $\cAtwo$ in the $\cG$-crooked model, there exists an attacher $\cStwo$ in the $\cF$ model, such that: 
$$\Pr[\sE(\cP^{\rC^{\badG}},\cAtwo^{\cG})=1]-\Pr[\sE(\cP^\cF,\cStwo^\cF)=1]\leq\epsilon.$$
where $\epsilon$ is a negligible function of the security parameter $\secp$. 
%Similarly, a cryptosystem is said to be computationally at least as secure, etc., if $\cE$, $\cAtwo$ and $\cStwo$ are polynomial-time in $\secp$.
\end{definition}

\begin{figure}[htbp!]
 \begin{center}
 \begin{tikzpicture}

 \draw[thin, rounded corners=2mm] (-3.5,1.5) rectangle +(.9,.9) node[pos=.5] {$\rC$} node[pos=2] {$R$};
 \draw[thin, rounded corners=2mm] (-2.25,1.75) rectangle +(.5,.5) node[pos=.5] {$\badG$};

 \draw[->,thin] (-1.8,2) -- (-1.5,2);
 \draw[->,thin] (-2.6,2) -- (-2.25,2);

 \draw[thin, rounded corners=2mm] (-1.5,1.5) rectangle +(.9,.9) node[pos=.5] {$\cG$};
 \draw[thin] (0,2.5) -- (0,-2.5);
 
 \draw[thin, rounded corners=2mm] (-3.5,0) rectangle +(.9,.9) node[pos=.5] {$\cP$};
 % \draw[thin, rounded corners=2mm] (-3.,1.75) rectangle +(.5,.5) node[pos=.5] {$\badG$};

 \draw[->,thin] (-2.5,0.5) -- (-1.6,0.5);
 
 \draw[->,thin] (-3,1) -- (-3,1.4);
 \draw[->,thin] (-1,1) -- (-1,1.4);

\draw [thin,dashed] (-2.1,3.2) arc [radius=.5, start angle=210, end angle= 300]; %beacon

\draw [thin,dashed] (-2.2,3) arc [radius=.6, start angle=210, end angle= 310]; % beacon

\draw [thin,dashed] (-2.3,2.8) arc [radius=.7, start angle=210, end angle= 310]; % beacon

\draw [thin,dashed] (2.1,3.1) arc [radius=.5, start angle=230, end angle= 300]; %beacon

\draw [thin,dashed] (1.9,2.9) arc [radius=.6, start angle=230, end angle= 300]; % beacon

\draw [thin,dashed] (1.7,2.8) arc [radius=.7, start angle=210, end angle= 310]; % beacon

 \draw[thin, rounded corners=2mm] (-1.5,0) rectangle +(.9,.9) node[pos=.5] {$\cAtwo$};

 \draw[thin, rounded corners=2mm] (-3.5,-1.5) rectangle +(3,.9) node[pos=.5] {${\sE}$};
 
 \draw[->,thin] (-2,-1.5) -- (-2,-2.4);

 % \draw[thin, rounded corners=2mm] (-3.,1.75) rectangle +(.5,.5) node[pos=.5] {$\badG$};

 \draw[->,thin] (-2.5,0.5) -- (-1.6,0.5);
 
 \draw[->,thin] (-3,-.5) -- (-3,-0.1);
 \draw[->,thin] (-1,-.5) -- (-1,-.1);
 
 % \draw[thin, dashed, rounded corners=2mm] (-3.7,-2.1) rectangle +(3.4,3.3) node[pos=.1] {$\cD$};

 % \draw[thin, rounded corners=2mm] (-3.,1.75) rectangle +(.5,.5) node[pos=.5] {$\badG$};

 % \draw[->,thin] (-3.5,0.5) -- (-1.6,0.5);
 
 % \draw[->,thin] (-4,-.5) -- (-4,-0.1);
 %\draw[->,thin] (-1,-.5) -- (-1,-.1);

 \draw[thin, rounded corners=2mm] (.5,1.5) rectangle +(.9,.9) node[pos=.5] {$\cF$} node[pos=2] {$R$};
 % \draw[->,thin] (2.4,2) -- (1.6,2);

 % \draw[thin, rounded corners=2mm] (2.5,1.5) rectangle +(1.1,1.1) node[pos=.75] {$\cS_z$};
 % \draw[thin, rounded corners=2mm] (2.6,1.75) rectangle +(.5,.5) node[pos=.5] {$\badG$};

 %%%%right hand side second part
 
 \draw[thin, rounded corners=2mm] (.5,0) rectangle +(.9,.9) node[pos=.5] {$\cP$};
 % \draw[thin, rounded corners=2mm] (-3.,1.75) rectangle +(.5,.5) node[pos=.5] {$\badG$};

 \draw[<->,thin] (-2.5,0.5) -- (-1.6,0.5);
 
 \draw[->,thin] (1,1) -- (1,1.4);
 \draw[->,thin] (3,1) |- (1.6,2);

 \draw[thin, rounded corners=2mm] (2.5,0) rectangle +(.9,.9) node[pos=.5] {$\cS_\cA$};

 \draw[thin, rounded corners=2mm] (0.5,-1.5) rectangle +(3,.9) node[pos=.5] {${\sE}$};
 \draw[->,thin] (2,-1.5) -- (2,-2.4);

 % \draw[thin, rounded corners=2mm] (-3.,1.75) rectangle +(.5,.5) node[pos=.5] {$\badG$};
% line for E to P,A
 \draw[<->,thin] (1.5,0.5) -- (2.4,0.5);
 
 \draw[->,thin] (1,-.5) -- (1,-0.1);
 \draw[->,thin] (3,-.5) -- (3,-.1);

 \end{tikzpicture}
 %\vspace{-4mm}
 \end{center}
 \caption{
\label{fig:composition-crooked} 
The environment $\sE$ interacts with cryptosystem $\cP$ and attacker $\cAtwo$:
In the $\cG$ model (left), $\cP$ has oracle accesses to $\rC$ whereas $\cAtwo$ has oracle accesses to $\cG$; the algorithm $\rC$ has oracle accesses to the subverted $\badG$. 
In the $\cF$ model, both $\cP$ and $\cStwo$ have oracle accesses to $\cF$. 
In addition, in both $\cG$ and $\cF$ models, randomness $R$ is publicly available to all entities.}
%\vspace{-4mm}
\end{figure}

%\begin{definition} 
%A cryptosystem is said to be at least as secure in the $\cG$ model with construction $\rC$ which uses only a subverted implementation provided as $\badG$, if for any environment $\cE$ and any attacker $\cA$ (who knows the secret state $z$) in the $\cG$ (actually $\tilde{H}$) model, there exists an attacher $\cS$ in the $\cF$ model, such that: $$\Pr[\cE(\cP^{\rC^{\badG}},\cA^{\cG})=1]-\Pr[\cE(\cP^\cF,\cS^\cF)=1]\leq\epsilon.$$
%\end{definition}

\ignore{%%%%%%%Crypto figure

\begin{figure}[htbp!]
  \begin{center}
    \begin{tikzpicture}

    \draw[thin, rounded corners=2mm] (-3.5,1.5) rectangle +(.9,.9) node[pos=.5] {$\rC$};
                        \draw[thin, rounded corners=2mm] (-2.25,1.75) rectangle +(.5,.5) node[pos=.5] {$\badG$};

      \draw[->,thin] (-1.8,2) -- (-1.5,2);
            \draw[->,thin] (-2.6,2) -- (-2.25,2);

      \draw[thin, rounded corners=2mm] (-1.5,1.5) rectangle +(.9,.9) node[pos=.5] {$\cG$};
      \draw[thin] (0,2.5) -- (0,-2.5);
       
                \draw[thin, rounded corners=2mm] (-3.5,0) rectangle +(.9,.9) node[pos=.5] {$\cP$};
                      %  \draw[thin, rounded corners=2mm] (-3.,1.75) rectangle +(.5,.5) node[pos=.5] {$\badG$};

      \draw[->,thin] (-2.5,0.5) -- (-1.6,0.5);
      
            \draw[->,thin] (-3,1) -- (-3,1.4);
         \draw[->,thin] (-1,1) -- (-1,1.4);

      \draw[thin, rounded corners=2mm] (-1.5,0) rectangle +(.9,.9) node[pos=.5] {$\cAtwo$};

                      \draw[thin, rounded corners=2mm] (-3.5,-1.5) rectangle +(3,.9) node[pos=.5] {${\sE}$};
                      
                                                       \draw[->,thin] (-2,-1.5) -- (-2,-2.4);

                      %  \draw[thin, rounded corners=2mm] (-3.,1.75) rectangle +(.5,.5) node[pos=.5] {$\badG$};

      \draw[->,thin] (-2.5,0.5) -- (-1.6,0.5);
      
            \draw[->,thin] (-3,-.5) -- (-3,-0.1);
            \draw[->,thin] (-1,-.5) -- (-1,-.1);
            
  %    \draw[thin, dashed, rounded corners=2mm] (-3.7,-2.1) rectangle +(3.4,3.3) node[pos=.1] {$\cD$};

                      %  \draw[thin, rounded corners=2mm] (-3.,1.75) rectangle +(.5,.5) node[pos=.5] {$\badG$};

   %   \draw[->,thin] (-3.5,0.5) -- (-1.6,0.5);
      
          %  \draw[->,thin] (-4,-.5) -- (-4,-0.1);
            %\draw[->,thin] (-1,-.5) -- (-1,-.1);

                \draw[thin, rounded corners=2mm] (.5,1.5) rectangle +(.9,.9) node[pos=.5] {$\cF$};
  %    \draw[->,thin] (2.4,2) -- (1.6,2);

    %  \draw[thin, rounded corners=2mm] (2.5,1.5) rectangle +(1.1,1.1) node[pos=.75] {$\cS_z$};
                            %      \draw[thin, rounded corners=2mm] (2.6,1.75) rectangle +(.5,.5) node[pos=.5] {$\badG$};

                                  %%%%right hand side second part
                                  
                                     \draw[thin, rounded corners=2mm] (.5,0) rectangle +(.9,.9) node[pos=.5] {$\cP$};
                      %  \draw[thin, rounded corners=2mm] (-3.,1.75) rectangle +(.5,.5) node[pos=.5] {$\badG$};

      \draw[<->,thin] (-2.5,0.5) -- (-1.6,0.5);
      
            \draw[->,thin] (1,1) -- (1,1.4);
            \draw[->,thin] (3,1) |- (1.6,2);

      \draw[thin, rounded corners=2mm] (2.5,0) rectangle +(.9,.9) node[pos=.5] {$\cAtwo$};

                      \draw[thin, rounded corners=2mm] (0.5,-1.5) rectangle +(3,.9) node[pos=.5] {${\sE}$};
                                                        \draw[->,thin] (2,-1.5) -- (2,-2.4);

                      %  \draw[thin, rounded corners=2mm] (-3.,1.75) rectangle +(.5,.5) node[pos=.5] {$\badG$};
% line for E to P,A
      \draw[<->,thin] (1.5,0.5) -- (2.4,0.5);
      
           \draw[->,thin] (1,-.5) -- (1,-0.1);
            \draw[->,thin] (3,-.5) -- (3,-.1);

    \end{tikzpicture}
    %\vspace{-4mm}
  \end{center}
  \caption{
\label{fig:composition-crooked} 
The environment $\sE$ interacts with cryptosystem $\cP$ and attacker $\cAtwo$. In the $\cG$ model (left), $\cP$ has oracle access to $\rC$ whereas $\cAtwo$ has oracle access to $\cG$; the algorithm $\rC$ has oracle access to the subverted $\badG$.  
In the $\cF$ model, both $\cP$ and $\cStwo$ have oracle access to $\cF$.}
%\vspace{-4mm}
\end{figure}

}%%%%%%%

%\newpage

 %The following theorem from \cite{TCC:MauRenHol04} 
 We now demonstrate the following theorem which 
 shows that security is preserved when replacing an ideal primitive by a crooked-indifferentiable one:

 \begin{theorem} \label{theorem:crooked-composition}
 Consider an ideal primitive $\cG$ and a $\cG$-crooked-environment $\sE$. 
Let $\cP$ be a cryptosystem with oracle access to an ideal primitive $\cF$. Let $\rC$ be an algorithm such that $\rC^{\cG}$ is $\cG$-crooked-indifferentiable from $\cF$. Then cryptosystem $\cP$ is at least as secure in the $\cG$-crooked model with algorithm $\rC$ as in the $\cF$ model. 
\end{theorem}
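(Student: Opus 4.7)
The plan is to mimic the classical replacement theorem of~\cite{TCC:MauRenHol04,C:CDMP05} while carefully respecting the two-phase structure of crooked indifferentiability. Given a $\cG$-crooked-environment $\sE$ and a $\cG$-crooked attacker $\cAtwo$ in the real world, the goal is to manufacture an $\cF$-model attacker $\cStwo$ so that $\sE$'s view is (statistically/computationally) the same in both settings. The natural candidate is $\cStwo^{\cF} := \cAtwo^{\cS^{\cF}(R,\langle \badG\rangle)}$, where $\cS$ is the crooked-indifferentiability simulator supplied by the hypothesis that $\rC^\cG$ is $\cG$-crooked-indifferentiable from $\cF$. That is, $\cStwo$ internally runs $\cAtwo$ and answers each of its $\cG$-queries by forwarding them to $\cS$, while relaying the interface between $\cAtwo$ and the cryptosystem $\cP$ unchanged.

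Next I would argue by contradiction: suppose some efficient $\sE$ distinguishes the executions $\sE(\cP^{\rC^{\badG}},\cAtwo^{\cG})$ and $\sE(\cP^{\cF},\cStwo^{\cF})$ with non-negligible advantage $\epsilon'$. I will build a $\cG$-crooked-distinguisher $\sD$ that achieves the same advantage against crooked indifferentiability, contradicting the hypothesis. In the first phase, $\sD$ simulates $\sE$'s first (subversion) phase, using its own oracle access to $\cG$ to answer whatever preliminary queries $\sE$ makes, and outputs the implementation $\badG$ produced by $\sE$ as its own $\tilde{H}$. After $R$ is sampled and published, $\sD$ continues the simulation of $\sE$, internally instantiating $\cP$ and $\cAtwo$. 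Every oracle query that $\cP$ issues (to $\rC^{\badG}$ in the real world or to $\cF$ in the ideal world) is routed to $\sD$'s first oracle, and every oracle query $\cAtwo$ issues (to $\cG$ in the real world or to $\cS^{\cF}$ in the ideal world) is routed to $\sD$'s second oracle. Finally $\sD$ outputs whatever bit $\sE$ outputs.

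The verification step then shows that the two worlds of the crooked-indifferentiability game induce exactly the two worlds that $\sE$ is trying to tell apart. Indeed, when $\sD$ is run in the real crooked-indifferentiability execution with oracles $(\rC^{\tilde{H}}(R),H)$, the oracles delivered to the simulated $\cP$ and $\cAtwo$ are precisely $\rC^{\badG}$ and $\cG$, so the joint distribution equals that of $\sE(\cP^{\rC^{\badG}},\cAtwo^{\cG})$. Dually, in the ideal execution with oracles $(\cF,\cS^{\cF}(R,\langle\tilde{H}\rangle))$, the oracles delivered to $\cP$ and $\cAtwo$ are $\cF$ and $\cS^{\cF}$; since $\cStwo^{\cF}=\cAtwo^{\cS^{\cF}}$ by construction, this matches the distribution of $\sE(\cP^{\cF},\cStwo^{\cF})$. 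Hence $\sD$'s advantage equals $\epsilon'$, which must therefore be negligible.

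The main conceptual obstacle is ensuring that $\sD$'s single two-phase interaction actually captures $\sE$'s potentially richer activity: $\sE$ might query $\cG$ \emph{before} committing to $\badG$, then read $R$, then interleave queries among $\cP$, $\cAtwo$, and $\cG$. This is handled exactly by the two-phase shape of Definition~\ref{def:indiff-crooked}: the pre-$\badG$ queries are answered using $\sD$'s phase-I oracle $H$, while the post-$R$ interleaving is answered using the phase-II oracle pair. Efficiency of $\cStwo$ follows from efficiency of $\cAtwo$ and $\cS$, and everything else is straightforward bookkeeping; no new probabilistic or combinatorial content beyond the crooked-indifferentiability bound itself is required.
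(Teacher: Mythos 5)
Your proposal is correct and follows essentially the same route as the paper: the paper also builds the $\cF$-model attacker $\cStwo$ by merging $\cAtwo$ with the crooked-indifferentiability simulator $\cS$, and views the combination of $\sE$, $\cP$, and $\cAtwo$ as a crooked-distinguisher $\sD$ whose two worlds coincide with the two executions $\sE$ must tell apart. Your write-up is in fact somewhat more explicit than the paper's brief argument about how the two-phase structure (pre-$\badG$ queries, then post-$R$ interleaving) is absorbed into $\sD$.
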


\begin{proof}
The proof is very similar to that in~\cite{TCC:MauRenHol04,C:CDMP05}. 
%We only provide a proof sketch; see \cite{TCC:MauRenHol04,C:CDMP05} for a full proof. 
Let $\cP$ be any cryptosystem, modeled as an Interactive Turing Machine. Let $\sE$ be any crooked-environment, and $\cAtwo$ be any attacker in the $\cG$-crooked model. In the $\cG$-crooked model, $\cP$ has oracle access to $\rC$ (who has oracle access to $\badG$, not to directly $\cG$.)
 whereas $\cAtwo$ has oracle access to ideal primitive $\cG$; moreover crooked-environment $\sE$ interacts with both $\cP$ and $\cAtwo$. This is illustrated in Figure \ref{fig:composition-crooked-proof} (left part).

Since $\rC$ is crooked-indifferentiable from $\cF$ (see Figure \ref{fig:indiff-crooked}), one can replace $(\rC^{\badG}, \cG)$ by $(\cF, \cS)$ with only a negligible modification of the crooked-environment $\sE$'s output distribution. As illustrated in Figure \ref{fig:composition-crooked-proof}, by merging attacker $\cAtwo$ and simulator $\cS$, one obtains an attacker $\cStwo$ in the $\cF$ model, and the difference in $\sE$'s output distribution is negligible.
\end{proof}

\begin{figure}[htbp!]
 \begin{center}
 \begin{tikzpicture}
 %upper left
 \draw[thin, rounded corners=2mm] (-3.5,1.5) rectangle +(.9,.9) node[pos=.5] {$\rC$} node[pos=2] {$R$};
 \draw[thin, rounded corners=2mm] (-2.25,1.75) rectangle +(.5,.5) node[pos=.5] {$\badG$} ;

 \draw[->,thin] (-1.8,2) -- (-1.5,2);
 \draw[->,thin] (-2.6,2) -- (-2.25,2);

 \draw[thin, rounded corners=2mm] (-1.5,1.5) rectangle +(.9,.9) node[pos=.5] {$\cG$};
 \draw[thin] (0,2.5) -- (0,-2.5) ;
 
 % \draw[->,thin,dashed] (-1.8,3) -- (-2.5,2.5); %\\random beacon
 % \draw[->,thin,dashed] (2.1,3) -- (1,2.5); %beacon
 
\draw [thin,dashed] (-2.1,3.2) arc [radius=.5, start angle=210, end angle= 300]; %beacon

\draw [thin,dashed] (-2.2,3) arc [radius=.6, start angle=210, end angle= 310]; % beacon

\draw [thin,dashed] (-2.3,2.8) arc [radius=.7, start angle=210, end angle= 310]; % beacon

\draw [thin,dashed] (2.1,3.1) arc [radius=.5, start angle=230, end angle= 300]; %beacon

\draw [thin,dashed] (1.9,2.9) arc [radius=.6, start angle=230, end angle= 300]; % beacon

\draw [thin,dashed] (1.7,2.8) arc [radius=.7, start angle=210, end angle= 310]; % beacon

 \draw[thin, rounded corners=2mm] (-3.5,0) rectangle +(.9,.9) node[pos=.5] {$\cP$};
 % \draw[thin, rounded corners=2mm] (-3.,1.75) rectangle +(.5,.5) node[pos=.5] {$\badG$};

 \draw[->,thin] (-2.5,0.5) -- (-1.6,0.5);
 
 \draw[->,thin] (-3,1) -- (-3,1.4);
 \draw[->,thin] (-1,1) -- (-1,1.4);

 \draw[thin, rounded corners=2mm] (-1.5,0) rectangle +(.9,.9) node[pos=.5] {$\cAtwo$};

 \draw[thin, rounded corners=2mm] (-3.5,-1.5) rectangle +(3,.9) node[pos=.5] {${\sE}$};
 
 \draw[->,thin] (-2,-1.5) -- (-2,-2.4);

 % \draw[thin, rounded corners=2mm] (-3.,1.75) rectangle +(.5,.5) node[pos=.5] {$\badG$};

 \draw[->,thin] (-2.5,0.5) -- (-1.6,0.5);
 
 \draw[->,thin] (-3,-.5) -- (-3,-0.1);
 \draw[->,thin] (-1,-.5) -- (-1,-.1);
 
 \draw[thin, dashed, rounded corners=2mm] (-3.7,-2.1) rectangle +(3.4,3.3) node[pos=.1] {$\sD$};

 % \draw[thin, rounded corners=2mm] (-1.5,0) rectangle +(.9,.9) node[pos=.5] {$\cA$};

 %%%%right hand side part

 \draw[thin, rounded corners=2mm] (.5,1.5) rectangle +(.9,.9) node[pos=.5] {$\cF$} node[pos=2] {$R$};
 \draw[->,thin] (2.4,2) -- (1.6,2);

 \draw[thin, rounded corners=2mm] (2.5,1.5) rectangle +(1.1,1.1) node[pos=.75] {$\cS$};
 \draw[thin, rounded corners=2mm] (2.6,1.75) rectangle +(.5,.5) node[pos=.5] {$\badG$};

 \draw[thin, rounded corners=2mm] (.5,0) rectangle +(.9,.9) node[pos=.5] {$\cP$};
 % \draw[thin, rounded corners=2mm] (-3.,1.75) rectangle +(.5,.5) node[pos=.5] {$\badG$};

 \draw[<->,thin] (-2.5,0.5) -- (-1.6,0.5);
 
 \draw[->,thin] (1,1) -- (1,1.4);
 \draw[->,thin] (3,1) -- (3,1.4);

 \draw[thin, rounded corners=2mm] (2.5,0) rectangle +(.9,.9) node[pos=.5] {$\cAtwo$};

 \draw[thin, rounded corners=2mm] (0.5,-1.5) rectangle +(3,.9) node[pos=.5] {${\sE}$};
 \draw[->,thin] (2,-1.5) -- (2,-2.4);

 % \draw[thin, rounded corners=2mm] (-3.,1.75) rectangle +(.5,.5) node[pos=.5] {$\badG$};

 \draw[<->,thin] (1.5,0.5) -- (2.4,0.5);
 
 \draw[->,thin] (1,-.5) -- (1,-0.1);
 \draw[->,thin] (3,-.5) -- (3,-.1);
 
 \draw[thin, dashed, rounded corners=2mm] (.3,-2.1) rectangle +(3.4,3.3) node[pos=.1] {$\sD$};
 
 %new adv box
 
 \draw[thin, dashed,red, rounded corners=2mm] (2.3,-.3) rectangle +(1.9,3) node[pos=.85] {$\cStwo$};

 % \draw[thin, dashed, rounded corners=2mm,->] (.5,1) -- (2,1) -- (2,1.4) -- (2.5,1.4);
%\draw[thin,dashed,->] (0.2,0.45) .. controls (1,1) and (2.2,.4) .. (3,1.4);5
 % \draw[thin,dashed,->] (-0.2,0.45) .. controls (.3,1) and (.9,.9) .. (1,1.4);
%\draw[thin,dashed,->] (0.2,0.45) .. controls (-.3,1) and (-.9,.9) .. (-1,1.4);
 % \draw[thin,dashed,->] (-0.2,0.45) .. controls (-1,1) and (-2.2,.4) .. (-3,1.4);

 % \draw[thin, rounded corners=2mm] (-.6,-.5) rectangle +(1.2,.9) node[pos=.5] {$D_z$};

 \end{tikzpicture}
 %\vspace{-4mm}
 \end{center}
\caption{
\label{fig:composition-crooked-proof} 
Construction of attacker $\cStwo$ from attacker $\cAtwo$ and simulator $\cS$}
%\vspace{-4mm}
\end{figure}

\ignore{%%%%%%%%%%crypto figure
\begin{figure}[htb!]
  \begin{center}
    \begin{tikzpicture}
 %upper left
            \draw[thin, rounded corners=2mm] (-3.5,1.5) rectangle +(.9,.9) node[pos=.5] {$\rC$};
                        \draw[thin, rounded corners=2mm] (-2.25,1.75) rectangle +(.5,.5) node[pos=.5] {$\badG$};

      \draw[->,thin] (-1.8,2) -- (-1.5,2);
            \draw[->,thin] (-2.6,2) -- (-2.25,2);

      \draw[thin, rounded corners=2mm] (-1.5,1.5) rectangle +(.9,.9) node[pos=.5] {$\cG$};
      \draw[thin] (0,2.5) -- (0,-2.5);
       
                \draw[thin, rounded corners=2mm] (-3.5,0) rectangle +(.9,.9) node[pos=.5] {$\cP$};
                      %  \draw[thin, rounded corners=2mm] (-3.,1.75) rectangle +(.5,.5) node[pos=.5] {$\badG$};

      \draw[->,thin] (-2.5,0.5) -- (-1.6,0.5);
      
            \draw[->,thin] (-3,1) -- (-3,1.4);
            \draw[->,thin] (-1,1) -- (-1,1.4);

      \draw[thin, rounded corners=2mm] (-1.5,0) rectangle +(.9,.9) node[pos=.5] {$\cAtwo$};

                      \draw[thin, rounded corners=2mm] (-3.5,-1.5) rectangle +(3,.9) node[pos=.5] {${\sE}$};
                      
                                                        \draw[->,thin] (-2,-1.5) -- (-2,-2.4);

                      %  \draw[thin, rounded corners=2mm] (-3.,1.75) rectangle +(.5,.5) node[pos=.5] {$\badG$};

      \draw[->,thin] (-2.5,0.5) -- (-1.6,0.5);
      
            \draw[->,thin] (-3,-.5) -- (-3,-0.1);
            \draw[->,thin] (-1,-.5) -- (-1,-.1);
            
      \draw[thin, dashed, rounded corners=2mm] (-3.7,-2.1) rectangle +(3.4,3.3) node[pos=.1] {$\sD$};

     % \draw[thin, rounded corners=2mm] (-1.5,0) rectangle +(.9,.9) node[pos=.5] {$\cA$};

                                        %%%%right hand side  part

                \draw[thin, rounded corners=2mm] (.5,1.5) rectangle +(.9,.9) node[pos=.5] {$\cF$};
      \draw[->,thin] (2.4,2) -- (1.6,2);

      \draw[thin, rounded corners=2mm] (2.5,1.5) rectangle +(1.1,1.1) node[pos=.75] {$\cS$};
                                  \draw[thin, rounded corners=2mm] (2.6,1.75) rectangle +(.5,.5) node[pos=.5] {$\badG$};

                                     \draw[thin, rounded corners=2mm] (.5,0) rectangle +(.9,.9) node[pos=.5] {$\cP$};
                      %  \draw[thin, rounded corners=2mm] (-3.,1.75) rectangle +(.5,.5) node[pos=.5] {$\badG$};

      \draw[<->,thin] (-2.5,0.5) -- (-1.6,0.5);
      
            \draw[->,thin] (1,1) -- (1,1.4);
            \draw[->,thin] (3,1) -- (3,1.4);

      \draw[thin, rounded corners=2mm] (2.5,0) rectangle +(.9,.9) node[pos=.5] {$\cAtwo$};

                      \draw[thin, rounded corners=2mm] (0.5,-1.5) rectangle +(3,.9) node[pos=.5] {${\sE}$};
                                  \draw[->,thin] (2,-1.5) -- (2,-2.4);

                      %  \draw[thin, rounded corners=2mm] (-3.,1.75) rectangle +(.5,.5) node[pos=.5] {$\badG$};

      \draw[<->,thin] (1.5,0.5) -- (2.4,0.5);
      
            \draw[->,thin] (1,-.5) -- (1,-0.1);
            \draw[->,thin] (3,-.5) -- (3,-.1);
 
       \draw[thin, dashed, rounded corners=2mm] (.3,-2.1) rectangle +(3.4,3.3) node[pos=.1] {$\sD$};
       
       %new adv box
       
              \draw[thin, dashed,red, rounded corners=2mm] (2.3,-.3) rectangle +(1.9,3) node[pos=.85] {$\cStwo$};

       %     \draw[thin, dashed, rounded corners=2mm,->] (.5,1) -- (2,1) -- (2,1.4) -- (2.5,1.4);
%\draw[thin,dashed,->] (0.2,0.45) .. controls (1,1) and (2.2,.4) .. (3,1.4);5
   %   \draw[thin,dashed,->] (-0.2,0.45) .. controls (.3,1) and (.9,.9) .. (1,1.4);
%\draw[thin,dashed,->] (0.2,0.45) .. controls (-.3,1) and (-.9,.9) .. (-1,1.4);
   %   \draw[thin,dashed,->] (-0.2,0.45) .. controls (-1,1) and (-2.2,.4) .. (-3,1.4);

    %        \draw[thin, rounded corners=2mm] (-.6,-.5) rectangle +(1.2,.9) node[pos=.5] {$D_z$};

    \end{tikzpicture}
    %\vspace{-4mm}
  \end{center}
\caption{
\label{fig:composition-crooked-proof} 
Construction of attacker $\cStwo$ from attacker $\cAtwo$ and simulator $\cS$. }
%\vspace{-4mm}
\end{figure}
}%%%%%%%

%\hnote{I am here!!!}

\paragraph{Restrictions (of using crooked indifferentiability).}
Ristenpart et al.~\cite{EC:RisShaShr11} has demonstrated that the replacement/composition theorem (Theorem~\ref{theorem:composition}) in the original indifferentiability framework only holds in single-stage settings. We remark that, the same restriction also applies to our replacement/composition theorem (Theorem~\ref{theorem:crooked-composition}). We leave it as our future work to extend our crooked indifferentiability to the multi-stage settings where disjoint adversaries are split over several stages.

%%% Local Variables:
%%% mode: latex
%%% TeX-master: "mainTrigger"
%%% End:

%%%%%%%%%%%%%%%%%%%%%
%Construction  

%\newpage
%!TEX root = mainTrigger.TeX

\section{The Construction}
\label{sec:construction}
From this point on, we use $\cD$ rather than $\sD$ to denote the
distinguisher in our crooked indifferentiability model. For a security
parameter $n$ and a (polynomially related) parameter $\ell$, the
construction depends on public randomness $R = (r_1, \ldots, r_\ell)$,
where each $r_i$ is an independent and uniform element of
$\{0,1\}^n$.

The source function of the construction is expressed as a family of
$\ell+1$ independent random oracles:
\begin{align*}
  h_0&: \{0,1\}^{n} \rightarrow \{0,1\}^{3n}\,,\\
  h_i&: \{0,1\}^{3n} \rightarrow \{0,1\}^{n}\,,\qquad\text{for
       $i \in \{1, \ldots, \ell\}$.}
\end{align*}
These can be realized as slices of a single random function
$H: \{0,1\}^{n'} \rightarrow \{0,1\}^{n'}$, with
$n' = 3n + \lceil \log \ell+1 \rceil$ by an appropriate convention for
embedding and extracting inputs and values. In the few cases where we
need to be precise, we write
$\{0, \ldots, L\} \times \{0,1\}^{3n} = \{0,1\}^{n'}$, where
$\ell+1 \leq L = 2^{\lceil \log \ell+1 \rceil}$, and let $[i,x]$
denote a unique element of $\{0, \ldots, L\} \times \{0,1\}^{3n}$ to
implicitly correspond to the element $x$ in the domain for $h_i$: for
concreteness, define $[0,x] = (0,0^{2n}x)$ and $[i,x] = (i, x)$ for
$i>0$. The output of $H$ is treated as a string of the correct length
without any special indication. Given subverted implementations
$\{\tilde{h}_i\}_{i=0,\ldots,\ell}$ (defined as above by the
adversarially-defined algorithm $\tilde{H}$),
%(the original version of each is $h_i(\cdot)$ could be considered as $h(i,\cdot)$) 
the corrected function is defined as:
\[
  \rC^{\tilde{H}^H}(x) \eqdef \tilde{h}_0\left(\bigoplus_{i = 1}^{\ell}
    \tilde{h}_i(x \oplus r_i)\right)\,,
\]
%Finally, the adversary is given the random strings $r_i$ (and the
%random string $z$) and is permitted to adaptively query the functions
%$h_i(\cdot)$. 
where $R = (r_1, \ldots, r_\ell)$ is sampled uniformly after
$\tilde{h}$ is provided (and then revealed to the public). We refer to
this function $\rC$ as $\tilde{h}_R(x)$ and, for the purposes of
analysis, also give a name to the ``internal function''
$\tilde{g}_R(\cdot)$:
\[
\tilde{g}_R(x) = \bigoplus_{i = 1}^{\ell} \tilde{h}_i(x \oplus r_i).
\]
Thus
\[
  \tilde{h}_R(x) \eqdef  \tilde{h}_0(\tilde{g}_R(x)) = \tilde{h}_0\left(\bigoplus_{i = 1}^{\ell} \tilde{h}_i(x \oplus r_i)\right)\,.
\]
We wish to show that such a construction is indifferentiable from an actual random oracle (with the proper input/output length). This implies, in particular, that
values taken by $\tilde{h}_R(\cdot)$ at
inputs that have not been queried have negligible distance from the
uniform distribution.

\begin{theorem}
\label{thm:ind}
We treat a function $h:\{0,1\}^{n'} \rightarrow \{0,1\}^{n'}$, with
$n' = 3n + \lceil \log \ell+1 \rceil$, as implicitly defining a family
of random oracles
\begin{align*}
  h_0& :\{0,1\}^{3n}\rightarrow\{0,1\}^{n}\,, \qquad\text{and}\\
  h_i& :\{0,1\}^n\rightarrow\{0,1\}^{3n}\,, \qquad\text{for $i>0$,}
\end{align*}
by treating $\{0,1\}^{n'} = \{0, \ldots, L-1\} \times \{0,1\}^{3n}$
and defining $h_i(x) = h(i,\cdot)$, for $i=0,\ldots,\ell \leq L-1$.
(Output lengths are achieved by removing the appropriate number of
trailing symbols). We will use the setting $\ell > n+4$.  Consider
a (subversion) algorithm $\tilde{H}$ so that
%  polynomial $p(n)$. Given a random string $z$ of length $p(n)$,
  $\tilde{H}^h(x)$ defines a subverted random oracle $\tilde{h}$. Assume that for every $h$ (and every $i$),  
  \begin{equation}\label{eq:equal}
    \Pr_{x \in \{0,1\}^n}[\tilde{h}(i,x) \neq h(i,x)] \leq \epsilon(n) = \negl(n)\,.
  \end{equation}

The construction $\tilde{h}_R(\cdot)$ is $(n',n,q_{\cD},q_{\tilde{H}},r,\epsilon')$-indifferentiable from a random oracle $F:\{0,1\}^{n}\rightarrow\{0,1\}^n$, where $\epsilon'=O(\ell q_{\hat{D}}q_{\tilde{H}}/\sqrt{2^n}+\sqrt{\ell}q_{\hat{D}}\epsilon^{1/16}).$, $q_{\cD}$ is 
the number of queries made by the distinguisher $\cD$ and $q_{\tilde{H}}$ is 
the number of queries made by $\tilde{H}$ as in Definition \ref{def:indiff-crooked}. $q_{\cD}$ and $q_{\tilde{H}}$ are both polynomial functions of $n$.

\end{theorem}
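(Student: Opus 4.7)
The plan is to construct a simulator $\cS$ that, on input the subverted algorithm $\langle\tilde{H}\rangle$ and the public randomness $R$, lazily maintains tables for all of the oracles $h_0,h_1,\ldots,h_\ell$. For ``inner'' queries $h_i(y)$ with $i\ge 1$ the simulator returns fresh uniform values, recording them; for an ``outer'' query $h_0(v)$, $\cS$ checks whether $v=\tilde{g}_R(x)$ for some $x$ whose entire constellation $\{x\oplus r_i\}_{i=1}^\ell$ has already been resolved through previous queries (simulating $\tilde{H}$ internally if necessary), and if so programs $h_0(v)\leftarrow F(x)$; otherwise it returns a fresh uniform value. By Definition~\ref{def:abbrev-indiff-crooked} and the remark following it, it suffices to work in the abbreviated model and then lift, so I will fix a distinguisher $\cD$ and subversion $\tilde{H}$ up front.

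The correctness argument rests on two structural properties of $\tilde{g}_R$: (i)~for each fresh $x$ the value $\tilde{g}_R(x)$ is statistically close to uniform on $\{0,1\}^{3n}$, conditioned on the entire transcript observed so far, and (ii)~for distinct $x\neq x'$ one has $\tilde{g}_R(x)\neq\tilde{g}_R(x')$, and neither collides with any $v$ the distinguisher has submitted directly to $h_0$. Both properties reduce to identifying, for each $x$, a \emph{good index} $i^*(x)\in\{1,\ldots,\ell\}$ such that $\tilde{h}_{i^*}(x\oplus r_{i^*})=h_{i^*}(x\oplus r_{i^*})$ (the subversion is inactive at that point) and, moreover, the adaptive computation of $\tilde{h}_j(x\oplus r_j)$ for every $j\neq i^*$ never queries $h_{i^*}$ at $x\oplus r_{i^*}$. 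For such an index the contribution of $h_{i^*}(x\oplus r_{i^*})$ to the XOR is an independent uniform summand, forcing $\tilde{g}_R(x)$ to be uniform.

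I would first show that a good index exists for every $x\in\{0,1\}^n$ simultaneously with probability $1-2^{-\Omega(n)}$. The per-$x$ failure probability is controlled by two ingredients: the hypothesis~\eqref{eq:equal} that the fraction of subverted points is negligible (so only a small number of the $\ell$ coordinates are ``corrupt'' in expectation, and by a Chernoff bound with exponential concentration since $\ell>n+4$), and the bound on the total number of oracle queries $\tilde{H}$ can make (so only a small number of coordinates can be ``crossed'' by $\tilde{H}$'s queries while computing another coordinate). A careful union bound over the $2^n$ inputs then gives a global good-index event holding except with exponentially small probability; this is the step the introduction flags as the main obstacle, because the adaptive queries made by $\tilde{H}$ rule out a straightforward exposure-martingale argument, and properties need to be established with error $2^{-cn}$ for $c>1$ rather than merely negligibly. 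Conditioned on the good-index event I invoke the Fourier-analytic near-uniformity lemma hinted at in the introduction to conclude that $\tilde{g}_R(x)$ is $2^{-\Omega(n)}$-close to uniform for each $x$, which yields (i) and, via a birthday bound over the $q_\cD$ distinguisher queries, also (ii) with loss $O(\ell q_\cD q_{\tilde{H}}/\sqrt{2^n})$.

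Finally, I would chain game transitions from $(\rC^{\tilde{H}^H}(\cdot,R),H)$ to $(F,\cS^F(R,\langle\tilde{H}\rangle))$. The first hop conditions on the good-index event; the second replaces, one at a time, the honest value $h_0(\tilde{g}_R(x))$ appearing in the real construction by a freshly sampled uniform string, using (i) to bound the statistical cost and (ii) to ensure no conflict with queries already programmed; the third identifies these freshly sampled strings with the ideal functionality $F(x)$. Accumulating the losses gives the stated bound $\epsilon'=O(\ell q_\cD q_{\tilde{H}}/\sqrt{2^n}+\sqrt{\ell}\,q_\cD\,\epsilon^{1/16})$, where the second summand traces back to the quality of the Fourier bound on $\tilde{g}_R$ in the presence of at most an $\epsilon$ fraction of subverted coordinates. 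Lifting to the full crooked-indifferentiability model follows by the compilation observation in Definition~\ref{def:abbrev-indiff-crooked}: phase-I queries can be absorbed into $\tilde{H}$ or answered by $\cS$ ahead of receiving $R$, since $\cS$'s phase-I responses are just uniform values maintained lazily in its table.
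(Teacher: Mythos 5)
Your overall architecture---a programming simulator, a ``good index'' $i^*(x)$ whose term is honestly answered and is not queried by the evaluations of the other coordinates, a union bound over all $2^n$ constellations with exponentially small per-input error, and a final lift from the abbreviated to the full model---matches the paper's. But there is a genuine gap: you never deal with the subversion of the \emph{outer} function $h_0$. The construction outputs $\tilde{h}_0(\tilde{g}_R(x))$, not $h_0(\tilde{g}_R(x))$; if $\tilde{g}_R(x)$ lands in the (negligible but adversarially chosen) region where $\tilde{h}_0\neq h_0$, the output is fixed by $\tilde{H}$ and cannot be programmed to $F(x)$. This cannot be dismissed by near-uniformity of $\tilde{g}_R(x)$ alone, because the subverted region of $h_0$ is itself a function of $h_*$: an evaluation $\tilde{h}_0(y)$ may query the very good term $h_{i^*}(x\oplus r_{i^*})$ whose freshness you use to argue uniformity, so resampling that term moves both $\tilde{g}_R(x)$ and the bad set. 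The paper isolates this as the crisis event $\subv$ and handles it with the ``silent term'' notion---an invisible term that is additionally queried by at most $2^{2.5n}q_{\tilde{H}}$ evaluations of $\tilde{h}_0$---splitting the bad set into the part independent of the resampled value (density at most $\epsilon$) and the part that queries it (at most $2^{2.5n}q_{\tilde{H}}$ points). This is where the dominant term $O(\ell q_{\cD}q_{\tilde{H}}/\sqrt{2^n})$ of $\epsilon'$ actually comes from; your attribution of that term to a birthday bound on collisions is not correct.

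Two further issues. (a) You omit the self-referential event: while completing the constellation of $x$, the algorithm $\tilde{H}$ may itself query $h_0$ at the very point $\tilde{g}_R(x)$ that the simulator must subsequently program; the paper bounds this ($\selfref$) by a separate Fourier-analytic argument over the randomness of $R$. (b) Your unpredictability step asserts that the good term contributes ``an independent uniform summand,'' but conditioning on a term being good/invisible is conditioning on its value and on the adaptive queries made while evaluating the other coordinates, so independence is precisely what fails naively---this is the exposure-martingale obstacle you cite but do not resolve. The paper makes it rigorous by a partial-assignment/resampling count (of the $2^{3n}$ completions of the good term, at least a $1-\epsilon^{1/4}$ fraction remain invisible while at most $|T[k-1]|$ cause a table hit), combined with a Markov argument over ``regular'' transcripts; the Fourier lemma you invoke instead concerns the distribution of $\tilde{g}_R(x)$ over random $R$ with $h_*$ fixed, and is not applicable once the transcript (which reveals $R$) has been conditioned on.
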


\paragraph{Domain extension.} It is shown in~\cite{C:CDMP05} that an arbitrary-length random oracle can be constructed from a fixed-length one in the indifferentiability framework, thus our result can be easily generalized to handle the case for correcting a subverted arbitrary input length hash.

% Specifically, consider the following sequence of events:
% \begin{enumerate}
% \item The adversary establishes a polynomial time ``subversion'' of the random oracle $H^\cdot()$; this is a function $H^h(x)$ which, given oracle access to $h$, computes a length-preserving function on $\{0,1\}^*$. As the implementation $H^h$ must pass the scrutiny of the Watchdog, we have that
% \[
% \Pr_{x \in \{0,1\}^n}[h(x) \neq H^h(x)] = \negl(n)\,.
% \]
% For simplicity, we assume that $H$ is deterministic and define $\tilde{h}(x) = H^h(x)$.
% \item The adversary determines a target event $T$, given by a polynomial-time computable Boolean function; we let $\mu(E) = \Pr[x \in T]$.
% \item A length-preserving random oracle $h: \{0,1\}^* \rightarrow \{0,1\}^*$ is drawn uniformly at random. (The probability law is given by assigning $h(x)$, for any input $x$ of length $n$, to be an independently and uniformly drawn element of $\{0,1\}^n$.)
% \item The adversary may adaptively query $h$ polynomially many times.
% \item The adversary produces an element $x$ and wins the hitting game if $\tilde{h}(x) \in T$. 
% \end{enumerate}
% As mentioned above, it is easy for the adversary to win this hitting game, even if $\mu(D)$ is very small.

%For clarity, the adversary is determined by two algorithms $H$ and $Q$---the first, $H$, is responsible for defining the subversion of $h$; the second $Q$ is responsible for generating the queries. The algorithms share a common random string $z$ which can help them couple their efforts.
%We will show the following. \qnote{deal with it}

\ignore{%%%%%%
\begin{theorem}\label{thm:main} Consider a pair of algorithms $(H, Q)$ and a
  polynomial $p(n)$. Given a random string $z$ of length $p(n)$,
  $H^h(z; x)$ defines a subverted random oracle $\tilde{h}$, as above; the
  algorithm $Q^h(z;R)$, also given access to the randomness $R$
  defining $\tilde{h}$, generates a family of adaptive queries
  $q_1, \ldots, q_s$. Assume that for all $h$ and $z$,
  \begin{equation}\label{eq:test}
    \Pr_{x \in \{0,1\}^n}[\tilde{h}(x) \neq h(x)] = \negl(n)\,.
  \end{equation}
  Then, with high probability in $R$ and $z$ and conditioned on the
  $h(q_1), \ldots, h(q_s)$, for all $x$ outside the ``queried'' set $\{x \mid \text{$h_i(x \oplus r_i)$ was queried}\}$,
  %\begin{enumerate}
  %\item for any event $E$ with negligible probability,
    %$\Pr[\tilde{g}_R(x) \in E]$ is negligible, and
  %\item 
  the distribution of $\tilde{h}_R(x)$ has negligible
    statistical distance from uniform.
  %\end{enumerate}
\end{theorem}
}%%%%%%%%

\medskip
\noindent{\bf Roadmap for the proof.} We first describe the simulator algorithm. The main challenge for the simulator is to ensure the consistency of two ways of generating the output values of the construction.
%in our context, $\tilde{h}_R(\cdot)$, (it could also be reconstructed by querying the original %random oracle directly together with the subverted implementation $\tilde{h}$ to replace the %potentially corrupted terms). 
The idea for simulation is fairly simple: the $h_i()$, for $i > 0$,
are treated as random functions; $h_0$ is programmed to suitably agree
with $F$. Specifically, once the value $\tilde{g}_R(x)$ is determined,
the value $F(x)$ is used to program $h_0$ on input
$\tilde{g}_R(x)$. Our simulator eagerly ``completes'' any
constellation of related points once any of its constituent points
have been queried by the distinguisher; in particular, any query to
$h_i(x \oplus r_i)$ prompts the simulator to evaluate
$\tilde{h}_j(x \oplus r_j)$ for all $j$; this provides enough
information to properly program $h_0(\cdot)$ at the desired location
$\tilde{g}_R(x)$ to ensure consistency with $F(\cdot)$. This convention
leads to a relatively simple invariant maintained during a typical
interaction with the distinguisher: all constellations thus far
touched by the distinguisher are correctly programmed. Of course,
proving that this simulation doesn't run into consistency
problems---for example, a circumstance where $h_0(y)$ has
unfortunately already been assigned a value at the moment it would be
appropriate to program it---is part of the proof of correctness.

%There are  two obstacles that hinder the simulation: (1) for some $x$, $h_0$ has been queried on $\tilde{g}_R(x)$ before, thus when the actual programing step comes, the simulator has to abort; (2) the distinguisher queries on some input $x$ such that $\tilde{g}_R(x)$ falls into the incorrect portion of $\tilde{h}_0$. 

There are two fundamental obstacles that hinder the simulation: (1)
for some $x$, after completing the constellation associated with $x$,
the simulator finds that $h_0$ has been previously queried on
$\tilde{g}_R(x)$---thus the simulator's hands are tied when it comes
time to program this value of $h_0$ (to be consistent with $F$);
(2) the distinguisher queries some input $x$ such that
$\tilde{g}_R(x)$ falls into the incorrect (subverted) portion of
$\tilde{h}_0$. It's convenient to further separate various specific
patterns of behavior that can give rise to collisions as described in
(1) above. These are reflected in a sequence of four ``hybrid games'' that interpolate between the two interactions of interest---the result of the distinguisher when interacting with the construction and the result when interacting with the simulator.

To further simplify the proof, we initially focus on the abbreviated
variant of crooked indifferentiability where the distinguisher does
not have the luxury of making preliminary queries to the hash function
while it formulates its subversion algorithm $\tilde{H}$. This places
all queries of the distinguisher on the same footing and somewhat
simplifies bookkeeping. We then return at the conclusion of the main
proof show that the same techniques can be applied to the full
setting.

\ignore{
{\color{red} Outline of Section 4: Security Proof}

Section 4.1\\
The description of the simulator algorithm\\

Section 4.2\\
1. To prove indifferentiability, we introduce a sequence of four games, where Game \ref{Game:construction} is the read construction and Game~\ref{Game 4} is the simulator.\\ 
2. In Game~\ref{Game:construction-clean}, we define three crisis events $\subv$, $\pred$ and $\selfref$.\\
3. The gap between Game \ref{Game:construction} and Game~\ref{Game 4} is shown to be bounded by the probability that one of the crisis events occur in Game~\ref{Game:construction-clean}. \\

Section 4.3\\
We prove all the three crisis events are negligible. 
}

\section{Security Proof}
\label{sec:analysis}
We begin with an abstract formulation of the properties of our construction, and then transition to the detailed description of the simulator algorithm and its effectiveness.

% an abstract formulation of the rejection resampling lemma, and then transition to the main proof.
Anticipating the full description of the simulator, we formally set
down the notion of a ``constellation'' of points.
\begin{definition}[Constellation]
  For a fixed $R = (r_1, \ldots, r_\ell)$ and an element
  $x \in \{0,1\}^n$, the \emph{constellation} of $x$ is the set of
  points $\{ [i,x \oplus r_i] \mid 0 < i \leq \ell\}$; that is, the
  set includes, for each $i > 0$, the point in the domain of (the
  implicitly defined) $h_i$ defining $h_i(x \oplus r_i)$. While this
  depends on $R$, we suppress this as it can always be inferred from
  context.
\end{definition}

\subsection{The simulator}
\label{sec:simulator}
The major task of the simulator is to ensure the answers to appropriate $h_0$ queries are consistent with the value of $F(\cdot)$.
%since, for each input $x$, $\tilde{h}_R(x)$ is determined by a sequence of related queries to %$\{h_i\}$ and the value of $R$. 
In general, queries to the $h_i$, for $i > 0$, are simply treated as
random functions and lazily evaluated and cached as usual. The
function $h_0$ must be given special treatment: ideally, the simulator
would like to ensure the equality
$\tilde{h_0}(\tilde{g}_R(x)) = F(x)$. This poses some challenges
because the simulator cannot typically identify---for a particular $y$
in question---an $x$ for which $y = \tilde{g}_R(x)$; thus there will
be circumstances where the simulator simply cannot correctly program
$h_0(y)$ with the corresponding $F(\cdot)$ value. A further challenge
is that values of $\tilde{h}_0(\cdot)$ cannot, in general, be
``programmed'' to agree with $F$ in circumstances where $\tilde{h_0}$
has been subverted. To minimize the chance that one of the simulator's
``inconsistencies'' is observable by the distinguisher, the simulator
aggressively computes $\tilde{g}_R(x)$ for any $x$ whose constellation
$\{ h_i(x \oplus r_i)\}$ the distinguisher (even partially)
examines. In particular, if the distinguisher queries
$h_i(x \oplus r_i)$, the simulator will immediately determine the
values for all $\tilde{h}_j(x \oplus r_j)$ and, if possible, correctly
program $h_0(\cdot)$ at $\tilde{g}_R(x)$. In this way, the simulator
typically guarantees that constellations which the distinguisher
has examined are correctly programmed. Of course, the distinguisher
may speculatively query $h_0$ on other locations but, with this
simulator, it will not be easy for the distinguisher to force a lie at
a location that will result in a future inconsistency (so long as the
game only runs for polynomially many steps). A similar difficulty
arises in cases where the distinguisher queries an $h_i(x \oplus r_i)$
in a constellation for which $\tilde{h}_0(\tilde{g}_R(x))$ has been
subverted; again, this situation may interfere with the simulator's
ability to properly program $\tilde{h}_0$. Likewise, we will see that
such queries are difficult for the distinguisher to discover.

%The basic idea is to program the external layer $h_0$ using values of $F(x)$, and the value %$F(x)$ is set for $h_0(\tilde{g}_R(x))$. The value $\tilde{g}_R(x)$ is obtained by $\cS$ %querying the subverted implementations $\tilde{h}$.

Formally, we define the simulator $\cS$ (for the abbreviated model)
below. After the proof that this simulator achieves (abbreviated)
crooked indifferentiability (Definition~\ref{def:indiff-crooked}), we
show that the simulator can be lifted to one that achieves full
indifferentiability (Definition~\ref{def:abbrev-indiff-crooked}).

\begin{mdframed}

\begin{center}
  \textsc{The Abbreviated Simulator}
\end{center}

The simulator is provided $R$, the randomness associated with the
security game, and $\langle\tilde{H}\rangle$, the subversion
algorithm. As in our previous discussions, we implicitly associate $H$
with the family of functions $h_i$ for $i \geq 0$ with particular
conventions for padding or embedding inputs and outputs to realize the
variety of domains and ranges for the $h_i$; for simplicity we assume
that all queries (made by $\cD$ or $\tilde{H}$) to $H$ have a unique interpretation
as a query to an $h_i$. Roughly, $\cS$ ``lazily'' maintains the random
oracles via tables that are populated according to conventions that
minimize the risk of a conflict arising from its handling of the
function $h_0$.

\begin{center}
\begin{tabular}{|c|c|}
  \hline
  \multicolumn{2}{|c|}{$h_0$} \\ \hline
  Query  & $h_0(x_i)$  \\ \hline
  $x_1$  & $v_{1,0}$  \\ \hline
  $\vdots$ & $\vdots$  \\ \hline
  $x_{q_0}$ & $v_{q_0,0}$  \\ \hline
 % 7 & 80 &100& 800 & 4.35  & 94.08  &0.54\%&11.76\%\\ \hline
\end{tabular}
\begin{tabular}{|c|c|}
  \hline
  \multicolumn{2}{|c|}{$h_1$} \\ \hline
  Query  & $h_1(x_i)$  \\ \hline
  $x_1$  & $v_{1,1}$  \\ \hline
  $\vdots$ & $\vdots$  \\ \hline
  $x_{q_1}$ & $v_{q_1,1}$  \\ \hline
 % 7 & 80 &100& 800 & 4.35  & 94.08  &0.54\%&11.76\%\\ \hline
\end{tabular}\quad\ldots\quad
\begin{tabular}{|c|c|}
  \hline
  \multicolumn{2}{|c|}{$h_\ell$} \\ \hline
  Query  & $h_\ell(x_i)$  \\ \hline
  $x_1$  & $v_{1,\ell}$  \\ \hline
  $\vdots$ & $\vdots$  \\ \hline
  $x_{q_\ell}$ & $v_{q_\ell,\ell}$  \\ \hline
 % 7 & 80 &100& 800 & 4.35  & 94.08  &0.54\%&11.76\%\\ \hline
\end{tabular}
\captionof{table}{\label{table:query} Tables maintained by $\cS$.}
\end{center}
%\vspace{-6mm}

$\cS$ responds to queries from $\cD$ with following procedure, which
also describes the mechanism for maintaining the value tables.
\begin{itemize}
\item All tables are initially empty.
\item A query $x$ to $h_i()$ is answered as follows. If $i=0$ and
  the table for $h_0$ holds a value for $x$, this is returned and no
  further action is taken. If $x$ does not appear in the table, a
  uniformly random value $v$ is drawn from $\{0,1\}^n$, added to the
  table for $h_0$, and returned.  Otherwise, $i>0$ and the situation
  is more complicated, as $\cS$ wishes to compute the value
  $\tilde{g}$ associated with this query. In this case, the simulator
  then proceeds as follows.
  \begin{enumerate}
  \item \label{step:constellation} $\cS$ determines the ``adjusted''
    query $x' := x \oplus r_{i}$, and ensures that all points in the
    constellation---that is $h_j(x' \oplus r_j)$ for $j > 0$---have
    assigned values: specifically, for each $j$, if the result of
    $h_j(x' \oplus r_j)$ has not yet been determined, this value is
    drawn uniformly and added to the table.
  \item The next task is to evaluate $\tilde{g}_R(x')$, the
    corresponding query to $h_0$, which would ideally be programmed to
    agree with $F$. For this purpose, $\cS$ then runs the
    implementation $\tilde{h}_j$ (using $\tilde{H}$) on
    $x' \oplus r_j = x \oplus r_{j}\oplus r_i$, for all
    $j \in [\ell]$, to derive the value
    \[ \tilde{g}_R(x')=\bigoplus_{j=1}^\ell
      \tilde{h}_j({x}'\oplus r_j),
    \]
    where $[\ell]$ is used to denote the set $\{1,...,\ell\}$.
    
    During the execution of $\tilde{H}$ on those inputs, $\cS$ must
    determine any additional random oracle queries issued by the
    implementation $\tilde{H}^{h_*}$. These are answered by a less
    aggressive convention: $\cS$ first checks whether the query, say
    $h_k(y)$, appears in the table; if so, the stored valued is
    used; if not, $\cS$ uses a uniformly random value $u$ as
    answer and records it in the table. (Note that this does not lead
    to forced evaluation of the full constellations in which these
    queries lie.)
  \item $\cS$ checks whether $\tilde{g}_R(x')$ has been previously
    queried in $h_0$; if not, $\cS$ properly programs $h_0$:
    specifically, $\cS$ queries $F$ on $x'$, collects the response
    $F(x')$, and assigns $h_0(x') = F(x')$. Finally, $\cS$ returns to
    the distinguisher the value appearing in the appropriate table to
    the original query $x$. Note that a value always appears at this
    point in light of Step~\eqref{step:constellation} above.  (N.b.,
    in many cases, for convenience, we assume that $\cS$ returns to
    the distinguisher all values in the constellation of $x$---that
    is, all $h_j(x' \oplus r_i)$ for $j > 0$.)
\end{enumerate}
\end{itemize}
\end{mdframed}

\paragraph{\bf Probability analysis.} We prove indifferentiability by
introducing a sequences of four games that connect the real
construction and the simulator. In Game~\ref{Game:construction-clean}, we define three
crisis events $\subv$, $\pred$ and $\selfref$, which play a central
role in distinguishing Game \ref{Game:construction}, the real construction, from
Game \ref{Game 4}, the simulator interaction game. According to
Theorem \ref{thm:indiff}, the gap between $(\rC, H)$ and $(\cF, \cS)$
is bounded by $\Pr[\subv]+\Pr[\pred]+\Pr[\selfref]$.

Lastly, the three events are shown to be negligible in Theorem \ref{thm:selfref}, Theorem \ref{thm:unpred.}, and Therorem \ref{thm:subv}.

\subsection{The hybrid games; the crisis events}

\subsubsection{The games}

We introduce the security games used to organize the main
proof. Game~\ref{Game:construction} corresponds to interaction with the
construction, while Game~\ref{Game 4} corresponds to interaction with
the simulator described above. The intermediate games correspond to
hybrid settings that are convenient in the proof. The main ``crisis
events'' that reflect circumstances where the simulator could fail to
provide consistency are defined with respect to Game~\ref{Game:construction-clean}.

The description of each game indicates how queries to the two oracles
with which $\cD$ interacts are answered. We adopt the notational
convention that $\mathcal{A}$ denotes the rule for answering oracle
queries to $F$---corresponding to the construction and the ideal
functionality---while $\mathcal{B}$ denotes the rule for answering
oracle queries to $h_i$---corresponding to $H$ and the simulator. For expository
purposes, several variants are presented for each game; these variants
all yield precisely the same transcript with the distinguisher but
reflect differing internal conventions for generating the responses.

\begin{game}[The construction]\label{Game:construction}  $\mathcal{A}=\mathcal{C}^H,\mathcal{B}=H$. To most
  easily compare with the following games, we explicitly express the
  game as a ``data preparation step'' and an ``interaction'' step:
  \begin{enumerate}
  \item Data preparation: Uniformly select the value $h_i(x)$ for each
    $0 \leq i \leq \ell$ and each $x$ in the domain.
  \item Interaction: If $\cD$ queries $\mathcal{A}$ for $x$,
    $\mathcal{A}$ returns {\color{red} $\tilde{h}_{0}(\tilde{g}_R(x))$}; if $\cD$
    queries $\mathcal{B}$ for $(i,x)$, $\mathcal{B}$ returns
    $h_{i}(x)$.
  \end{enumerate}
\end{game}

\noindent\emph{Some conventions and terminology}. In general, we say that ``the distinguisher $\cD$ queries the constellation of 
$x$'' (for $x \in \{0,1\}^n$) when $\cD$ queries any
$h_i(x \oplus r_i)$ term for $i>0$. (This is consistent with the
definition of constellation above.) Beginning with
Game~\ref{Game:construction-clean}.2, we consider procedures for
maintaining function values by lazy evaluation. In this setting we
define the notion of \emph{completing} the constellation of $x$ as the
following procedure: $h_i(x \oplus r_i)$ is evaluated for all elements
in the constellation---that is, they are assigned to uniformly random
values if they have not yet been assigned---followed by evaluation of
$\tilde{h}_i(x \oplus r_i)$ (using the subversion algorithm) for each
$x \oplus r_i$ in the constellation. The evaluation of
$\tilde{h}_i(x_i \oplus r_i)$ will in general require evaluation of
$h_i()$ at a collection of further points, which are likewise assigned
to new uniformly random values unless they have already been
assigned. We also use the notation $\tilde{g}_R(x)$ to denote the sum
of $\tilde{h}_j({x}\oplus r_j)$ for $0<j \leq \ell$. This is directly
analogous to the ``aggressive'' constellation evaluation discussed in
the formal discussion of the simulator above.

The notation $\tilde{g}_R(\cdot)$ described above deserves special
consideration. In Game~\ref{Game:construction}.1, $\tilde{g}_R(x)$ can
be unambiguously defined for all $x$ (as $\tilde{h}_i()$ can be
unambiguously defined); the same can be said of
Game~\ref{Game:construction-clean}.1, below. For subsequent games,
there is no immediate way to define $\tilde{g}_R(x)$ for all $x$, as
these values may depend on the queries made by $\cD$; however, the
syntactic definition of $\tilde{g}_R(x)$ (as the sum of appropriate
$\tilde{h}_j({x}\oplus r_j)$) is well-defined for all constellations
that have been completed by $\cS$; thus, we continue to use this
notation to indicate this sum, with the understanding that it is only
meaningful in settings where the constellation has been completed.

For convenience, we assume the distinguisher does not make repetitive
queries.
\begin{game}[The construction with unsubverted $h_0$]
  \label{Game:construction-clean} Game~\ref{Game:construction-clean} has two variants. Game~\ref{Game:construction-clean}.1 is identical to Game~\ref{Game:construction}
    with the exception that the construction is replaced with an
    idealized one that uses the unsubverted $h_0$. Game~\ref{Game:construction-clean}.2 merely shifts the perspective of Game~\ref{Game:construction-clean}.1 to one involving tables.
  \begin{itemize}
  \item\textbf{Game~\ref{Game:construction-clean}.1:}
    \begin{enumerate}
    \item Data preparation: Uniformly select the value $h_i(x)$ for each $0 \leq i \leq \ell$ and each $x$ in the domain.
    \item Interaction: If $\cD$ queries $\mathcal{A}$ for $x$,
      $\mathcal{A}$ returns {\color{red} $h_{0}(\tilde{g}_R(x))$}; if $\cD$ queries
      $\mathcal{B}$ for $(i,x)$, $\mathcal{B}$ returns $h_{i}(x)$.
    \end{enumerate}
  \item\textbf{Game~\ref{Game:construction-clean}.2:} $\mathcal{A}$ and $\mathcal{B}$ maintain tables,
    $T_F$ and $T_H$, respectively, storing the point evaluations of
    $F$ and $H$ arising during the interaction. Initially, the tables
    are empty. Interaction with $\cD$ proceeds as follows:
      \begin{itemize}      
      \item[-] If $\cD$ queries $\mathcal{B}$ for $(i,x)$: if
        $h_{i}(x)$ appears in the table $T_H$, return the recorded
        value; otherwise, draw a uniform value, assign $h_{i}(x)$ to
        the value by adding this entry to the table $T_H$, and return
        it. Additionally, if $i>0$, complete the constellation of
        $x \oplus r_i$ and uniformly assign
        $h_0(\tilde{g}_R(x \oplus r_i))$ in $T_H$ unless it has been
        previously assigned. {\color{red}If $F(x)$ is unassigned in $T_F$, it is
        assigned to the value $h_0(\tilde{g}_R(x \oplus r_i))$.} Otherwise $F(x)$ was already
      assigned and this conflict is not explicitly managed.
      \item[-] If $\cD$ queries $\mathcal{A}$ for $x$: if $F(x)$
        appears in the table $T_F$, return the recorded value;
        otherwise, complete the constellation of $x$ and {\color{red}properly
        program $F(x)$ at the value $h_0(\tilde{g}_R(x))$}. Specifically,
        completion determines the value $\tilde{g}_R(x)$; if
        $h_0(\tilde{g}_R(x))$ is unassigned, it is assigned a random
        value in $T_H$. Finally, $\mathcal{A}$ returns
        $h_{0}(\tilde{g}_R(x))$ and $F(x)$ is assigned to this value
        in $T_F$.
      \end{itemize}
    \end{itemize}  
\end{game}

For technical reasons later, we fill in the empty cells of $T_F$ and $T_H$ in Game \ref{Game:construction-clean}.2 after the interaction. Empty cells of $T_H$ are filled in by uniformly chosen values while empty cells of $T_F$ are filled in by $F(x):=h_{0}(\tilde{g}_R(x))$. The resulting $T_H$ is denoted by $h_*$. Also, in Game \ref{Game:construction-clean}.2, we say $y \in \{0,1\}^{3n}$ is in the \emph{subverted area} of $h_0$ if, in $h_*$, $\tilde{h}_0(y) \neq h_0(y)$.

% Note that, as written, the assignments to $F$ in Game~\ref{Game:construction-clean}.2
% are carried out regardless of any existing values in the table; in
% fact, it is easy to see that any overwritten value is identical to
% the value that replaces it.
% \noindent\emph{Further remarks on terminology}. The task of completing a constellation, as described above, plays a distinguished role in Game~ref{Game 2}.2 (and Game~ref{Game 2}.3) above. For an element $x \in \{0,1\}^n$, we let $c_x$ denote the first 

\begin{game}\label{Game 3} Game~\ref{Game 3} also has two variants. Game~\ref{Game 3}.1 is a small pivot from Game~\ref{Game:construction-clean}.2 which
  handles conflicts by a different convention. Game~\ref{Game 3}.2 is identical
    to Game~\ref{Game 3}.1 with all randomness selected prior to the game.
  \begin{itemize}
  \item \textbf{Game~\ref{Game 3}.1:} $\mathcal{A}$ and $\mathcal{B}$ maintain tables, $T_F$ and $T_H$ respectively, storing the evaluated terms during the interaction. Initially, the data sets are empty. Interaction with $\cD$ proceeds as
    follows:
    \begin{itemize}
    \item[-] If $\cD$ queries $\mathcal{B}$ for $(i,x)$: if
        $h_{i}(x)$ appears in the table $T_H$, return the recorded
        value; otherwise, draw a uniform value, assign $h_{i}(x)$ to
        the value by adding this entry to the table $T_H$, and return
        it. Additionally, if $i>0$, complete the constellation of
        $x \oplus r_i$ and uniformly assign
        $F(x)$ in $T_F$ unless it has been
        previously assigned. {\color{red}If $h_0(\tilde{g}_R(x \oplus r_i))$ is
      unassigned in $T_H$, it is assigned to the value $F(x)$} and we say that
      $h_0$ has been \emph{programmed}.  Otherwise $h_0$ was already
      assigned and this conflict is not explicitly managed.
    \item[-] If $\cD$ queries $\mathcal{A}$ for $x$: if $F(x)$
        appears in the table $T_F$, return the recorded value;
        otherwise,uniformly
      select assign $F(x)$ and return this
      value, complete the constellation of $x$ and {\color{red}properly
        program $h_0(\tilde{g}_R(x))$ at the value $F(x)$}. Specifically,
        completion determines the value $\tilde{g}_R(x)$; if
        $h_0(\tilde{g}_R(x))$ is unassigned, it is assigned to $F(x)$ and we say that $h_0$ has been
      \emph{programmed}; otherwise $h_0$ was already assigned and this
      conflict is not explicitly managed.
    \end{itemize}
  \item \textbf{Game~\ref{Game 3}.2:} Game~\ref{Game 3}.2 is {\color{red} identical
    to Game~\ref{Game 3}.1, but all randomness is selected \emph{a
      priori}}; note that responses are still defined in terms of
    tables as defined in 3.1. Two data sets, $\DS_1$ and $\DS_2$, are
    drawn: $\DS_1$ contains uniformly selected values for each $F(x)$
    and $h_i(x)$ for all $0 < i \leq \ell$ and all $x \in \{0,1\}^n$;
    $\DS_2$ contains uniformly selected values $h_0(y)$ for all
    $y \in \{0,1\}^{3n}$. (Note that all entries are selected
    independently and uniformly.) The game is a straightforward
    adaptation of Game~\ref{Game 3}.1: When $\mathcal{A}$ or
    $\mathcal{B}$ generate values (for table insertion) for $T_F$ or
    $T_H$ (for $h_i$ for $i > 0$), these values are drawn from
    $\DS_1$. The function $h_0$ is treated specially: If $h_0$ is
    programmed at $y$, the value $h_0(y)$ is drawn from (the table for
    $F$ in) $\DS_1$. Otherwise the value is drawn from
    $\DS_2$. 
    %(Operationally, the $r$ values drawn in Game~\ref{Game
    %  3}.1 are given by the values for $F$ of $\DS_1$.)
    % specially: If $\mathcal{B}$ is queried at an unassigned value of
    % $h_0(y)$, the value is drawn from $\DS_2$.  As for the other case
    % of assignment to $h_0$ in Game~\ref{Game
    %   3}.1, where $h_0$ is determined
    % during a query to $\mathcal{A}$, the values are drawn from
    % $\DS_1$. Operationally, the $r$ values drawn in Game~\ref{Game
    %   3}.1 are given by the values for $F$ of $\DS_1$.
  \end{itemize}
  \end{game}
  
  \begin{game}[The abbreviated simulator] \label{Game 4} Observe that
    queries to $\mathcal{A}$ (that is $\mathcal{F}$) in the interaction between
    $\cD$ and $(\mathcal{F}, \mathcal{S}^{\mathcal{F}})$ do not result
    in associated queries to $\mathcal{S}$ (cf.\ Game~\ref{Game 3}). Game~\ref{Game 4} is
      {\color{red}identical to the interaction between $\cD$ and $(\mathcal{F}, \mathcal{S}^{\mathcal{F}})$, but all randomness is selected
      \emph{a priori}}. Prepare two data sets $\DS_1$ and $\DS_2$ :
      $\DS_1$ contains uniformly selected values for $F(x)$ and
      $h_i(x)$ for all $0<i \leq \ell$ and all $x \in \{0,1\}^n$;
      $DS_2$ contains uniformly selected $h_0(y)$ for all
      $y \in \{0,1\}^{3n}$. (Note that all data are selected
      independently and uniformly.) Game~\ref{Game 4}.1 is then
      carried out with the following conventions for randomness: All
      queries to $F$ (either direct responses from $\mathcal{A}$ or
      queries by $\mathcal{B} = \mathcal{S}$, the simulator) are
      determined by $\DS_1$. When $\mathcal{B}$ evaluates $h_i$ for
      $i > 0$ and records these assignments in $T_H$, values are
      likewise drawn from $\DS_1$.  When $\mathcal{B}$ assigns
      $h_0(y)$ in $T_H$ to a ``fresh random value,'' this is drawn from
      $\DS_2$; when $h_0(y)$ is programmed in $T_H$ to coincide with
      $F$, this value is drawn from $\DS_1$. 
\end{game}

We define the following \emph{crisis events} with respect to a
distinguisher $\cD$ in Game~\ref{Game:construction-clean}.2.
\begin{align*}  
  \pred & =\left\{\parbox{13cm}{for some $x$, the distinguisher queries the constellation of $x$ for the first time and $h_0(y)$, where $y = \tilde{g}_R(x)$, has already been assigned a value by the simulator}\right\}\,,\\
  \subv &=\{\text{the distinguisher queries a constellation $x$ such that $\tilde{g}_R(x)$ falls into the subverted area of $h_0$} \}\,,\\
%  \pred & =\left\{\parbox{13cm}{$h_0(y)$ is evaluated in the game and at some point later, the distinguisher queries a constellation $x$ such that $y=\tilde{g}_R(x)$} \right\}\,,\\
  \selfref &=\left\{\parbox{13cm}{for some $x$, the distinguisher queries a
             constellation $x$ such that $h_0$ at $y = \tilde{g}_R(x)$ is
             evaluated during completion of the constellation $x$}
             \right\}\,.
\end{align*}
Note that the crisis event $\pred$ involves the \emph{first} query
made to the constellation of $x$. (This distinction can be ignored
under the assumption that the simulator returns all constellation
values to the distinguisher which are treated as regular distinguisher
queries and so will not be repeated.)
%For convenience, we use $\Pr[\subv]$ to denote the maximal probability
%of \subv\ among all versions of Game~\ref{Game:construction-clean} (taken over all
%distinguishers). We apply the same convention to
%\pred\ and \selfref.

\begin{definition}[Transcript of a game]
  For one of the games above, a distinguisher $\cD$ and randomness
  $R$, we define the \textbf{$R$-transcript} (or \textbf{transcript})
  of the game as the random variable given by $R$ and the ordered
  sequence of (query, answer) pairs. We encode (query, answer) pairs
  as tuples of the form $(x,F,y)$ or $(x,h_i,y)$, which indicates that
  $F(x)=y$ and $h_i(x)=y$, respectively. For the transcript $\alpha$,
  we let $\alpha[k]$ denote the first $k$ pairs in $\alpha$. As a
  reminder that a transcript $\alpha$ determines $R$, we let
  $\alpha[0]=R$.
\end{definition}

The notational convention that $\alpha[0] = R$ is occasionally useful
in induction proofs of equality between two transcripts---equality at
zero indicates that they correspond to the same randomness.

The analysis focuses on the transcripts of a fixed deterministic
distinguisher arising from its interactions in Game
\ref{Game:construction} through \ref{Game 4}.  As mentioned above, and
discussed more fully below, the variants of each game (e.g., Games
\ref{Game:construction-clean}.1 and \ref{Game:construction-clean}.2) yield precisely the same
distributions of transcripts. (Indeed, for the same random choices
these games yield identical transcripts.) To reflect this, for a fixed
deterministic distinguisher we adopt the notation $\alpha_i$, for
$i=1,2,3,4$, to denote the transcript arising from interaction given
by Game $i$.

Without loss of generality, we assume at the end of the interaction
there exists no $x \in \{0,1\}^n$ such that $F(x)$ is queried but the
constellation of $x$ is not. Observe that any distinguisher may be
placed in this normal form by appending any necessary extra queries at
the end of its native sequence of queries to appropriately query every
constellation associated with a query to $F$.

\begin{theorem}\label{thm:indiff} For any distinguisher $\cD$, 
  $\displaystyle\|\alpha_1-\alpha_4\|_{\tv} \leq
  \Pr[\subv]+\Pr[\pred]+\Pr[\selfref]$.
\end{theorem}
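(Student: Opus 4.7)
The plan is to establish the bound via the triangle inequality
\[
\|\alpha_1-\alpha_4\|_{\tv} \leq \|\alpha_1-\alpha_2\|_{\tv} + \|\alpha_2-\alpha_3\|_{\tv} + \|\alpha_3-\alpha_4\|_{\tv},
\]
and to bound each consecutive term by an appropriate coupling, yielding $\Pr[\subv]$, $\Pr[\pred]+\Pr[\selfref]$, and $0$ respectively.

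First I would couple Game~\ref{Game:construction} with Game~\ref{Game:construction-clean}.1 by sharing a single family of random functions $\{h_i\}_{i=0}^{\ell}$ between the two executions. All $\mathcal{B}$-queries then return identical values in the two games, so any transcript divergence must originate at some $\mathcal{A}$-query, where Game~\ref{Game:construction} returns $\tilde{h}_0(\tilde{g}_R(x))$ while Game~\ref{Game:construction-clean}.1 returns $h_0(\tilde{g}_R(x))$. These two values coincide unless $\tilde{g}_R(x)$ lies in the subverted area of $h_0$, which is precisely the event $\subv$ (note that evaluating $\mathcal{A}(x)$ implicitly accesses the constellation of $x$, matching the definition). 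Hence $\|\alpha_1-\alpha_2\|_{\tv}\leq\Pr[\subv]$.

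Next I would couple Game~\ref{Game:construction-clean}.2 with Game~\ref{Game 3}.1 by sharing the randomness used for each uniform assignment. In Game~\ref{Game:construction-clean}.2, when the constellation of $x$ is first completed, the simulator draws $h_0(\tilde{g}_R(x))$ uniformly and then propagates this value to $F(x)$; in Game~\ref{Game 3}.1, the simulator instead draws $F(x)$ uniformly and propagates it back to $h_0(\tilde{g}_R(x))$. Both primary variables are uniform, so I couple them to take the same value; under this coupling the invariant $F(x)=h_0(\tilde{g}_R(x))$ holds for every completed constellation and the tables $T_F,T_H$ evolve identically turn by turn. The coupling can break only when the target cell of propagation is already occupied: the event $\pred$ captures the case in which $h_0(\tilde{g}_R(x))$ was assigned before the first constellation query of $x$ (so Game~\ref{Game:construction-clean} adopts the old $h_0$ value for $F(x)$ while Game~\ref{Game 3} draws a fresh value), and the event $\selfref$ captures the case in which $h_0(\tilde{g}_R(x))$ becomes assigned during completion of the constellation itself (so Game~\ref{Game 3}'s programming of $h_0$ fails). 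Together these exhaust the mechanisms of coupling failure, giving $\|\alpha_2-\alpha_3\|_{\tv}\leq\Pr[\pred]+\Pr[\selfref]$. This is the main obstacle: one must argue carefully that $\pred$ and $\selfref$ cover every way the two games can disagree on a query, and in particular that on every turn outside these events the value handed to $\cD$ is the same in both games.

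Finally, Game~\ref{Game 3}.2 and Game~\ref{Game 4} both pre-draw all randomness from the same pair of datasets $(\DS_1,\DS_2)$ and apply identical rules for assigning and programming entries, differing only in that Game~\ref{Game 3} eagerly completes the constellation of $x$ on an $\mathcal{A}$-query whereas Game~\ref{Game 4} defers such completion until a subsequent $\mathcal{B}$-query arrives. Under the normalization that every $\mathcal{A}$-query is eventually accompanied by constellation queries, and since the pre-drawn randomness is indexed by the identity of the query rather than its temporal position, both games return identical responses to $\cD$ under the natural coupling; hence $\|\alpha_3-\alpha_4\|_{\tv}=0$. Summing the three bounds yields the stated inequality.
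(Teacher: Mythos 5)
Your decomposition and your treatment of the first two legs match the paper's: coupling Game~\ref{Game:construction} with Game~\ref{Game:construction-clean} on a shared $h_*$ gives $\Pr[\subv]$, and coupling Game~\ref{Game:construction-clean}.2 with Game~\ref{Game 3}.1 on shared randomness gives $\Pr[\pred]+\Pr[\selfref]$. The gap is in your third leg, where you claim $\|\alpha_3-\alpha_4\|_{\tv}=0$. This is false, and the reason is exactly the timing asymmetry you mention but dismiss. In Game~\ref{Game 3}, a query to $\mathcal{A}$ for $x$ immediately completes the constellation of $x$ and programs $h_0(\tilde{g}_R(x)):=F(x)$; in Game~\ref{Game 4} the simulator never sees queries to $\cF$, so nothing is programmed at that moment. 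If the distinguisher then queries $h_0$ directly at $y=\tilde{g}_R(x)$ \emph{before} ever touching the constellation of $x$, Game~\ref{Game 3}.2 returns the programmed value (from $\DS_1$) while Game~\ref{Game 4} returns a free value (from $\DS_2$), and the transcripts diverge. Your appeal to ``randomness indexed by the identity of the query'' does not rescue this: the value stored at $h_0(y)$ is not a function of $y$ alone but of whether that cell is \emph{programmed} or \emph{free}, and that status depends on the order of events, which genuinely differs between the two games.

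The paper handles this with a substantial argument (Lemmas~\ref{lemma:agreement}, \ref{lemma:subset}, \ref{PP for 3v4}, \ref{lemma: 3vs4}): it introduces two further events in Game~\ref{Game 3}.2, $\forwardpred$ (completion of a constellation discovers $h_0(\tilde{g}_R(y))$ already evaluated) and $\backwardpred$ (the scenario sketched above), proves by induction on the query index that the tables of the two games agree whenever neither event occurs, and bounds $\|\alpha_3-\alpha_4\|_{\tv}\leq\Pr[\forwardpred]+\Pr[\backwardpred]$, relating each to $\Pr[\pred]$. These probabilities are negligible but nonzero, so the third leg cannot be closed by the coupling argument as you state it; you need to identify and bound the failure events. (You should also note that the normalization ``every $\cF$-query is eventually accompanied by constellation queries'' only guarantees agreement at the \emph{end} of the interaction; it does nothing to prevent an intermediate $h_0$-query from exposing the discrepancy.)
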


Theorem \ref{thm:indiff} is proved by the sequence of lemmas below.
We recall, at the outset, that if $X: \Omega \rightarrow V$ and
$Y: \Omega \rightarrow V$ are two random variables defined on the same
probability space $\Omega$ and $\Pr[X \neq Y] = \epsilon$, then
$\| X - Y \|_{\tv} \leq \epsilon$.
%For any two games Game $i$ and $j$, we denote the advantage of $\cD$ in distinguishing them by %$\mathcal{ADV}(\text{Game $i$, Game $j$}).$

\begin{lemma}\label{lemma:equivalent}
  The transcripts of Game~\ref{Game:construction-clean}.1 and Game~\ref{Game:construction-clean}.2 have the same
  distribution.
\end{lemma}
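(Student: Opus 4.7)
The plan is to exhibit an explicit coupling between the two games under which the transcripts are pointwise identical; equality of distributions then follows immediately. To set up the coupling, I would sample once and for all an eager ``master table'' consisting of independent uniform entries $\{h_i(x)\}$ for every $0 \leq i \leq \ell$ and every $x$ in the appropriate domain, together with $R$. In Game~\ref{Game:construction-clean}.1 this \emph{is} the data preparation step. In Game~\ref{Game:construction-clean}.2 I would reinterpret each occurrence of ``draw a fresh uniform value'' as reading off the corresponding entry of the master table; since the tables $T_F$ and $T_H$ are always consulted before any new assignment, each master-table entry is consumed at most once, and since the master-table entries are i.i.d.\ uniform, this is a legitimate implementation of the lazy sampling described in the game.

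Next, I would prove by induction on the query index $k$ that the length-$k$ prefixes of the two transcripts coincide under this coupling. The base case $k=0$ is immediate since $\alpha_1[0] = R = \alpha_2[0]$ by the shared choice of $R$. For the inductive step I would consider the two query types. For a query $\mathcal{B}(i,x)$, Game~\ref{Game:construction-clean}.1 returns $h_i(x)$ directly from the master table, while Game~\ref{Game:construction-clean}.2 returns either the value previously stored in $T_H$ (which, by the coupling and the inductive hypothesis, equals the master value) or else inserts the master value and returns it; in either case the responses agree. For a query $\mathcal{A}(x)$, Game~\ref{Game:construction-clean}.1 returns $h_0(\tilde{g}_R(x))$ directly, while Game~\ref{Game:construction-clean}.2 first completes the constellation of $x$ and then returns $h_0(\tilde{g}_R(x))$ from $T_H$; the crucial observation here is that the completion procedure runs the deterministic subversion algorithm $\tilde{H}$ on the constellation points, and every internal $h_*$-query made by $\tilde{H}$ is answered by the same master-table value that underlies $\tilde{g}_R(x)$ in Game~\ref{Game:construction-clean}.1. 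Hence the evaluated $\tilde{g}_R(x)$, the corresponding $h_0$-index, and the returned value all agree.

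The main (though mild) obstacle I expect is verifying the invariant that each cell of $T_H$ (and $T_F$) is populated at most once during the lazy run of Game~\ref{Game:construction-clean}.2, even across the various side-effect draws triggered by constellation completion; without this invariant one cannot cleanly use the master table as the randomness source. This invariant follows directly from the membership checks stipulated in the game description prior to any assignment, together with the observation that completion itself also checks $T_H$ before drawing a value. Once this bookkeeping is established, the induction closes, the transcripts coincide pointwise under the coupling, and therefore $\alpha_1$ and $\alpha_2$ are identically distributed, establishing the lemma.
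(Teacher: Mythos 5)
Your proposal is correct and takes essentially the same route as the paper: the paper's proof is a one-line appeal to the fact that the lazily drawn values in Game~\ref{Game:construction-clean}.2 are uniform and independent exactly as in the eager data-preparation step of Game~\ref{Game:construction-clean}.1, and your master-table coupling plus induction on the query index is just the standard rigorous formalization of that lazy-versus-eager sampling equivalence. The bookkeeping invariant you flag (each table cell populated at most once) is indeed the only point needing care, and it holds for the reason you give.
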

\begin{proof}
  It suffices to notice that the distributions of the query in Game~\ref{Game:construction-clean}.1 and Game~\ref{Game:construction-clean}.2 are identical because the rule of Game~\ref{Game:construction-clean}.2
  guarantees all the $h_i(\alpha)$ for $0 \leq i \leq \ell$, like
  those in Game~\ref{Game:construction-clean}.1, are chosen uniformly and independently.
\end{proof}

Notice that the three crisis events defined in Game \ref{Game:construction-clean}.2 can also be properly defined in \ref{Game:construction-clean}.1. The proof of Lemma \ref{lemma:equivalent} reveals that their probabilities are the same for Game \ref{Game:construction-clean}.1 and \ref{Game:construction-clean}.2 since both of the two versions of Game \ref{Game:construction-clean} have the same distribution of $h_*$. In the rest of the paper, we abuse the notation and use $\subv$, $\pred$ and $\selfref$ to denote the crisis events in all the three versions of Game \ref{Game:construction-clean}.

\begin{lemma}\label{lemma: 1vs2} For any distinguisher $\cD$, 
  $\|\alpha_1-\alpha_2\|_{\tv}\leq \Pr[\subv]$.
\end{lemma}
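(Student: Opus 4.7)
My plan is to couple the two games on a common probability space and then bound the probability that the coupled transcripts diverge by $\Pr[\subv]$. Specifically, I would use one global collection of uniform choices---the randomness $R$ and a fully pre-sampled family of independent uniform oracles $\{h_i(x)\}_{0 \le i \le \ell, x}$---to drive both Game~\ref{Game:construction} and Game~\ref{Game:construction-clean}.1 in parallel. By Lemma~\ref{lemma:equivalent}, comparing $\alpha_1$ with $\alpha_2$ is the same as comparing $\alpha_1$ with the transcript of Game~\ref{Game:construction-clean}.1, so I can work directly with the more convenient variant Game~\ref{Game:construction-clean}.1.

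The core claim is a query-by-query coupling lemma: conditioned on the event $\neg\subv$, the two games produce identical transcripts. I would prove this by induction on the query index $k$. For the inductive step, consider $\cD$'s $k$-th query. If it is a $\mathcal{B}$-query $(i,x)$, both games answer with $h_i(x)$ from the common randomness, so the responses agree. If it is an $\mathcal{A}$-query on input $x$, Game~\ref{Game:construction} returns $\tilde{h}_0(\tilde{g}_R(x))$ while Game~\ref{Game:construction-clean}.1 returns $h_0(\tilde{g}_R(x))$; because the $h_i$ and hence $\tilde{h}_i$ are computed from the same underlying table, the values $\tilde{g}_R(x)$ agree in the two games, and the two answers differ only when $\tilde{g}_R(x)$ lies in the subverted area of $h_0$. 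Now I invoke the normal-form assumption on $\cD$: any $\mathcal{A}$-query at $x$ is accompanied by queries to the entire constellation of $x$. Hence the very event that makes the answers diverge also triggers $\subv$, which would contradict the conditioning.

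Putting the pieces together, on the common probability space the set of sample points on which $\alpha_1 \neq \alpha_2$ is contained in the set where $\subv$ occurs. The standard inequality $\|X - Y\|_{\tv} \le \Pr[X \neq Y]$ for random variables on a common space then yields
\[
\|\alpha_1 - \alpha_2\|_{\tv} \;\le\; \Pr[\alpha_1 \neq \alpha_2] \;\le\; \Pr[\subv]\,,
\]
as required. The only subtle point is the normal-form step: I have to make sure that the constellation queries that witness $\subv$ are actually made by $\cD$ before (or at) the moment where the divergence occurs. Because normal form guarantees that every input fed to $\mathcal{A}$ is also probed across its entire constellation by the end of the interaction, and because $\subv$ is defined over the \emph{entire} transcript rather than at a specific prefix, this inclusion is automatic. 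This is the main (and only) nontrivial checking; the remainder is bookkeeping of the shared randomness.
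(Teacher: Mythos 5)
Your proposal is correct and follows essentially the same route as the paper: couple the two games on a shared pre-sampled $h_*$ and $R$, observe that the only possible divergence is an $\mathcal{A}$-answer $\tilde{h}_0(\tilde{g}_R(x))$ versus $h_0(\tilde{g}_R(x))$, and apply $\|X-Y\|_{\tv}\le\Pr[X\neq Y]$. The paper states this in two sentences; your write-up additionally makes explicit the induction on the first point of divergence and the use of the normal-form assumption to ensure that a divergent $\mathcal{A}$-query actually triggers $\subv$ (which is defined via constellation queries), a step the paper leaves implicit.
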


\begin{proof}
Notice that if the data preparation step prepares the same $h_*$ for Game \ref{Game:construction} and Game~\ref{Game:construction-clean}.1, the only way to distinguish them is to find a $\tilde{g}_R(x)$ falling into the subverted area of $h_0$. That is to say, the transcripts in the two games are different only when $\subv$ occurs.
\end{proof}

\begin{lemma}\label{lemma: 2vs3} For any distinguisher $\cD$, 
  $\|\alpha_2-\alpha_3\|_{\tv}\leq \Pr[\pred]+\Pr[\selfref]$.
\end{lemma}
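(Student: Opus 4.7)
The plan is to prove this by a coupling argument. I would build Game~\ref{Game:construction-clean}.2 and Game~\ref{Game 3}.1 on a common probability space with shared randomness, so that the two transcripts coincide pointwise outside the event $\pred \cup \selfref$. The total variation bound then follows from the standard coupling inequality $\|\alpha_2 - \alpha_3\|_{\tv} \leq \Pr[\alpha_2 \neq \alpha_3]$.

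The coupling I would use pre-samples the following independent uniform randomness: the string $R$; values $h_i(x)$ for every $i > 0$ and every $x$ in the relevant domain; ``fresh'' values $\hat{h}_0(y)$ for every $y \in \{0,1\}^{3n}$; and one auxiliary value $u_x \in \{0,1\}^n$ for every $x \in \{0,1\}^n$. Both games use the same $R$ and the same $h_i$ values for $i>0$ when populating $T_H$. They differ only in how $F$ and $h_0$ are coordinated: in the coupled Game~\ref{Game:construction-clean}.2, whenever the simulator first assigns $h_0(\tilde{g}_R(x))$ as a consequence of \emph{completing the constellation of $x$}, it uses the value $u_x$ (and therefore sets $F(x) = u_x$); any other first assignment to $h_0(y)$ (coming from a direct query to $\mathcal{B}$ on $h_0$, or triggered by the subversion algorithm $\tilde{H}$) uses the corresponding $\hat{h}_0(y)$. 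In the coupled Game~\ref{Game 3}.1, $F(x)$ is always drawn as $u_x$ on first assignment, and every other first assignment to $h_0(y)$ uses $\hat{h}_0(y)$. Since the unassigned-cell conventions in Game~\ref{Game:construction-clean}.2 and Game~\ref{Game 3}.1 each produce i.i.d.\ uniform values, the marginal distributions of the two games are preserved.

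Next I would argue that, on any execution where neither $\pred$ nor $\selfref$ occurs, this coupling makes the two transcripts identical. I would induct on the query index. The inductive step only needs to check the moment when a constellation of some $x$ is completed for the first time (either via a direct $\mathcal{A}$-query on $x$ or via a $\mathcal{B}$-query on $h_i(x \oplus r_i)$ for some $i>0$). Absence of $\pred$ at this step means $h_0(\tilde{g}_R(x))$ has \emph{not} been assigned by any earlier action; absence of $\selfref$ during completion means the evaluation of the $\tilde{h}_j(x \oplus r_j)$ does not itself touch $h_0(\tilde{g}_R(x))$. Hence in Game~\ref{Game:construction-clean}.2 the cell $h_0(\tilde{g}_R(x))$ is freely assigned, and by the coupling it receives the value $u_x$, giving $F(x) = u_x$; in Game~\ref{Game 3}.1 the value $F(x) = u_x$ is drawn and then written into the unassigned cell $h_0(\tilde{g}_R(x))$ via programming. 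Both games now store the same values in $T_F$ and $T_H$, so the responses to $\mathcal{D}$ agree and $T_H$ evolves identically thereafter.

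The main (and only) obstacle is being careful about the many ways in which $h_0(\tilde{g}_R(x))$ could already be set or could get set ``in the wrong place'': an earlier direct query to $h_0$, a side-query to $h_0$ made by $\tilde{H}$ while completing some other constellation, a collision $\tilde{g}_R(x') = \tilde{g}_R(x)$ between two different constellations, or a recursive touch inside the completion of $x$ itself. Every one of these scenarios is captured by either $\pred$ (the cell was assigned before the constellation of $x$ is first queried, regardless of how) or $\selfref$ (the cell is touched during the completion itself). Once that case-analysis is done, a union bound gives $\Pr[\alpha_2 \neq \alpha_3] \leq \Pr[\pred] + \Pr[\selfref]$, which completes the proof of the lemma.
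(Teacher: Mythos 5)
Your proposal is correct and takes essentially the same route as the paper: the paper's proof is a two-sentence coupling argument ("imagine the two parties use the same random bits in Game~2.2 and Game~3.1; the transcripts then agree unless $\pred$ or $\selfref$ occurs"), and your construction simply makes that shared-randomness coupling explicit. Your case analysis of how $h_0(\tilde{g}_R(x))$ could be prematurely or recursively assigned, and your verification that the marginals are preserved, fill in details the paper leaves implicit but do not change the argument.
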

\begin{proof}
  Imagine the two parties $(\mathcal{A},\mathcal{B})$ use the same random bits in Game~\ref{Game:construction-clean}.2 and Game~\ref{Game 3}.1. Then, Game~\ref{Game:construction-clean}.2 and Game~\ref{Game 3}.1 have the same query result unless $\pred$ or $\selfref$ occurs in Game~\ref{Game:construction-clean}.2, which has the probability $\Pr[\pred]$.
\end{proof}

\begin{lemma}
  For any distinguisher $\cD$ the transcripts of Game~\ref{Game 3}.1 and Game~\ref{Game 3}.2
  have the same distribution.
\end{lemma}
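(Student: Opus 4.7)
The plan is to exhibit an explicit coupling between the random choices of Game~\ref{Game 3}.1 and Game~\ref{Game 3}.2 under which the two games produce identical $R$-transcripts with probability one; since the transcripts coincide pointwise, they have identical distributions.

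First, I will catalogue which values are drawn in Game~\ref{Game 3}.1 and whence. During the interaction, the fresh random draws fall into three categories: (a) values assigned to $F(x)$ — either when $\cD$ queries $F(x)$ and the constellation is completed, or when a query to $\mathcal{B}$ triggers the programming rule in Game~\ref{Game 3}.1; (b) values assigned to $h_i(x)$ for $i>0$ — either when queried directly, or when completing a constellation, or when the subversion algorithm $\tilde{H}$ makes an internal query; (c) values assigned to $h_0(y)$ that are \emph{not} programmed — namely, fresh queries to $h_0(y)$ made either directly by $\cD$ or by $\tilde{H}$ in a step that does not trigger programming. In every such instance the value is drawn uniformly and independently of everything previously recorded in $T_F$ or $T_H$.

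Second, I will set up the coupling. Couple Game~\ref{Game 3}.1 with Game~\ref{Game 3}.2 so that, whenever Game~\ref{Game 3}.1 freshly samples a value for $F(x)$ or for $h_i(x)$ with $i>0$, the coin used is exactly the corresponding entry in $\DS_1$, and whenever Game~\ref{Game 3}.1 freshly samples a (non-programmed) value for $h_0(y)$, the coin used is exactly the $y$-entry of $\DS_2$. The entries of $\DS_1$ and $\DS_2$ not consulted in Game~\ref{Game 3}.1 are simply discarded (they remain in $\DS_1, \DS_2$ but are irrelevant, since Game~\ref{Game 3}.2 also only consults the entries that correspond to queries/completions that actually occur). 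The marginals are correct: because all entries of $\DS_1, \DS_2$ are drawn uniformly and independently, this is exactly the distribution of the lazy sampling in Game~\ref{Game 3}.1.

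Third, I will verify by induction on the number of query steps that, under this coupling, the content of the tables $T_F, T_H$ after each step is identical in Game~\ref{Game 3}.1 and Game~\ref{Game 3}.2, and hence that the responses to $\cD$ are identical step by step. The only subtle point is the programming rule for $h_0$: when Game~\ref{Game 3}.1 programs $h_0(\tilde{g}_R(x))$ to equal the current $F(x)$ value, Game~\ref{Game 3}.2 does exactly the same (drawing from $\DS_1$'s $F$-table at position $x$), so the programmed value is the same $F(x)$ in both games; when Game~\ref{Game 3}.1 does \emph{not} program (so $h_0(y)$ is a fresh random value), Game~\ref{Game 3}.2 draws from $\DS_2$, and by the coupling this is the same value. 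No obstacle beyond careful bookkeeping arises: because each pre-sampled entry of $\DS_1$ and $\DS_2$ is independent, and because each lazily drawn value in Game~\ref{Game 3}.1 is consulted at most once, the identification above is well-defined and produces identical transcripts. This proves the lemma. \qed
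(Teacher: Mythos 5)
Your proposal is correct and takes essentially the same approach as the paper, which simply asserts that the entries of $T_F$ and $T_H$ have identical distributions in the two games; your explicit coupling of the lazily drawn values with the corresponding entries of $\DS_1$ and $\DS_2$, followed by induction on query steps, is just a more detailed rendering of that one-line argument. The key observations you make---that each entry is consulted for fresh sampling at most once, and that the programming rule copies the same $F(x)$ value in both games---are exactly what makes the paper's assertion go through.
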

\begin{proof}
  The distributions of the entrees of $T_F$ and $T_H$ in Game~\ref{Game 3}.1 are same as those in
  Game~\ref{Game 3}.2, these games have the same transcript
  distributions.
\end{proof}

In the next three lemmas, we denote by $\beta_3[k]$ (and $\beta_4[k]$,
respectively) the data in $T_H$ in Game~\ref{Game 3}.2 (and Game~\ref{Game 4},
respectively) after $k$th query and answer. Similar to the conventions in $\alpha$, we let $\beta_3[0]$($\beta_4[0]$) be $R$. Elements in $\beta$,
$\DS_1$, and $\DS_2$ are given the same tuple syntax as those in the
transcript. For any positive $k$, we say $\beta_3[k]$
%($\beta_4[k]$)
\emph{agrees with} $\beta_4[k]$
% ($\beta_3[k]$)
if, for any $m \leq k$, $(x,h_0,a) \in \beta_3[m]$ and
$(x,h_0,b) \in \beta_4[m]$, we have $a=b$. For
$(x,h_0,a) \in \beta_3[k]$ (or $\beta_4[k]$), we say the pair
$(x,h_0)$ is a \emph{free term} in $\beta_3[k]$ (or $\beta_4[k]$) if
there is a $(x,h_0,b) \in \DS_2$ such that $a=b$. If the term
$(x,h_0)$ is not free, we say that it is \emph{programmed}. (These
notions only apply to $h_0$, reflecting the two mechanisms for
assigning $h_0$ values in the games above.) 

\begin{lemma}\label{lemma:agreement}
  Consider the interaction of a distinguisher $\cD$ with
  Games~\ref{Game 3}.2 and~\ref{Game 4} given by the same randomness
  $\DS_1$, $\DS_2$ and $R$.  For any $k>0$, if $\beta_3[k]$ agrees with
  $\beta_4[k]$ then $\alpha_3[k] = \alpha_4[k]$ and the $k+1$st
  queries of the two games are identical.
\end{lemma}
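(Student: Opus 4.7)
The plan is to induct on $k$, exploiting the fact that the distinguisher is deterministic so that its $(k+1)$-st query is a function of the transcript $\alpha[k]$ alone. The base case $k=1$ is straightforward: the first query is fixed by $\cD$'s code, and regardless of whether it is an $\mathcal{A}(x)$ query (answered from the $F$-coordinate of $\DS_1$), a $\mathcal{B}(i,x)$ query with $i>0$ (answered from $\DS_1$), or a $\mathcal{B}(0,x)$ query (answered from $\DS_2$ since $T_H$ is empty), both games return exactly the same value. Hence $\alpha_3[1]=\alpha_4[1]$ and the second queries coincide.

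For the inductive step, I would assume the lemma for all $j<k$ and suppose $\beta_3[k]$ agrees with $\beta_4[k]$. Because the definition of agreement quantifies over all $m\le k$, agreement also holds at step $k-1$, and the inductive hypothesis gives $\alpha_3[k-1]=\alpha_4[k-1]$ together with identical $k$-th queries; call the common query $q$. If $q=\mathcal{A}(x)$, both games return $F(x)$ from the same coordinate of $\DS_1$; Game~\ref{Game 3}.2 additionally completes the constellation of $x$ and may program $h_0(\tilde{g}_R(x))$, but these are internal side effects invisible in $\alpha$. If $q=\mathcal{B}(i,x)$ with $i>0$, both games return $h_i(x)$; because the randomness is pre-sampled and indexed by $(i,x)$, the value agrees whether the entry was inserted earlier by an identical sequence of previous queries or is being assigned now from $\DS_1$.

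The only delicate case is $q=\mathcal{B}(0,x)$, and this is where the agreement hypothesis does its real work. If $(x,h_0,\cdot)$ is present in both $\beta_3[k-1]$ and $\beta_4[k-1]$, the agreement at step $k-1$ forces the two stored values to match. If it is present in only one of the tables---for instance, Game~\ref{Game 3}.2 may already hold this entry because an earlier $\mathcal{A}(x')$ with $\tilde{g}_R(x')=x$ triggered programming, whereas Game~\ref{Game 4} would not have acted on that $\mathcal{A}$ query at all---then the other game freshly draws from the same coordinate of $\DS_2$, and since the entry now lies in both $\beta_3[k]$ and $\beta_4[k]$, the agreement at step $k$ forces those values to coincide. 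If $(x,h_0,\cdot)$ appears in neither table, both games simply return the same coordinate of $\DS_2$. In every sub-case the $k$-th answer agrees, giving $\alpha_3[k]=\alpha_4[k]$, and the $(k+1)$-st query, being a deterministic function of $\alpha[k]$, is identical in both games.

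The main obstacle, and the point that most warrants careful handling, is that the underlying tables $T_H$ need not coincide across the two games---Game~\ref{Game 3}.2 performs more eager bookkeeping because $\mathcal{A}$ queries there trigger constellation completion and potential $h_0$ programming, while $\mathcal{A}$ queries in Game~\ref{Game 4} do not. The proof thus rests on the observation that the agreement hypothesis, weak as it appears (it merely constrains those $h_0$ entries present in both tables), is exactly what is needed to rule out any discrepancy on $h_0$ from leaking into the answers $\cD$ actually observes, and so leaves the bookkeeping gap between the two games harmless from the transcript's point of view.
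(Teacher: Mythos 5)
Your proposal is correct and follows essentially the same route as the paper's proof: induction on the transcript length, using the determinism of $\cD$ to equate successive queries, equality of the pre-sampled datasets to handle $F$ and $h_i$ ($i>0$) responses, and the agreement hypothesis to handle $h_0$ responses. Your treatment of the $h_0$ sub-cases (entry in both tables, in one table, or in neither) is in fact more explicit than the paper's one-line dispatch of that case, but the underlying argument is identical.
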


\noindent
Remark: Note that agreement, in the sense defined above, is much
weaker than equality or setwise equality. However,
Lemma~\ref{lemma:agreement} implies that to prove $\alpha_3=\alpha_4$,
it suffices to show that $\beta_3[k]$ agrees with $\beta_4[k]$ for any $k$.

\begin{proof}
  We will prove by induction that $\alpha_3[i] = \alpha_4[i]$ for each
  $0 \leq i \leq k$.  Of course $\alpha_3[0] = \alpha_4[0]$, as these
  are both $R$.  Suppose that
  $\alpha_3[i-1] = \alpha_4[i-1]$ for an $1 \leq i \leq k$. Since the
  first $i-1$ queries have been identical and have received the same
  responses, the $i$th query in Game~\ref{Game 3}.2 and Game~\ref{Game
    4} are identical. 
    If the $i$th query is made to $F$ or $h_i$ for
  $i > 0$, the responses are identical because the datasets are
  equal. Otherwise, the query is made to $h_0$, in which case the
  responses are identical because $\beta_3[k] = \beta_4[k]$. Of course
  this also implies that the $k+1$st queries are identical.
\end{proof}

\begin{lemma}\label{lemma:subset}
  Consider the interaction of a distinguisher $\cD$ with
  Games~\ref{Game 3}.2 and~\ref{Game 4} given by the same randomness
  $\DS_1$, $\DS_2$ and $R$.  For any $k>0$, if $\beta_3[k]$ agrees with
  $\beta_4[k]$, $\beta_4[k] \subset \beta_3[k]$.
\end{lemma}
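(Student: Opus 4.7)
The plan is to prove the inclusion by induction on $k$, relying heavily on Lemma~\ref{lemma:agreement} to conclude that agreement through step $k$ forces the sequence of queries (and most of the responses) in Games~\ref{Game 3}.2 and~\ref{Game 4} to coincide through the $k$th round. The base case $k=0$ is immediate since $\beta_3[0]=\beta_4[0]=R$. For the inductive step, assume the conclusion holds for all $k'<k$. Since agreement of $\beta_3[k]$ with $\beta_4[k]$ implies agreement of $\beta_3[k-1]$ with $\beta_4[k-1]$, the inductive hypothesis yields $\beta_4[k-1]\subset\beta_3[k-1]$, and Lemma~\ref{lemma:agreement} further gives that the $k$th query in the two games is identical. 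It then suffices to verify that every new entry appearing in $\beta_4[k]\setminus\beta_4[k-1]$ also appears in $\beta_3[k]$.

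The verification splits into cases on the $k$th query. If the query is made to $\mathcal{A}=F$, Game~\ref{Game 4}'s simulator performs no work on $T_H$, so $\beta_4[k]=\beta_4[k-1]\subset\beta_3[k-1]\subset\beta_3[k]$. If the query is $h_i(x)$ for some $i>0$, both games draw the fresh value from $\DS_1$, then run the identical procedure of completing the constellation of $x\oplus r_i$ (invoking $\tilde{H}$ on the same inputs, answering internal random-oracle queries consistently from $T_H$ or freshly from $\DS_1$), and finally program $h_0(\tilde{g}_R(x\oplus r_i))$ to $F(x\oplus r_i)$ if unassigned. Because both games consult the same $\DS_1$ and, by the inductive hypothesis, have the same relevant entries in $T_H$ prior to the query, the sequence of additions to $T_H$ is identical in the two games, so $\beta_4[k]\setminus\beta_4[k-1]\subset\beta_3[k]\setminus\beta_3[k-1]$.

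The one case requiring care is when the $k$th query is $h_0(x)$. Here Game~\ref{Game 4} always returns either the previously stored value from $\beta_4[k-1]$ or a freshly drawn value from $\DS_2$. In Game~\ref{Game 3}.2 the response is the previously stored value in $\beta_3[k-1]$ if present, or else a fresh draw from $\DS_2$. If $(x,h_0)$ is already present in $\beta_4[k-1]$, then by the inductive hypothesis it is present in $\beta_3[k-1]$ with the same value (using the agreement hypothesis), so no new entry is generated on either side. If $(x,h_0)\notin\beta_4[k-1]$ but $(x,h_0)\in\beta_3[k-1]$, the assumed agreement at step $k$ forces the value just drawn from $\DS_2$ in Game~\ref{Game 4} to coincide with the entry already recorded in Game~\ref{Game 3}.2, so the new entry added in Game~\ref{Game 4} already lies in $\beta_3[k-1]\subset\beta_3[k]$. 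If $(x,h_0)$ is absent from both $\beta_3[k-1]$ and $\beta_4[k-1]$, then both games draw the same value from $\DS_2$ and insert the identical tuple, which therefore belongs to both $\beta_3[k]$ and $\beta_4[k]$.

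The main conceptual obstacle is exactly this final subcase: one must check that Game~\ref{Game 3}.2 never treats a direct $h_0$ query as a \emph{programmed} term (which would draw its value from $\DS_1$ rather than $\DS_2$), since otherwise the values added in the two games could differ even under agreement. This is guaranteed by the rules of Game~\ref{Game 3}.2, where programming of $h_0$ only occurs as a side-effect of completing a constellation, never in response to a direct query to $h_0$ by $\cD$. With this point in hand the inductive step is complete, establishing $\beta_4[k]\subset\beta_3[k]$.
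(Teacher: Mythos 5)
Your proof is correct and follows essentially the same induction and case analysis as the paper's. One small imprecision: in the $h_i$ ($i>0$) case, the inductive hypothesis gives only $\beta_4[k-1]\subset\beta_3[k-1]$, not that the two tables have ``the same relevant entries,'' so the coincidence of the two completion sequences really rests on (a) all $h_j$ ($j>0$) values coming from the shared $\DS_1$ whether cached or fresh, and (b) the agreement hypothesis excluding any discrepancy on $h_0$ values encountered during completion --- which is exactly the justification the paper supplies at that step.
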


\begin{proof}
  Assume that $\beta_3[k]$ agrees with $\beta_4[k]$; we proceed by
  induction to show that $\beta_4[j] \subset \beta_3[j]$ for all
  $j \leq k$.  Of course $\beta_4[0] \subset \beta_3[0]$ as these are
  both $R$. For an $j$ in the range $1\leq j \leq k$,
  suppose $\beta_4[j-1] \subset \beta_3[j-1]$. Note that the $j$th
  query and response in Game~\ref{Game 3}.2 and Game~\ref{Game 4} are identical by
  Lemma~\ref{lemma:agreement}. We consider the various cases
  separately:
\begin{itemize}
\item If the $j$th query is $(x,F)$ for some $x$: Then
  $\beta_4[j] = \beta_4[j-1] \subset \beta_3[j-1] \subset \beta_3[j]$,
  as desired. Note, in general, that $\beta_3[j-1] \neq \beta_3[j]$
  for such a query.
\item If the $j$th query is $(x,h_0)$ for some $x$: Note that
  $\beta_4[j]/\beta_4[j-1] \subset \{(x,h_0,y)\}$ for some $y$ and
  that $(x,h_0,y) \in \beta_3[j]$.
\item If the $j$th query is $(x,h_i)$ for some $x$ and positive $i$:
  In both games, this calls for the constellation to be completed and
  $h_0(\tilde{g}_R(x \oplus r_i))$ to be evaluated.
  %Suppose
  %$\beta_4[j] \not\subset \beta_3[j]$.
  As the datasets are equal in the two games, the only possible
  disagreement that could arise during completion of the
  constellations must occur on a query to $h_0$: however, this is
  excluded by agreement of $\beta_3[k]$ and $\beta_4[k]$. It follows
  that the $h_i$ are evaluated at exactly the same sequence of inputs,
  yielding the same results, by the two completions and hence that any
  table assignment added to $\beta_4$ must also appear in $\beta_3$;
  additionally, $\tilde{g}(x \oplus r_i)$ takes the same value in the
  two games. Finally, $h_0(\tilde{g}(x \oplus r_i))$ is evaluated in
  both games which, by the same considerations, must yield the same
  value and must be included in both $\beta_3$ and $\beta_4$.
  % there exists a query $(y,h_0)$ such that
  % $(y,h_0)$ is the first term evaluated in $\beta_4[j]$ among the
  % elements of $\beta_4[j]/\beta_3[j]$. Since $\beta_4[k]$ agrees with
  % $\beta_3[k]$, $(y,h_0)$ is not evaluated in $\beta_3[j]$ and all the
  % elements of $\beta_4[j]$ prior to $(y,h_0)$ are also in
  % $\beta_3[j]$. A contradiction.
  \qedhere
\end{itemize}
\end{proof}

Next, we introduce two further events in Game~\ref{Game 3}.2 and proceed to our
main result.

\begin{itemize}
\item \textbf{Forward Prediction}, denoted $\forwardpred$: For some
  $y$, the constellation of $y$ is completed for the first time at
  which point it is discovered that $h_0(\tilde{g}(y))$ has previously
  been evaluated.
\item \textbf{Backward Prediction}, denoted $\backwardpred$: For some
  $x$, $h_0(\tilde{g}_R(x))$ is queried, either directly by $\cD$ or
  during completion, after $\cD$ has already queried $F(x)$ (to
  $\mathcal{A}$) but prior to $\cD$ making any query to the
  constellation $x$.
\end{itemize}

\begin{lemma}\label{PP for 3v4}
Consider the interaction of a distinguisher $\cD$ with Games~\ref{Game 3}.2 and~\ref{Game 4} given by the same randomness $\DS_1$, $\DS_2$ and $R$. For any $k>0$, if $\forwardpred$ and $\backwardpred$ do not occur in Game~\ref{Game 3}.2, $\beta_3[k]$ agrees with $\beta_4[k]$.
\end{lemma}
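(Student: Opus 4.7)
The plan is to proceed by induction on $k$. The base case $k = 0$ is immediate since $\beta_3[0] = \beta_4[0] = R$ contains no $h_0$ tuples, so agreement holds vacuously. For the inductive step, assume $\beta_3[k-1]$ agrees with $\beta_4[k-1]$: Lemma~\ref{lemma:agreement} implies the $k$th queries coincide in both games, and Lemma~\ref{lemma:subset} yields $\beta_4[k-1] \subseteq \beta_3[k-1]$. Since agreement constrains only $h_0$ tuples present in both tables, it suffices to show that every new $h_0$ tuple entering $\beta_4$ at step $k$ is assigned a matching value in $\beta_3[k]$.

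I would split on the type of the $k$th query. When the query is $(x,F)$, Game~\ref{Game 3}.2 may complete the constellation of $x$ and program $h_0(\tilde{g}_R(x))$, but Game~\ref{Game 4} performs no $T_H$ update; no new $h_0$ entry is added to $\beta_4$ and agreement is trivially preserved. When the query is $(x,h_i)$ with $i>0$, both games complete the constellation of $x \oplus r_i$ using the same $\tilde{H}$ and the same datasets, so the subverted evaluations and the derived value $\tilde{g}_R(x \oplus r_i)$ coincide; whenever either game programs $h_0(\tilde{g}_R(x \oplus r_i))$, it installs the value $\DS_1[x \oplus r_i, F]$. The absence of $\forwardpred$ in Game~\ref{Game 3}.2 ensures that at Game~\ref{Game 3}.2's first completion of that constellation the slot is unassigned so the programming actually happens, while any fresh $h_0$ sub-queries made internally by $\tilde{H}$ draw from $\DS_2$ in both games and therefore agree.

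The delicate case is a $(y, h_0)$ query, whether issued directly by $\cD$ or indirectly by $\tilde{H}$ during the completion at step $k$. Disagreement can arise only if $h_0(y) \in \beta_3[k-1] \setminus \beta_4[k-1]$, so that Game~\ref{Game 3}.2 returns a cached value while Game~\ref{Game 4} draws fresh from $\DS_2$. I would enumerate the four mechanisms by which $h_0(y)$ can enter $\beta_3$ without entering $\beta_4$: direct $\cD$ queries and $h_i$-triggered completions are mirrored verbatim in Game~\ref{Game 4}; internal $\tilde{H}$ sub-queries installed via some earlier completion draw free values from $\DS_2$, which Game~\ref{Game 4} would subsequently draw identically. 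The only residual possibility is that $h_0(y)$ was \emph{programmed} in Game~\ref{Game 3}.2 during an earlier $F(x'')$-triggered completion with $y = \tilde{g}_R(x'')$. The absence of $h_0(y)$ from $\beta_4[k-1]$, combined with Lemma~\ref{lemma:subset} applied to any hypothetical intervening constellation-$x''$ query (which would mirror the completion and install $h_0(y)$ in $\beta_4$), forces that no such query occurred between the $F(x'')$ query and step $k$; this is precisely the definition of $\backwardpred$ at $x''$, contradicting the hypothesis.

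I expect the main obstacle to be the bookkeeping in this last case: one must carefully enumerate every mechanism that creates an $h_0$ entry in either game (direct query, $h_i$-triggered completion, $F$-triggered completion, $\tilde{H}$ sub-query during each of those), distinguish \emph{programmed} from \emph{free} entries, and argue that precisely the patterns forbidden by $\forwardpred$ and $\backwardpred$ are exactly those that would break agreement.
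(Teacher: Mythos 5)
Your proposal is correct and follows essentially the same route as the paper's proof: induction on $k$ using Lemmas~\ref{lemma:agreement} and~\ref{lemma:subset}, a case split on the query type, and the identification of the programmed-but-absent $h_0$ entries with $\backwardpred$ and of already-assigned last terms with $\forwardpred$. The only difference is presentational (you fold the $h_0$ sub-queries arising during constellation completion into the direct-$h_0$-query case, whereas the paper handles the constellation case as its own branch), which does not change the substance.
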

\begin{proof}
  We proceed by induction. First, $\beta_3[0]$ agrees with
  $\beta_4[0]$. Suppose now that $\beta_4[k-1]$ agrees with
  $\beta_3[k-1]$: we analyze three cases to ensure that $\beta_4[k]$
  agrees with $\beta_3[k]$. Note, at the outset, that from
  Lemma~\ref{lemma:agreement} the $k$th queries made by $\cD$ are
  identical in the two games.
 \begin{itemize}
 \item[-] If $\cD$ queries $(x,F)$ for some $x$:
   $\beta_4[k]=\beta_4[k-1]$ and hence agrees with $\beta_3[k]$.
  \item[-]  If $\cD$ queries for $(x,h_0)$ for some $x$:
  \begin{enumerate}
  \item if $(x,h_0)$ is not evaluated yet in $\beta_3[k-1]$, by Lemma
    2, it is not in $\beta_4[k-1]$, either. Therefore,
    $\beta_4[k]/\beta_4[k-1]=\beta_3[k]/\beta_3[k-1]=(x,h_0,a)$ for
    some $a$ (given by $\DS_2$). $\beta_4[k]$ and $\beta_3[k]$ agree.
  \item if $(x,h_0)$ is evaluated in both $\beta_3[k-1]$ and
    $\beta_4[k-1]$. $\beta_4[k]=\beta_4[k-1]$ agrees with
    $\beta_3[k]=\beta_3[k-1]$.
  \item if $(x,h_0)$ is evaluated in $\beta_3[k-1]$ but not in
    $\beta_4[k-1]$. We have two subcases here. First, when $(x,h_0)$
    is free in $\beta_3[k-1]$: In this case it will be free in
    $\beta_4[k]$ and hence $\beta_4[k]$ agrees with
    $\beta_3[k]$. Second, $(x,h_0)$ is programmed in $\beta_3[k-1]$:
    We show this is actually impossible. Notice that no constellation
    $y$ for which $\tilde{g}(y) = x$ has been queried by $\cD$ in
    Game~\ref{Game 4} since $(x,h_0)$ is not in
    $\beta_4[k-1]$. Since $\beta_4[k-1]$ agrees with $\beta_3[k-1]$,
    no such constellation has been queried in Game~\ref{Game 3}.2
    either. Therefore, in this case the event $\backwardpred$ would
    have occurred. %A contradiction.
  \end{enumerate}
\item[-] If $\cD$ queries a constellation $x$: In both games, this
  leads to completion of the constellation of $x$, including
  evaluation of the last term $h_0(\tilde{g}_R(x))$.
  \begin{enumerate}
  \item We first analyze the behaviour of the evaluations prior to the
    last term. We need to show the sequence of evaluations in
    Game~\ref{Game 3}.2 coincides with that in Game~\ref{Game
      4}.
    %Imagine we are evaluating the two sequences in different
    %games in a parallel way. Suppose we have already evaluated several
    %terms in two sequences and the current values are equal.(This
    %assumption is valid since two sequences have, at least, zero terms
    %in common.)  Now we look at the next term.
    No possible disagreement can arise on any term that is free in
    both games, as these are both drawn from $\DS_1$; thus any
    disparity must occur at a term programmed in at least one game. If
    a term is programmed only in Game~\ref{Game 3}.2, the event
    $\backwardpred$ occurs, contrary to assumption. Observe that if a
    term is programmed in Game~\ref{Game 4} it must also be in
    Game~\ref{Game 3}.2, as $\beta_4[k-1] \subset \beta_3[k-1]$ from
    Lemma~\ref{lemma:subset} and these agree. Finally, if a term is
    programmed in both games they are given the same value since
    $\beta_4[k-1]$ agrees with $\beta_3[k-1]$.
  \item Now we proceed to the last term $h_0(\tilde{g}(x))$. If it is
    not in $\beta_3[k-1]$, then it is not in $\beta_4[k-1]$ by
    Lemma~\ref{lemma:subset}. Thus, the value of the term is either
    programmed in both games to be consistent with $F$ or was in fact
    freely assigned in both games during completion of the
    constellation itself. In either case, the values in two games are
    equal. If the last term is in $\beta_3[k-1]$, the event
    $\forwardpred$ occurs.
  \end{enumerate}
 \end{itemize}
In summary,  
 $\beta_3[k]$ agrees with $\beta_4[k]$.
\end{proof}

\begin{lemma}\label{lemma: 3vs4}
$\|\alpha_3-\alpha_4\|_{\tv}\leq \Pr[\backwardpred]+\Pr[\forwardpred]<2\Pr[\pred]$, where  $\backwardpred$ and $\forwardpred$ are events in Game \ref{Game 3} and $\pred$ is in Game \ref{Game:construction-clean}.
\end{lemma}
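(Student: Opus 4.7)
The plan is to invoke Lemmas \ref{PP for 3v4} and \ref{lemma:agreement} together with a coupling argument to obtain the first inequality, and then to couple Game \ref{Game 3}.2 with Game \ref{Game:construction-clean}.2 to upgrade the bound to $2\Pr[\pred]$.

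First I would place Games \ref{Game 3}.2 and \ref{Game 4} on a common probability space by drawing the randomness $(\DS_1,\DS_2,R)$ once and using it to run both games in parallel against the fixed deterministic distinguisher $\cD$. Let $E \coloneqq (\forwardpred \cup \backwardpred)^c$, where the two events refer to Game \ref{Game 3}.2 in this coupled space. On $E$, Lemma \ref{PP for 3v4} guarantees that $\beta_3[k]$ agrees with $\beta_4[k]$ for every $k$, and Lemma \ref{lemma:agreement} then yields $\alpha_3[k] = \alpha_4[k]$ for every $k$. Consequently $\alpha_3$ and $\alpha_4$ are pointwise equal on $E$, and the standard coupling bound on total variation gives
\[
\|\alpha_3-\alpha_4\|_{\tv}\;\leq\;\Pr[\alpha_3\neq\alpha_4]\;\leq\;\Pr[E^c]\;\leq\;\Pr[\forwardpred]+\Pr[\backwardpred].
\]

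To finish I would bound each of $\Pr[\forwardpred]$ and $\Pr[\backwardpred]$ (events in Game \ref{Game 3}.2) by $\Pr[\pred]$ (the event in Game \ref{Game:construction-clean}.2). The natural approach is a second coupling: run Game \ref{Game:construction-clean}.2 and Game \ref{Game 3}.2 on the shared randomness $(\DS_1,\DS_2,R)$, and argue inductively that on this joint probability space the sequence of distinguisher queries, together with the triggered constellation completions, match between the two games up to the first moment either game ``notices'' an already-assigned value of $h_0$. At that first moment, the state of assigned $h_0$-entries coincides in the two games; hence a $\forwardpred$ event in Game \ref{Game 3}.2---the first-time completion of a constellation of $y$ meeting an already-evaluated $h_0(\tilde{g}_R(y))$---immediately exhibits a $\pred$ event in Game \ref{Game:construction-clean}.2 on the same sample point. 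A symmetric argument, swapping the roles of ``$F$-first'' and ``constellation-first'' access, handles $\backwardpred$ by relating the $F(x)$-then-$h_0(\tilde{g}_R(x))$ pattern to a $\pred$ event in Game \ref{Game:construction-clean}.2 where the $h_0$-value has been set (via an $F$-query in Game \ref{Game:construction-clean}.2) prior to the first query to the constellation of $x$. Summing the two contributions yields $\Pr[\forwardpred]+\Pr[\backwardpred]\leq 2\Pr[\pred]$.

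The main obstacle is executing the second coupling cleanly: Game \ref{Game:construction-clean}.2 is ``$h_0$-primary'' (with $F(x)$ copied from $h_0(\tilde{g}_R(x))$) while Game \ref{Game 3}.2 is ``$F$-primary'' (with $h_0$ programmed from $F$), so the orderings and provenance of table entries differ even though the induced joint distribution on $(F,h_0,h_1,\ldots,h_\ell)$ agrees. The bookkeeping must show, by induction on the query index, that (i) the non-$h_0$ entries are identically sampled from $\DS_1$ in both games, (ii) each $h_0(y)$ receives a value that is indistinguishable between the two games at the first moment it is consulted, and (iii) the triggering query at which $\forwardpred$ or $\backwardpred$ fires in Game \ref{Game 3}.2 is precisely the triggering query at which $\pred$ fires in Game \ref{Game:construction-clean}.2. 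Given these invariants, the union bound $\Pr[\forwardpred]+\Pr[\backwardpred]\leq 2\Pr[\pred]$ follows, and combining with the first inequality completes the proof of $\|\alpha_3-\alpha_4\|_{\tv}\leq \Pr[\forwardpred]+\Pr[\backwardpred]<2\Pr[\pred]$.
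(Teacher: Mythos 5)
Your proposal is correct and follows essentially the same route as the paper: the first inequality comes from coupling Games \ref{Game 3}.2 and \ref{Game 4} on shared randomness and invoking Lemmas \ref{lemma:agreement} and \ref{PP for 3v4}, and the second comes from coupling Game \ref{Game:construction-clean} with Game \ref{Game 3} on the same randomness so that $\forwardpred$ or $\backwardpred$ forces $\pred$. Your write-up is if anything more explicit than the paper's about the bookkeeping required for the second coupling.
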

\begin{proof}
We view the two pairs of $(\mathcal{A},\mathcal{B})$ in Game \ref{Game:construction-clean}.3 and Game \ref{Game 3}.1 as two probabilistic Turing machines sharing the same randomness $r$ for any long enough string $r$. Given the fixed distinguisher $\cD$ from Lemma \ref{lemma: 1vs2} to \ref{PP for 3v4} and $r$ above, $\pred$ occurs in Game~\ref{Game:construction-clean}.1 if $\backwardpred$ or $\forwardpred$ occurs in Game~\ref{Game 3}.2. Since $r$ and $\cD$ are arbitrary, we have $\Pr[\backwardpred]<\Pr[\pred]$ and $\Pr[\forwardpred]<\Pr[\pred]$, which implies the lemma.
\end{proof}

Theorem~\ref{thm:indiff} follows by applying the triangle inequality to Lemma~\ref{lemma: 1vs2}, Lemma~\ref{lemma: 2vs3}, and Lemma~\ref{lemma: 3vs4}. %%Notice that the same result holds for the transcripts of any other distinguisher.

\subsection{Establishing pointwise unpredictability and subversion-freedom}

We first prove that the events $\pred$ and $\subv$ are
negligible. Throughout we let $\epsilon$ denote the upper bound on the disagreement probability of the subversion algorithm $\tilde{H}$: (for all $i$)
$h_i(x) \neq \tilde{h}_i(x)$ with probability no more than $\epsilon$
(in uniform choice of $x$).

\begin{definition}[Good term] Fix a subversion algorithm $\tilde{H}$.
  For any $i \in [\ell]$ and $x \in \{0,1\}^n$, we say $(x,h_i)$ is
  \emph{good} if
  $\Pr_{h_*}[h_i(x) \neq \tilde{h}_i(x)]<\sqrt{\epsilon}$.
\end{definition}

\begin{lemma}
  For any $\tilde{H}$, any $i \in [\ell]$, and uniform $x$,
  $\Pr_{x}[\text{$(x,h_i)$ is good}]>1-\sqrt{\epsilon}$.
\end{lemma}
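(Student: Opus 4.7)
The plan is to prove this via a direct application of Markov's inequality to the conditional ``badness'' probability function.

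First I would define, for each $x \in \{0,1\}^n$, the quantity
\[
  p_i(x) \;\eqdef\; \Pr_{h_*}\!\left[h_i(x) \neq \tilde{h}_i(x)\right]\,,
\]
so that by the definition of a good term, $(x,h_i)$ is good precisely when $p_i(x) < \sqrt{\epsilon}$. The goal then becomes bounding $\Pr_x[p_i(x) \geq \sqrt{\epsilon}]$ from above by $\sqrt{\epsilon}$.

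Next I would compute $\mathbb{E}_x[p_i(x)]$ by swapping the order of expectations:
\[
  \mathbb{E}_{x}[p_i(x)] \;=\; \mathbb{E}_x\!\left[\Pr_{h_*}[h_i(x) \neq \tilde{h}_i(x)]\right] \;=\; \Pr_{x,h_*}[h_i(x) \neq \tilde{h}_i(x)] \;\leq\; \epsilon\,,
\]
where the final inequality is exactly the hypothesis~\eqref{eq:equal} on the subversion algorithm $\tilde{H}$ (viewing the probability over uniform $x$ and uniform $h_*$ jointly; the hypothesis bounds this even after fixing $h_*$, so averaging preserves the bound). Since $p_i(x) \in [0,1]$ is a nonnegative random variable with expectation at most $\epsilon$, Markov's inequality yields
\[
  \Pr_{x}\!\left[p_i(x) \geq \sqrt{\epsilon}\right] \;\leq\; \frac{\mathbb{E}_x[p_i(x)]}{\sqrt{\epsilon}} \;\leq\; \frac{\epsilon}{\sqrt{\epsilon}} \;=\; \sqrt{\epsilon}\,.
\]
Taking the complementary event gives $\Pr_x[(x,h_i)\text{ is good}] = \Pr_x[p_i(x) < \sqrt{\epsilon}] > 1 - \sqrt{\epsilon}$, as claimed.

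There is no real obstacle here; the lemma is a one-line Markov argument whose only subtlety is correctly identifying the right ``two-level'' random variable (the conditional disagreement probability $p_i(x)$) and then invoking the hypothesis~\eqref{eq:equal} to bound its expectation. The role of $\sqrt{\epsilon}$ on both sides is the standard Markov trade-off between threshold and tail probability.
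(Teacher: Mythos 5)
Your proposal is correct and matches the paper's proof exactly: both observe that $\Exp_x\bigl[\Pr_{h_*}[h_i(x)\neq\tilde{h}_i(x)]\bigr]=\Pr_{x,h_*}[h_i(x)\neq\tilde{h}_i(x)]\leq\epsilon$ and then apply Markov's inequality with threshold $\sqrt{\epsilon}$. The only (shared) cosmetic quibble is that Markov gives the non-strict bound $\geq 1-\sqrt{\epsilon}$ rather than the strict one stated in the lemma.
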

\begin{proof}
  Notice that
  $\Exp_{x}[\Pr_{h_*}[h_i(x) \neq \tilde{h}_i(x)]]=\Pr_{x, h_*}[h_i(x)
  \neq \tilde{h}_i(x)]<\epsilon$. The lemma therefore follows directly
  from Markov's inequality.
\end{proof}

In several circumstances, is it convenient to determine a hash
function $\resamp h_*$ by ``resampling'' a particular entry of a hash
function $h_*$. To make this precise, for a hash function
$H: \{0,1\}^{3n} \rightarrow \{0,1\}^{3n}$, we define $\resamp[z;s] H$
to be the function given by the rule
\[
  \resamp[z;s] H(x) = \begin{cases}
    H(x) & \text{if $x \neq z$},\\
    s & \text{if $x = z$},
  \end{cases}
\]
We typically apply this when $s$ is drawn uniformly at random in
$\{0,1\}^{3n}$; we then say that the value of $H$ has been
``resampled'' at $z$. In this case, $\resamp[z;s] H$ is a random
variable which we write $\resamp[z] H$ (obtained by selecting $s$
uniformly); when $z$ is explicitly identified by context, we simply
write $\resamp H$. Finally, we apply this notation consistently with
our conventions for rendering the family $h_*$ from $H$. Thus
$\resamp[i,x] h_*$ denotes the family of functions resulting from
resampling $(x,h_i)$.

\begin{fact}[Invariance under resampling]
  Consider an (unbounded) algorithm $A$ with oracle access to
  $H: \{0,1\}^m \rightarrow \{0,1\}^m$, a uniformly random
  function. Suppose, further, that $A^H$ carries out a collection of
  (adaptively chosen) queries to $H$ and with probability 1 returns an
  element $\mathrm{result}(A^H) \in \{0,1\}^m$ on which it has not
  queried $H$. Then the random variable
  $\resamp[\mathrm{result}(A^H)] H$ is uniform; that is, the
  distribution arising from sampling $H$, querying according to $A$,
  and resampling $H$ at $\mathrm{result}(A^H)$ is uniform.
\end{fact}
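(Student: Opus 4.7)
The plan is to establish uniformity of $H' := \resamp[\mathrm{result}(A^H)] H$ by an involution argument on an augmented sample space, which I find cleaner than a direct probability calculation. Let $\Omega$ denote the sample space of pairs $(H,s)$ where $H : \{0,1\}^m \to \{0,1\}^m$ is uniform and $s \in \{0,1\}^m$ is the (uniform, independent) resampling value used to define $H'$. Then $H'$ is the first coordinate of $(H',s') := \phi(H,s)$ under the map
\[
  \phi(H,s) \;=\; \bigl(\resamp[z;\,s] H,\; H(z)\bigr), \qquad z = \mathrm{result}(A^H).
\]

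The central structural claim is that $\phi$ is an involution on the full-measure subset of $\Omega$ where $z \notin Q(A^H)$ (the set of queries produced by $A$ on $H$). To verify this, I would first observe that $H$ and $\resamp[z;s]H$ agree on every query made by $A^H$, since they differ only at $z$, which lies outside the query set by hypothesis. Because $A$ is deterministic, executing $A$ on $\resamp[z;s] H$ produces the same query sequence and the same answer sequence as $A^H$, and therefore returns the same $z$. Applying $\phi$ a second time then swaps the value at $z$ from $s$ back to $H(z)$, recovering $(H,s)$ exactly.

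Since $\phi$ is a measure-preserving bijection (the uniform distribution on $\Omega$ is invariant under every bijection on its support), the law of $(H',s') = \phi(H,s)$ coincides with the uniform distribution on $\Omega$. Marginalizing over the second coordinate yields the claim that $H'$ is uniformly distributed on functions $\{0,1\}^m \to \{0,1\}^m$. The only step requiring real care---the ``main obstacle,'' though a light one---is justifying the invariance of $A$'s execution under the $H \mapsto \resamp[z;s]H$ swap. This is exactly where the stated hypothesis (that $A$ returns an unqueried element with probability one) is used, and once that observation is in place the involution structure, and hence the conclusion, follow immediately.
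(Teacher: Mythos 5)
Your proof is correct. The involution is well defined: on the event (which has probability one by hypothesis) that $z=\mathrm{result}(A^H)$ is unqueried, $H$ and $\resamp[z;s]H$ agree on the entire query transcript of $A$, so $A$ run on $\resamp[z;s]H$ reproduces the same transcript and the same output $z$, and a second application of $\phi$ restores $(H,s)$. A bijection of a finite set preserves the uniform distribution, so $(H',s')$ is uniform and the marginal claim follows. (One small caveat: you assume $A$ is deterministic; for a randomized $A$ the same argument goes through after conditioning on its coins.) The paper's own proof is a one-line conditioning argument: fix the query transcript of $A^H$; conditioned on it, the values of $H$ at all unqueried points are i.i.d.\ uniform, so overwriting the unqueried point $\mathrm{result}(A^H)$ with a fresh uniform value leaves the conditional (hence the overall) distribution of $H$ unchanged. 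Both arguments pivot on exactly the same observation---that the resampled point lies outside the transcript---but they package it differently: the conditioning route is shorter and more idiomatic in random-oracle proofs, while your coupling/involution route is more explicit about the underlying measure-preserving symmetry and avoids any appeal to conditional independence, which some readers may find more self-contained.
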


\begin{proof}
  Observe that the resampling does not change the distribution of $H$
  conditioned on the responses the queries of $A$.
\end{proof}

\begin{definition}[Honest term]
  Fix a subversion algorithm $\tilde{H}$. Then for a fixed $h_*$, an
  index $i \in [\ell]$, and an $x \in \{0,1\}^n$, we say $(x,h_i)$ is
  \emph{honest} if for any $j \in [\ell]$ and $y \in \{0,1\}^n$,
  \[
    \Pr_{s}[\resamp h_i(x) \neq \widetilde{\resamp
      h}_i(x)]<\epsilon^{1/4}\,,
  \]
  where the randomness is over the resampling of $h_j(y)$ (so the
  resampling operator is $\resamp[j,y; s]$ in both cases). A term is \emph{dishonest}
  if it is not honest. (N.b.\ the function(s)
  $\widetilde{\resamp h_i}$ are defined by $\tilde{H}^{\resamp h_*}$ and
  hence may be very different from $h_*$.)
\end{definition}

Let us consider the following random variables defined by (uniform)
random selection of $h_*$:
\[
  d_i(x) = \begin{cases} 1 & \text{if $(x,h_i)$ is dishonest},\\
    0 & \text{otherwise.}
  \end{cases}
\]

\begin{lemma}\label{Honest term is rare.}
  $\displaystyle \Pr_{h_*}[\exists i \in [\ell], \text{$h_i$ has
    more than $2^n(q_{\tilde{H}}+1)\epsilon^{1/8}$ dishonest
    inputs}]\leq\ell\epsilon^{1/8}.$
\end{lemma}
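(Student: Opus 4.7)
The plan is to bound, for each fixed $i \in [\ell]$, the expected fraction of dishonest inputs $x$, and then apply Markov's inequality followed by a union bound over $i$. Specifically, I aim to establish
$$\Pr_{x, h_*}[(x, h_i) \text{ is dishonest}] \leq (q_{\tilde{H}}+1)\epsilon^{3/4} + \epsilon,$$
so that Markov applied to the random variable $|\{x : (x, h_i) \text{ is dishonest}\}|$ yields $\Pr_{h_*}[|\{x : (x, h_i) \text{ is dishonest}\}| \geq 2^n(q_{\tilde{H}}+1)\epsilon^{1/8}] \leq \epsilon^{1/8}$, and a union bound over $i \in [\ell]$ produces the claimed $\ell\epsilon^{1/8}$.

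The structural step is to identify a small set of candidate witnesses for dishonesty. If $(j,y) \neq (i,x)$ and $(j,y)$ is not queried during the computation of $\tilde{H}^{h_*}(i,x)$, then a straightforward induction along the execution trace shows that $\tilde{H}^{\resamp[j,y;s] h_*}(i,x) = \tilde{H}^{h_*}(i,x)$ for every resampling value $s$: the algorithm never reads $h_*(j,y)$, so its query sequence and output are unchanged. Since also $\resamp h_i(x) = h_i(x)$ in this case, $\Pr_s[\resamp h_i(x) \neq \widetilde{\resamp h}_i(x)]$ is deterministically either $0$ or $1$, with the latter occurring only when $h_i(x) \neq \tilde{h}_i(x)$. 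Dishonesty therefore implies either $h_i(x) \neq \tilde{h}_i(x)$ (which contributes at most $\epsilon$ by the hypothesis of Theorem~\ref{thm:ind}) or the existence of a witness $(j,y)$ in the set consisting of $(i,x)$ together with the at most $q_{\tilde{H}}$ query positions of $\tilde{H}^{h_*}(i,x)$, a set of size at most $q_{\tilde{H}}+1$.

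The main technical ingredient is that, for each candidate position $Q$---either $(i,x)$ itself or the $k$-th query $Q_k$ issued by $\tilde{H}^{h_*}(i,x)$---the joint distribution of $(x, \resamp[Q;s] h_*)$ under uniform $(x, h_*, s)$ equals that of $(x, h_*')$ under uniform $(x, h_*')$. This holds because $Q$ is determined without consulting $h_*(Q)$: the point $(i,x)$ is independent of $h_*$, and in the oracle-query model $Q_k$ depends only on $(i,x)$ and the responses to $Q_1,\ldots,Q_{k-1}$, which are unaffected by resampling at $Q_k$ (assuming distinct queries, which is without loss of generality). A direct counting argument then shows that for any target function $g$, the pre-image $\{(h_*,s) : \resamp[Q;s] h_* = g\}$ has the same cardinality. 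Consequently, the hypothesis on $\tilde{H}$ lifts to $\Pr_{x,h_*,s}[\resamp h_i(x) \neq \widetilde{\resamp h}_i(x)] \leq \epsilon$ for each candidate $Q$; Markov over $(x, h_*)$ converts this into $\Pr_{x,h_*}[\Pr_s[\cdot] \geq \epsilon^{1/4}] \leq \epsilon^{3/4}$, and union bounding across the $q_{\tilde{H}}+1$ candidates together with the $\epsilon$ contribution from $h_i(x) \neq \tilde{h}_i(x)$ delivers the displayed bound. The chief obstacle is the resampling-preserves-uniformity claim, since naive analogues fail when the resampled position can depend on the value at that position (a small two-input example exhibits this explicitly), so the argument must carefully exploit the causal order of oracle queries.
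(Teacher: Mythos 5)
Your proof is correct and follows essentially the same route as the paper's: restrict the possible witnesses $(j,y)$ to the oracle positions actually touched by $\tilde{H}^{h_*}(i,x)$, transfer the $\epsilon$-disagreement hypothesis to the resampled function via invariance under resampling, and finish with Markov's inequality and union bounds over the candidates and over $i$. If anything, your accounting is slightly more complete than the paper's, which takes its union bound only over the $q_{\tilde{H}}$ queried terms and absorbs the remaining cases (the candidate $(i,x)$ itself and unqueried witnesses, which reduce to $h_i(x)\neq\tilde{h}_i(x)$) into its separate good/non-good split of the inputs.
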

\begin{proof}
If $(x,h_i)$ is good, then
\begin{align*}
  \mathbb{E}[d_i(x)]
  & = \Pr_{h_*}\left[ \exists(j,y), \Pr_{s}\left[\resamp[j,y;s] h_i(x) \neq \widetilde{\resamp[j,y;s] h}_i(x)\right]>\epsilon^{1/4}\right]\\
  & \leq \sum_{k=1,...,q_{\tilde{H}}}\Pr_{h_*}\Bigl[\Pr_{s}\left[\resamp[j,y;s] h_i(x) \neq \widetilde{\resamp[j,y;s] h}_i(x)\right]>\epsilon^{1/4}, \\
  & \mkern25mu  \text{where $h_j(y)$ is the $k$-th term queried by the evaluation of } \tilde{h}_i(x)\Bigr]\\
  &\leq q_{\tilde{H}} \cdot \Pr_{h_*}\Bigl[h_i(x) \neq \tilde{h}_i(x)\Bigr] \leq q_{\tilde{H}}\epsilon^{1/4}
\end{align*}
by the union bound and invariance under resampling. Therefore,
$$\sum_{x \in \{0,1\}^n}\mathbb{E}[d_i(x)]<2^n(q_{\tilde{H}}\epsilon^{1/4}+\sqrt{\epsilon})<2^n(q_{\tilde{H}}+1)\epsilon^{1/4}.$$

Applying Markov's inequality and union bound again, we conclude
\[
  \Pr_{h_*}[\exists i \in [\ell], \text{$h_i$ has more than
    $2^n(q_{\tilde{H}}+1)\epsilon^{1/8}$ dishonest
    inputs}] \leq \ell\epsilon^{1/8}\,.\qedhere
\]
\end{proof}

\begin{definition}[Invisible term]
  Given $(R,h_*)$, we say a term $(x,h_i)$ is \emph{invisible} if for
  any $\tilde{h}_{j}(y)$ that queries $(x,h_i)$, $\tilde{h}_{j}(y)=h_j(y)$ and
 \[
    \Pr_{s}\bigl[\resamp[i,x;s] h_j(y) \neq \widetilde{\resamp[i,x;s]
      h}_j(y)\bigr]<\epsilon^{1/4}\,.
  \]
\end{definition}

\begin{lemma}
  $\Pr[\text{There exists a constellation with fewer than $\ell-n$
    invisible terms}]=O(\ell \epsilon^{1/8}).$
\end{lemma}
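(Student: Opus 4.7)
The plan is to decouple the analysis into an $h_*$-phase, in which we control the global number of visible terms, and an $R$-phase, in which independence of $r_1,\ldots,r_\ell$ spreads those terms across constellations. The crucial observation is that invisibility depends only on $h_*$ (since $\tilde{H}$ is specified before $R$ is revealed), so once $h_*$ is fixed the sets $V_i := \{x : (x,h_i)\text{ is visible}\}$ are fixed too.

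First I would call a term $(y,h_j)$ \emph{tainted} if either $\tilde{h}_j(y)\neq h_j(y)$ (subverted) or $(y,h_j)$ is dishonest. Unrolling the definition of invisibility, a term $(x,h_i)$ can be visible only if some tainted $(y,h_j)$ queries $(x,h_i)$ during the evaluation of $\tilde{h}_j(y)$: the first alternative in the definition forces $(y,h_j)$ to be subverted, while the second alternative is exactly the witness that $(y,h_j)$ is dishonest at the parameter $(i,x)$. Combining the subversion hypothesis \eqref{eq:equal} with Lemma~\ref{Honest term is rare.}, except on an $h_*$-event of probability at most $\ell\epsilon^{1/8}$, every $h_j$ has at most $2^n\bigl(\epsilon+(q_{\tilde{H}}+1)\epsilon^{1/8}\bigr)$ tainted inputs. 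Since each tainted evaluation issues at most $q_{\tilde{H}}$ oracle queries, the total number of visible terms then satisfies
\[
|V| \;=\; \sum_i |V_i| \;\leq\; q_{\tilde{H}}\cdot \ell\cdot 2^n\bigl(\epsilon+(q_{\tilde{H}}+1)\epsilon^{1/8}\bigr) \;=\; O\!\bigl(\ell\, q_{\tilde{H}}^2\, \epsilon^{1/8}\bigr)\cdot 2^n.
\]

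Conditioning on such a good $h_*$, I would set $p_i := |V_i|/2^n$, so $\sum_i p_i \leq O(\ell q_{\tilde{H}}^2\epsilon^{1/8})$. For each $x'\in\{0,1\}^n$ the number of visible terms in the constellation of $x'$ equals $\sum_{i=1}^{\ell} \mathbf{1}[x'\oplus r_i\in V_i]$, a sum of independent Bernoullis with parameters $p_i$, because the $r_i$ are independent and uniform and the $V_i$ no longer depend on $R$. A standard elementary-symmetric-polynomial (Bonferroni) bound then gives
\[
\Pr_R\!\left[\sum_{i=1}^{\ell}\mathbf{1}[x'\oplus r_i\in V_i]\geq n+1\right] \;\leq\; e_{n+1}(p_1,\ldots,p_\ell) \;\leq\; \frac{\bigl(\sum_i p_i\bigr)^{n+1}}{(n+1)!}.
\]
Union-bounding over the $2^n$ choices of $x'$ keeps this quantity negligible since $\epsilon$ is negligible while $\ell$ and $q_{\tilde{H}}$ are polynomial in $n$. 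Adding back the $\ell\epsilon^{1/8}$ probability of a bad $h_*$ yields the claimed $O(\ell\epsilon^{1/8})$ bound.

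The main obstacle is getting the $(h_*,R)$-decoupling right: a naive union bound over $2^n$ constellations cannot survive the tiny per-constellation contribution coming from $h_*$ alone, so one really needs the independence of the $r_i$ to concentrate visible terms across constellations. The visibility-to-tainted reduction is the second important step, since it is what cuts the potentially huge object $V$ down to polynomial density $\sum_i p_i = O(\ell q_{\tilde{H}}^2\epsilon^{1/8})$, which is small enough that the $(n+1)$-st moment killer in the Bonferroni bound beats the $2^n$ union bound.
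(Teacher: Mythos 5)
Your proposal is correct and follows essentially the same route as the paper: reduce visibility of a term to its being queried by a subverted or dishonest evaluation, invoke Lemma~\ref{Honest term is rare.} (plus the subversion bound~\eqref{eq:equal}) to control the density of such "tainted" terms except on an $h_*$-event of probability $\ell\epsilon^{1/8}$, and then exploit the fact that invisibility is determined by $h_*$ alone together with the independence of the $r_i$ and a union bound over the $2^n$ constellations. The only difference is cosmetic: you package the final concentration step as the Bonferroni bound $e_{n+1}(p_1,\ldots,p_\ell)\leq\bigl(\sum_i p_i\bigr)^{n+1}/(n+1)!$, whereas the paper uses the equivalent union bound $\binom{\ell}{n}(\max_i p_i)^{n}$ over index subsets.
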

\begin{proof}
  We say $h_*$ is stealthy if, for all $i \in [\ell]$, there are no
  more than $\ell2^n(q_{\tilde{H}}+1)\epsilon^{1/8}$ dishonest terms
  in $h_i$. If $h_*$ is stealthy, the dishonest terms in $h_*$ make at
  most $\ell2^nq_{\tilde{H}}(q_{\tilde{H}}+1)\epsilon^{1/8}$ queries
  via the subversion algorithm. In other words, each column contains
  at least $2^n-\ell2^nq_{\tilde{H}}(q_{\tilde{H}}+1)\epsilon^{1/8}$
  terms that are not queried by any dishonest term.

  Notice that a term that is only queried by honest terms is
  invisible. Therefore,
\begin{align*}
    & \mkern20mu \Pr_{R, h_*}[\text{There exists a constellation with more than $n$ visible terms.}]\\
    & < \Pr_{R, h_*}[\text{There exists a constellation with more than $n$ visible terms} \mid \text{$h_*$ is stealthy}]+\Pr_{h_*}[\text{$h_*$ is stealthy}]\\
    & < \Pr_{R, h_*}\left[\parbox{8cm}{There exists a constellation with more than $n$ terms that are queried by a dishonest or subverted term}\;\middle|\;\text{$h_*$ is stealthy}\right] +\ell \epsilon^{1/8}\\
    & <2^n\binom{\ell}{n}(\ell q_{\tilde{H}}(q_{\tilde{H}}+1)\epsilon^{1/8}+\ell q_{\tilde{H}}\epsilon)^n+\ell \epsilon^{1/8}=O(\ell \epsilon^{1/8}),
\end{align*}
as long as $\ell = O(n)$. In the last equality, we use the fact that $q_{\tilde{H}}$ is polynomial in $n$, and for sufficiently large $n$,
\[2^n\binom{\ell}{n}(\ell q_{\tilde{H}}(q_{\tilde{H}}+1)\epsilon^{1/8}+\ell q_{\tilde{H}}\epsilon)^n=O(2^n 4^n \epsilon^{n/16})=O((8\epsilon^{1/16})^n)=O(\epsilon^{1/8})\,.\qedhere
\]
\end{proof}

\begin{definition}[Normal randomness]
  We say $(R, h_*)$ is \emph{normal} if every constellation has at least $\ell-n$ invisible terms. 
\end{definition}

For convenience, we focus on a fixed distinguisher in Game \ref{Game:construction-clean} for the rest of the section.

\begin{definition}[Regular transcript]
  For any $0<k<q_{\hat{\cD}}$, we say an $R$-transcript $\alpha_2[k]$
  is \emph{regular} if
  \[
    \Pr_{h_*}[\text{$(R, h_*)$ is normal} \mid \alpha_2[k]]>1-\sqrt{\ell}\epsilon^{1/16}\,.
    \]
    Otherwise, we say $\alpha_2[k]$ is \emph{irregular}.
\end{definition}

\begin{lemma}
For any $0<k<q_{\hat{\cD}}$.
\[
  \Pr[\text{$\alpha_2[k]$ is irregular}]< \sqrt{\ell}\epsilon^{1/16}\,.
\]
%where the randomness is over $\alpha_2[k]$ in Game \ref{Game:construction-clean}.
\end{lemma}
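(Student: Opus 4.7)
The plan is to deduce this bound directly from the previous lemma (which shows that $\Pr_{R,h_*}[(R,h_*) \text{ is not normal}] = O(\ell\epsilon^{1/8})$) via Markov's inequality applied to the conditional probability viewed as a function of the transcript.

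First I would set $p(\alpha) = \Pr_{h_*}[(R,h_*) \text{ is normal} \mid \alpha_2[k] = \alpha]$, so that by definition $\alpha_2[k]$ is irregular precisely when $1 - p(\alpha_2[k]) \geq \sqrt{\ell}\epsilon^{1/16}$. The law of total probability then gives
\[
\mathbb{E}_{\alpha_2[k]}\bigl[1 - p(\alpha_2[k])\bigr] = \Pr_{R,h_*}[(R,h_*) \text{ is not normal}] = O(\ell\epsilon^{1/8}),
\]
where the last equality is the conclusion of the preceding lemma on the prevalence of normal randomness.

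Then I would apply Markov's inequality to the nonnegative random variable $1 - p(\alpha_2[k])$:
\[
\Pr\bigl[\alpha_2[k] \text{ is irregular}\bigr] = \Pr\bigl[1 - p(\alpha_2[k]) \geq \sqrt{\ell}\epsilon^{1/16}\bigr] \leq \frac{O(\ell\epsilon^{1/8})}{\sqrt{\ell}\epsilon^{1/16}} = O\bigl(\sqrt{\ell}\epsilon^{1/16}\bigr),
\]
which yields the claimed bound (with the constants absorbed appropriately, matching the statement).

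There is no real obstacle here; the one subtlety worth a sentence of care is ensuring that the probability space over which $p(\alpha)$ is defined genuinely averages back to the marginal probability of non-normality. Since the distinguisher in Game~\ref{Game:construction-clean} is fixed and deterministic, the transcript $\alpha_2[k]$ is a measurable function of $(R, h_*)$ together with any internal randomness (we have assumed $\cD$ deterministic for the analysis), so the tower property applies cleanly and the averaging step is justified. Thus the whole argument is a one-line Markov bound built on the structural work already done in the previous lemma.
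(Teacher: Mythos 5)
Your proof is correct and is essentially identical to the paper's: both compute that the expected conditional probability of non-normality over transcripts equals the marginal probability of non-normality, $O(\ell\epsilon^{1/8})$, and then apply Markov's inequality at threshold $\sqrt{\ell}\epsilon^{1/16}$. The only (shared) cosmetic wrinkle is that the argument actually yields $O(\sqrt{\ell}\epsilon^{1/16})$ rather than a literal strict bound of $\sqrt{\ell}\epsilon^{1/16}$, which you correctly flag as a matter of absorbing constants.
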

\begin{proof}
  \[
    \sum_{\alpha_2[k]}\Pr[\text{$(R, h_*)$ is normal} \mid \alpha_2[k]]\cdot \Pr[\alpha_2[k]]=\Pr[\text{$(R, h_*)$ is normal}]=1-O(\ell \epsilon^{1/8})\,.
    \]
Using Markov's inequality, we have, with probability at least $1-\sqrt{\ell}\epsilon^{1/16}$ in the choice of $\alpha_2[k]$,
\[
  \Pr[\text{$(R, h_*)$ is not normal} \mid \alpha_2[k]]<\sqrt{\ell}\epsilon^{1/16}\,. \qedhere
\]
\end{proof}

\begin{definition}[Consistency]
  For any $0<k<q_{\hat{\cD}}$ and any two functions $h_{*}^1$, $h_{*}^2$, we say $h_{*}^1$ is $k$-consistent with $h_{*}^2$ if for some $R$ in Game \ref{Game:construction-clean}, $h_*=h_{*}^1$ and $h_*=h_{*}^2$ give the same $R$-transcript $\alpha_2[k]$ at the end of $k$-th interaction($k$-th query made by the distinguisher and answer made by the simulator).
\end{definition}

\begin{lemma}\label{lemma:unpre}
For any $0<k<q_{\hat{\cD}}$ and two $k$-consistent functions $h_{*}^1$, $h_{*}^2$ with common $R$-transcript $\alpha'$ after $k$-th interaction,
$$\Pr[h_*=h_{*}^1 \mid \alpha_2[k]=\alpha']=\Pr[h_*=h_{*}^2 \mid \alpha_2[k]=\alpha'],$$
where the randomness is over Game \ref{Game:construction-clean}.
\end{lemma}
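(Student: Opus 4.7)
The plan is to reduce the statement to the elementary observation that, conditioning on a deterministic function of an independent uniform random pair, the posterior distribution is uniform over the preimage. The key point is that, since we have fixed the distinguisher $\cD$ at the beginning of this subsection, the transcript $\alpha_2[k]$ becomes a deterministic function of the underlying randomness $(R, h_*)$.

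First, I would invoke Lemma~\ref{lemma:equivalent} to switch attention to Game~\ref{Game:construction-clean}.1, in which $R$ and the entire function family $h_*$ are drawn independently and uniformly at the data preparation step, and every subsequent answer of $(\mathcal{A}, \mathcal{B})$ is thereafter a deterministic function of $(R, h_*)$. Next, I would unpack the definition of $k$-consistency: for either $h_*^j$ (with $j \in \{1,2\}$), the associated $R$ must coincide with $\alpha'[0]$, and the joint event $\{h_* = h_*^j\} \cap \{R = \alpha'[0]\}$ lies inside $\{\alpha_2[k] = \alpha'\}$. Therefore
\[
  \Pr[h_* = h_*^j \,\wedge\, \alpha_2[k] = \alpha'] = \Pr[h_* = h_*^j]\cdot\Pr[R = \alpha'[0]]\,,
\]
which takes the same value for $j = 1$ and $j = 2$ because $h_*$ has a uniform prior. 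Dividing through by $\Pr[\alpha_2[k] = \alpha']$ yields the claim by Bayes' rule.

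The only delicate point---and what I would expect to be the main (albeit very mild) obstacle---is clarifying the interpretation of $h_*$ in the lazy-evaluation variant Game~\ref{Game:construction-clean}.2, where $h_*$ denotes the completion of $T_H$ (both the entries drawn during interaction and the uniform values used to fill the empty cells at the end). I would address this by noting that this completed $h_*$ is again distributed uniformly and independently of $R$, so the Bayes computation above applies verbatim. Beyond this bookkeeping, the statement is essentially a tautology of conditional probability under uniform priors, and it will serve as the foundation for arguing pointwise unpredictability of $\tilde{g}_R$ in the lemmas that follow.
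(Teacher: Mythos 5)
Your proposal is correct and matches the paper's argument, which is simply the observation that for a fixed distinguisher the transcript is a deterministic function of the uniformly distributed $(R,h_*)$, so conditioning on its value yields a uniform distribution over the preimage. Your version merely spells out the Bayes computation and the independence of $R$ more explicitly than the paper does.
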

\begin{proof}
  Observe that, for a fixed distinguisher, the transcript $\alpha_2[k]$
  is a deterministic function of $h$. As $h$ is uniformly
  distributed, the distribution obtained by conditioning on the value
  of any fixed function of $h$ is also uniform.
\end{proof}
  
% It is sufficient to notice that
% \begin{align*}
%     & \mkern23mu \Pr[(R,h_*)=(R^1,h_{*}^1) \mid \alpha_2[k]=\alpha']=\Pr[(R,h_*)=(R^2,h_{*}^2) \mid \alpha_2[k]=\alpha'] \\
%     & \Leftrightarrow \Pr[(R,h_*)=(R^1,h_{*}^1), \alpha_2[k]=\alpha']=\Pr[(R,h_*)=(R^2,h_{*}^2), \alpha_2[k]=\alpha']\\
%     & \Leftrightarrow \Pr[\alpha_2[k]=\alpha' 
%     \mid (R,h_*)=(R^1,h_{*}^1)]\Pr[(R,h_*)=(R^1,h_{*}^1)]=\Pr[\alpha_2[k]=\alpha' \mid (R,h_*)=(R^2,h_{*}^2)]\Pr[(R,h_*)=(R^2,h_{*}^2)]\\
%     & \Leftrightarrow 1 \cdot \Pr[(R,h_*)=(R^1,h_{*}^1)]=1 \cdot \Pr[(R,h_*)=(R^2,h_{*}^2)].
% \end{align*}

\subsubsection{Unpredictability}

In this section we bound the probability of $\pred$.

For $0 \leq k<q_{\hat{\cD}}$, we define the event $\pred[k]$ to be: in Game \ref{Game:construction-clean}.2, $\pred$ occurs before the end of $k$-th query and answer. In the rest of Section 4.4, we denote by $T[k]$ the data in $T_H$ and $T_F$ in Game \ref{Game:construction-clean}.2 after $k$-th query and answer. 

\begin{corollary}\label{Coro:unpre}
For any $0<k<q_{\hat{\cD}}$, $\Pr[\pred[k] \wedge \neg \pred[k-1]] = O(\ell^2(k-1)q_{\tilde{H}}2^{-3n}+\sqrt{\ell}\epsilon^{1/16}).$
\end{corollary}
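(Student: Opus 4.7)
The plan is to condition on the prefix transcript $\alpha_2[k-1]$ and argue that, on the ``good'' event that the transcript is regular and the underlying randomness is normal, the internal sum $\tilde{g}_R(x^\star)$ associated with the newly-touched constellation at step $k$ is nearly uniform in $\{0,1\}^{3n}$. A count of previously-assigned entries of $h_0$ will then control the collision probability that defines $\pred$.

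The first step is to split on regularity of $\alpha_2[k-1]$. The irregular case contributes the additive $\sqrt{\ell}\epsilon^{1/16}$ term in the stated bound directly from the preceding regularity lemma. For a regular transcript, we further condition on $(R, h_*)$ being normal, which holds with conditional probability at least $1 - \sqrt{\ell}\epsilon^{1/16}$. Because the distinguisher is deterministic, the $k$-th query is a function of $\alpha_2[k-1]$; for the event $\pred[k] \wedge \neg \pred[k-1]$ to occur, this query must be of the form $h_i(z)$ (with $i > 0$) and must first-touch some constellation $x^\star = z \oplus r_i$ (queries to $F$ or $h_0$ cannot first-touch any constellation in the sense defined for $\pred$).

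The technical core is the following uniformity claim: conditioned on $\alpha_2[k-1]$ and on normality, $\tilde{g}_R(x^\star)$ is $O(\epsilon^{1/4})$-close to uniform on $\{0,1\}^{3n}$. Normality guarantees that the constellation of $x^\star$ contains at least $\ell - n \geq 1$ invisible terms; fix one such index $i^\star$. Lemma~\ref{lemma:unpre} implies that, conditioned on $\alpha_2[k-1]$, the variable $h_*$ is uniform over its $(k-1)$-consistent realizations. Invisibility of $(i^\star, x^\star \oplus r_{i^\star})$ then gives that resampling this entry preserves $(k-1)$-consistency up to error $\epsilon^{1/4}$: indeed any $\tilde{h}_j(y)$ appearing in the simulator's history that queried this location was equal to the unchanging $h_j(y)$ and with high probability remains so after resampling. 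Hence $h_{i^\star}(x^\star \oplus r_{i^\star})$ is near-uniform given $\alpha_2[k-1]$; combined with the invisibility clause $\tilde{h}_{i^\star}(\cdot) = h_{i^\star}(\cdot)$ at the resampled point (except with probability $\epsilon^{1/4}$), the XOR $\tilde{g}_R(x^\star) = \bigoplus_j \tilde{h}_j(x^\star \oplus r_j)$ inherits uniformity.

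Finally, $\pred$ at step $k$ requires $\tilde{g}_R(x^\star)$ to hit the set of $h_0$ entries already assigned in $T_H$. Counting across the first $k-1$ interactions, each of which triggers at most one completion that internally evaluates $\tilde{H}$ on $\ell$ inputs making up to $q_{\tilde{H}}$ queries apiece (plus the one final $h_0(\tilde{g}_R)$ assignment), bounds this set by $O(\ell^2 (k-1) q_{\tilde{H}})$, yielding the stated hit probability $O(\ell^2 (k-1) q_{\tilde{H}} 2^{-3n})$. The hard part is the conditioning step above: the subversion algorithm may have internally inspected $h_{i^\star}(x^\star \oplus r_{i^\star})$ during prior completions, and one must invoke invisibility carefully to argue that such internal queries did not effectively leak its value through the transcript, so that the resampling argument genuinely produces near-uniformity conditioned on $\alpha_2[k-1]$ rather than merely on the unconditioned distribution of $h_*$.
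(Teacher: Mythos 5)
Your proposal is correct and follows essentially the same route as the paper's proof: split on regularity of $\alpha_2[k-1]$ and then on normality, exploit an invisible term of the newly-touched constellation together with the exchangeability of consistent completions (Lemma~\ref{lemma:unpre}) to argue that the relevant value of $\tilde{g}_R$ is nearly uniform, and bound the number of previously assigned $h_0$ entries to get the $O(\ell^2(k-1)q_{\tilde{H}}2^{-3n})$ collision term. The conditioning subtlety you flag at the end is exactly what the paper resolves, by union-bounding over the index $i$ and conditioning on the partial assignment (excluding the single entry $(y,h_i)$) having an invisible completion, so that among the $2^{3n}$ equiprobable completions at most $|T[k-1]|$ cause a collision while at least a $1-\epsilon^{1/4}$ fraction remain consistent and invisible.
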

\begin{proof} We consider the event $\Pr[\pred[k] \wedge \neg \pred[k-1]]$.
\begin{align*}
  & \mkern20mu \Pr[\pred[k] \wedge \neg \pred[k-1]] \\
  & \leq \Pr[\pred[k] \wedge \neg \pred[k-1] \mid \text{$\alpha_2[k-1]$ is regular}]+\Pr[\text{$\alpha_2[k-1]$ is not regular}] \\
  & \leq \Pr[\pred[k] \wedge \neg \pred[k-1] \text{ and $(R, h_*)$ is normal} \mid \text{$\alpha_2[k-1]$ is regular}]\\
  & \mkern20mu +\Pr[\text{$(R, h_*)$ is not normal} \mid \text{$\alpha_2[k-1]$ is regular}]+\sqrt{\ell}\epsilon^{1/16}\\
  & \leq \Pr\left[\parbox{65mm}{the $k$-th query is a constellation $y$ that contains an invisible term and $\tilde{g}_R(y) \in T[k-1]$}\;\middle|\; \text{$\alpha_2[k-1]$ is regular}\right] +2\sqrt{\ell}\epsilon^{1/16}\\
  & \leq \sum_{i \in [\ell]} \Pr\left[\parbox{7cm}{the $k$-th query is a constellation $y$, $\tilde{g}_R(y) \in T[k-1]$ and the $i$th term  $(y,h_i)$ is invisible}\;\middle|\; \text{$\alpha_2[k-1]$ is regular}\right] +2\sqrt{\ell}\epsilon^{1/16}
\end{align*}
Now consider any fixed $\alpha_2[k-1]$ in conjunction with a fixed
index $i$. Recall that fixing $\alpha_2[k-1]$ determines $R$ and hence
the constellation determined by the $k$th query; thus $i$ uniquely
determines a term $(y,h_i)$ of the constellation $y$ queried by $\cD$
at the $k$th step. Consider further a ``partial assignment'' $p_*$ to
the functions $h_*$ that provides values at all points in the domain
except for $(y,h_i)$. We say that $p_*$ has an \emph{invisible
  completion} (w.r.t.\ $\alpha_2[k-1]$) if there is some assignment to
$(y,h_i)$ so that this term is invisible (for $p_*$ and the $R$ given
by $\alpha[k-1]$). With this language, we may further upper bound the
expression just above by
\begin{equation}
  \leq \sum_{i \in [\ell]} \Pr\left[\parbox{4cm}{ $\tilde{g}_R(y) \in T[k-1]$ and the $i$th term  $(y,h_i)$ is invisible}\;\middle|\; \parbox{7cm}{ $\alpha_2[k-1]$ is regular, the $k$-th query is a constellation $y$ with $i$th term $(y,h_i)$, the partial assignment $p_*$ excluding $(y,h_i)$ has an invisible completion}\right] +2\sqrt{\ell}\epsilon^{1/16}\,. \label{eq:pick-up}
\end{equation}
(Here, the partial assignment $p_*$ is simply the function
$h_*$ with the value at
$(y,h_i)$ excluded.) Observe that for any fixed triple
$(\alpha_2[k-1], i,
p_*)$ that may arise in the conditioning above, there are exactly
$2^{3n}$ completions of $p_*$ to a full function
$h_*$. Recalling the definition of invisible, it follows that at least
a
$1-\epsilon^{1/4}$ fraction of all completions are consistent with
$\alpha_2[k-1]$ and, furthermore, maintain the invisibility of
$(y,h_i)$. Note, additionally, that at most
$|T[k-1]|$ of these completions can have the property that
$\tilde{g}_R(y) \in
T[k-1]$ and, additionally, that all such completions have the same
conditional probability (Lemma~\ref{lemma:unpre}). We conclude that
the probability of~\eqref{eq:pick-up} is no more than
\begin{equation*}
    \leq \frac{\ell}{1- \epsilon^{1/4}}
    \frac{|T[k-1]|}{2^{3n}}+2\sqrt{\ell}\epsilon^{1/16} 
     = O(\ell^2(k-1)q_{\tilde{H}}2^{-3n}+\sqrt{\ell}\epsilon^{1/16})\,. \qedhere
\end{equation*}
\end{proof}

\begin{theorem}\label{thm:unpred.}
  $\Pr[\pred] = O(\ell^2q_{\hat{\cD}}^2q_{\tilde{H}}2^{-3n} +
  \sqrt{\ell}q_{\hat{\cD}}\epsilon^{1/16})$.
\end{theorem}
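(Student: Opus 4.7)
The plan is to derive the theorem as a straightforward telescoping/union bound over queries, using Corollary~\ref{Coro:unpre} as the per-step ingredient. Observe that the events $\pred[k]$ are monotone in $k$ (once a predictive collision has occurred, it remains), and $\pred$ is the event that $\pred$ ever occurs during the full interaction, so $\pred = \pred[q_{\hat{\cD}}]$ assuming the distinguisher makes at most $q_{\hat{\cD}}$ queries. We can then decompose
\[
  \Pr[\pred] \;=\; \Pr[\pred[q_{\hat{\cD}}]] \;\leq\; \sum_{k=1}^{q_{\hat{\cD}}} \Pr\bigl[\pred[k] \wedge \neg\pred[k-1]\bigr],
\]
where $\pred[0]$ is vacuously false.

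Next I would substitute the bound from Corollary~\ref{Coro:unpre}, which gives, for each $k$,
\[
  \Pr\bigl[\pred[k] \wedge \neg\pred[k-1]\bigr] \;=\; O\!\left(\ell^2 (k-1) q_{\tilde{H}} 2^{-3n} + \sqrt{\ell}\,\epsilon^{1/16}\right).
\]
Summing over $k \in \{1,\ldots,q_{\hat{\cD}}\}$, the first term contributes $\ell^2 q_{\tilde{H}} 2^{-3n} \sum_{k=1}^{q_{\hat{\cD}}} (k-1) = O(\ell^2 q_{\hat{\cD}}^2 q_{\tilde{H}} 2^{-3n})$ by the arithmetic series identity, and the second term contributes $O(\sqrt{\ell}\, q_{\hat{\cD}} \epsilon^{1/16})$ by direct summation. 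Adding these yields exactly the stated bound
\[
  \Pr[\pred] = O\!\left(\ell^2 q_{\hat{\cD}}^2 q_{\tilde{H}} 2^{-3n} + \sqrt{\ell}\, q_{\hat{\cD}} \epsilon^{1/16}\right).
\]

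There is no real obstacle here: all of the heavy lifting has been done in Corollary~\ref{Coro:unpre}, which already absorbs the conditioning on regularity of the transcript, the structural ``invisible term'' counting from the normal-randomness analysis, and the unpredictability argument via Lemma~\ref{lemma:unpre} (which allows treating unexamined freedom in $h_*$ as essentially uniform). The only point worth double-checking in the write-up is that $\pred$ really does coincide with $\pred[q_{\hat{\cD}}]$ under our normal-form assumption on the distinguisher, so that the union bound covers every opportunity for $\pred$ to occur; this follows from the fact that $\pred$ is triggered at a specific query step, and all query steps are indexed by $k \leq q_{\hat{\cD}}$.
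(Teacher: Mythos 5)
Your proposal is correct and matches the paper's argument exactly: the paper proves Theorem~\ref{thm:unpred.} precisely by applying the union bound to Corollary~\ref{Coro:unpre} over $k$, which is the telescoping decomposition $\Pr[\pred] \leq \sum_{k} \Pr[\pred[k] \wedge \neg\pred[k-1]]$ that you describe. Your summation of the two terms (arithmetic series for the $2^{-3n}$ term, direct summation for the $\epsilon^{1/16}$ term) yields the stated bound, so nothing further is needed.
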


Theorem \ref{thm:unpred.} follows from the union bound to
Corollary~\ref{Coro:unpre} over $k$.

\subsubsection{Subversion freedom}

Now we bound the probability of the event $\subv$. 

\begin{definition}[Silent term]
  Given $(R,h_*)$, for any $x \in \{0,1\}^n$ and $i \in [\ell]$, we say $(x,h_i)$ is \emph{silent} if $(x,h_i)$ is invisible and is queried by fewer than $2^{2.5n}q_{\tilde{H}}$ terms of $\tilde{h}_0$.
\end{definition}

\begin{lemma}
For any $0<k<q_{\hat{\cD}}$, assuming that $\ell > n+4$,
\[
  \Pr[\text{there exists $x$ such that the constellation $x$ has no
    silent term} \mid \text{$\alpha_2[k]$ is regular}]=
  O\left(\frac{\ell^4}{2^n}+\sqrt{\ell} \epsilon^{1/16}\right)\,.
\]
\end{lemma}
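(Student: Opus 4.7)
The plan is to introduce a companion notion of \emph{loud term}, counterpoint to silent: say $(y,h_i)$ is \emph{loud} if it is queried by at least $2^{2.5n}q_{\tilde{H}}$ distinct terms of $\tilde{h}_0$. The crucial combinatorial observation is that the total number of queries made by all $\tilde{h}_0(\cdot)$ evaluations is bounded by $2^{3n}q_{\tilde{H}}$ (since $\tilde{h}_0$ has $2^{3n}$ possible inputs, each making at most $q_{\tilde{H}}$ queries), so a double-counting argument shows that the global number $N_{\text{loud}}$ of loud terms satisfies $N_{\text{loud}}\cdot 2^{2.5n}q_{\tilde{H}} \leq 2^{3n}q_{\tilde{H}}$, i.e., $N_{\text{loud}} \leq 2^{n/2}$. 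Crucially, $N_{\text{loud}}$ and the per-column counts $L_i$ (loud terms with second coordinate $h_i$) are functions of $h_*$ alone.

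Next I combine loudness with normality. Since a term is silent precisely when it is invisible and not loud, a constellation lacking any silent term must consist entirely of visible or loud terms. Under normality each constellation has at most $n$ visible terms, so such a constellation must contain at least $\ell-n$ loud terms. Thus, modulo $\sqrt{\ell}\epsilon^{1/16}$ loss (from switching between ``regular $\alpha_2[k]$'' and ``normal $(R,h_*)$''), it suffices to bound
\[
  \Pr_{R,h_*}\bigl[\exists x:\ \text{constellation of $x$ contains at least $\ell-n$ loud terms}\bigr].
\]

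For this I first condition on $h_*$, which fixes all $L_i$ with $\sum_i L_i \leq 2^{n/2}$. For a fixed $x$, the indicator that slot $i$ of the constellation is loud is Bernoulli with parameter $L_i/2^n$, and the $\ell$ indicators are independent over the uniform random choice of $R=(r_1,\ldots,r_\ell)$. Using the elementary symmetric polynomial bound on ``at least $m$ successes'' together with Maclaurin's inequality yields
\[
  \Pr_{R}\!\left[\geq \ell-n \text{ loud slots}\right]
  \leq \binom{\ell}{\ell-n}\bar p^{\,\ell-n}, \qquad
  \bar p=\frac{\sum_i L_i}{\ell\cdot 2^n}\leq \frac{1}{\ell\cdot 2^{n/2}}.
\]
A union bound over the $2^n$ choices of $x$ gives
\[
  2^n\binom{\ell}{\ell-n}\Bigl(\tfrac{1}{\ell\,2^{n/2}}\Bigr)^{\ell-n}
  \leq \frac{2^n\,\binom{\ell}{\ell-n}}{\ell^{\ell-n}\,2^{(n/2)(\ell-n)}},
\]
which, in the regime $\ell\geq n+4$, is dominated by the $\ell=n+4$ case and evaluates to $O(\ell^4/2^n)$ (the polynomial factor $\binom{\ell}{4}/\ell^4$ cancels and two of the four $2^{n/2}$ factors defeat the $2^n$ union bound).

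Finally I lift the unconditional bound to the conditional one. The event of interest $A$ (some constellation has no silent term) satisfies $\Pr[A] \leq O(\ell^4/2^n) + O(\ell\epsilon^{1/8})$, where the second term comes from $\Pr[(R,h_*)\text{ not normal}]$ established earlier. Since $\Pr[\alpha_2[k]\text{ regular}]\geq 1-\sqrt{\ell}\epsilon^{1/16} \geq 1/2$, the elementary inequality $\Pr[A\mid\text{reg}]\leq \Pr[A]/\Pr[\text{reg}]$ yields the claimed bound $O(\ell^4/2^n+\sqrt{\ell}\epsilon^{1/16})$ (noting that $\ell\epsilon^{1/8}\leq \sqrt{\ell}\epsilon^{1/16}$ for small $\epsilon$). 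The main subtlety to manage carefully is that conditioning on $\alpha_2[k]$ fixes $R$ and some entries of $h_*$, which appears to break the independence needed for the Maclaurin step; the workaround above---bounding unconditionally and then dividing by $\Pr[\text{reg}]$---circumvents this cleanly and is the key design choice of the argument.
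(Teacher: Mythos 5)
Your proposal is correct and follows essentially the same route as the paper: the same thick/loud vs.\ thin dichotomy, the same reduction via normality to ``at least $\ell-n$ loud terms in some constellation,'' the same use of independence of the slots over the $r_i$, the same union bound over $x$, and the same trick of bounding unconditionally and dividing by $\Pr[\text{regular}]\geq 1/2$. The only (cosmetic) deviation is that you bound the per-slot loudness probability by a deterministic double-counting argument and then apply a Maclaurin/elementary-symmetric-function bound, where the paper uses Markov's inequality per term and then simply picks four of the $\geq\ell-n$ thick slots to get $\binom{\ell}{4}2^{-2n}$; both yield the stated $O(\ell^4/2^n)$.
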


\begin{proof}
  Using Markov's inequality, for each $x \in \{0,1\}^n$ and
  $i \in [\ell]$,
  \[
    \Pr_{R, h_*}[\text{$h_{i}(x \oplus r_i)$ is queried by $\tilde{h}_0(y)$ for more than $2^{2.5n}q_{\tilde{H}}$ many $y$'s}]<\frac{1}{\sqrt{2^n}}.
  \]
  Let us call such a $(x \oplus r_i, h_i)$ \emph{thick} if it is
  queried by so many $\tilde{h}_0(y)$ and \emph{thin} otherwise. (So
  that a term is silent if it is invisible and thin.)  Note that for
  any fixed constellation $x$, the events that $(x \oplus r_i, h_i)$
  are thin are independent (as they are determined by different
  $r_i$). Since $\ell > n+4$,
\begin{align*}
  & \Pr[\text{there exists $x$ such that the constellation $x$ has no silent term} \mid \text{$\alpha_2[k]$ is regular}]\\
  \leq{} &  \Pr_{R, h_*}\left[\parbox{8cm}{there exists $x$ such that the constellation $x$ has no silent term and has more than $\ell-n$ invisible terms} \;\middle|\; \text{$\alpha_2[k]$ is regular}\right]\\
  & \mkern20mu + \Pr_{R, h_*}\left[\text{there exists a constellation $x$ that has more than $n$ visible terms}\mid \text{$\alpha_2[k]$ is regular}\right]\\
  \leq{} &  \Pr_{R, h_*} \left[\parbox{8cm}{there exists $x$ such that the constellation $x$ has more than $\ell-n$ invisible terms and all these terms are thick}\right]\Big/\Pr[\text{$\alpha_2[k]$ is regular}] + O\left(\sqrt{\ell} \epsilon^{1/16}\right)\\
  \leq{} &  \Pr_{R, h_*} \left[\parbox{8cm}{there exists $x$ such that the constellation $x$ has more than $\ell-n$ thick terms}\right]\Big/\Pr[\text{$\alpha_2[k]$ is regular}] + O\left(\sqrt{\ell} \epsilon^{1/16}\right)\\
  \leq{} & 2^n \binom{\ell}{4}  \left(\frac{1}{\sqrt{2^n}}\right)^{4} \cdot 2 + O(\sqrt{\ell} \epsilon^{1/16}) = O(\ell^4/2^n+\sqrt{\ell} \epsilon^{1/16}).\qedhere
\end{align*}
\end{proof}

For $0 \leq k<q_{\hat{\cD}}$, we define the event $\subv[k]$ to be: in
Game \ref{Game:construction-clean}.2, $\subv$ occurs before the end of
$k$-th query and answer.

\begin{corollary}\label{Coro:subv}
For any $0<k<q_{\hat{\cD}}$, $\Pr[\subv[k] \wedge \neg \subv[k-1]] = O(\ell q_{\tilde{H}}/\sqrt{2^n}+\sqrt{\ell}\epsilon^{1/16}).$
\end{corollary}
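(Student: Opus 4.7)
The plan is to mirror the structure of Corollary \ref{Coro:unpre}, replacing ``invisible term'' with ``silent term'' and replacing the predictability target (membership in $T[k-1]$) with membership in the subverted area of $h_0$. First, I would decompose
\[
  \Pr[\subv[k] \wedge \neg \subv[k-1]] \leq \Pr[\alpha_2[k-1] \text{ irregular}] + \Pr\bigl[\text{no silent term in the step-}k\text{ constellation} \,\big|\, \alpha_2[k-1] \text{ regular}\bigr] + \mathrm{Main},
\]
where the first summand is bounded by $\sqrt{\ell}\epsilon^{1/16}$ via the regularity lemma and the second by $O(\ell^4/2^n + \sqrt{\ell}\epsilon^{1/16})$ via the silent-term lemma above.

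For the Main term, condition on a regular $\alpha_2[k-1]$ and a silent term $(x \oplus r_i, h_i)$ in the step-$k$ constellation $x$. Since this is the first query into the constellation, $h_i(x \oplus r_i)$ does not appear in $\alpha_2[k-1]$, so by Lemma \ref{lemma:unpre} it is uniformly distributed conditional on the transcript; I would apply the resampling operator $\resamp[i, x \oplus r_i; s]$ with fresh uniform $s \in \{0,1\}^{3n}$. By invisibility, a union bound over the at most $\ell - 1$ queriers $\tilde{h}_j(x \oplus r_j)$ with $j \neq i$ ensures that with probability $\geq 1 - \ell\epsilon^{1/4}$ over $s$, each such $\tilde{h}_j(x \oplus r_j)$ remains equal to the (unchanged) $h_j(x \oplus r_j)$ and $\tilde{h}_i(x \oplus r_i) = s$, so $\tilde{g}_R(x) = s \oplus c$ for a constant $c$ independent of $s$. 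Now partition $\{0,1\}^{3n}$ into the ``sensitive'' set $T_1 := \{ y : \text{original } \tilde{h}_0(y) \text{ queries } h_i(x \oplus r_i) \}$ and its complement $T_0$; by thinness $|T_1| \leq 2^{2.5n}q_{\tilde{H}}$, and for $y \in T_0$ the subverted status of $y$ is identical in $h_*$ and $\resamp h_*$, so the global $\epsilon$-subversion guarantee gives
\[
  \Pr_s[\tilde{g}_R(x) \in \text{subv.\ area of } \resamp h_*] \leq \Pr_s[s \oplus c \in T_1] + \Pr_s[s \oplus c \in T_0 \cap \text{original subv.\ area}] \leq \tfrac{q_{\tilde{H}}}{\sqrt{2^n}} + \epsilon.
\]
Adding the $\ell\epsilon^{1/4}$ exception and union-bounding over $i \in [\ell]$ yields $\ell q_{\tilde{H}}/\sqrt{2^n} + \ell\epsilon + \ell^2\epsilon^{1/4}$ for Main; for polynomial $\ell$ and negligible $\epsilon$, $\ell\epsilon$ and $\ell^2\epsilon^{1/4}$ absorb into $\sqrt{\ell}\epsilon^{1/16}$ while $\ell^4/2^n$ absorbs into $\ell q_{\tilde{H}}/\sqrt{2^n}$, giving the claimed bound.

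The hard part is reconciling the two dependencies of the target event on the resampled value $s$: $s$ determines both the argument $\tilde{g}_R(x)$ and the subverted area of $\resamp h_*$ itself. Thinness in the original $h_*$ provides the key leverage --- it isolates the small set $T_1$ of values $y$ whose subversion status can possibly flip under resampling, so that the $y \in T_0$ contribution can be bounded by the static $\epsilon$-fraction guarantee (which holds for every hash, including $\resamp h_*$), while the $y \in T_1$ contribution is controlled by the trivial $|T_1|/2^{3n}$ estimate from the (conditional) uniformity of $s \oplus c$.
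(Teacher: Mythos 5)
Your proposal is correct and follows essentially the same route as the paper: the same decomposition via regularity of $\alpha_2[k-1]$ and the existence of a silent term in every constellation, and the same main estimate, which the paper phrases as counting the $2^{3n}$ equiprobable completions of the partial assignment at $(x \oplus r_i, h_i)$ rather than as resampling, with the identical split into the thin set of $h_0$-inputs whose subverted evaluation queries the resampled point (at most $2^{2.5n}q_{\tilde{H}}$ of them) and the $\epsilon$-fraction subverted area that is unaffected by the resampling. The two phrasings are interchangeable, and your error accounting matches the paper's up to constants absorbed in the $O(\cdot)$.
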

\begin{proof} We consider the event $\subv[k] \wedge \neg \subv[k-1]$.
\begin{align*}
  & \mkern20mu \Pr[\subv[k] \wedge \neg \subv[k-1]]\\
  & \leq \Pr[\subv[k] \wedge \neg \subv[k-1] \mid \text{$\alpha_2[k-1]$ is regular}]+\Pr[\text{$\alpha_2[k-1]$ is not regular}]\\
  & \leq \Pr[\text{$\subv[k] \wedge \neg \subv[k-1]$ and every constellation has a silent term} \mid \text{$\alpha_2[k-1]$ is regular}]\\
  & \mkern20mu +\Pr[\text{not all constellations have a silent term} \mid \text{$\alpha_2[k-1]$ is regular}]+\sqrt{\ell}\epsilon^{1/16}\\
  & \leq \sum_{i \in [\ell]}\Pr\left[\parbox{7cm}{the $i$th term $(x,h_i)$ of the constellation $x$ queried at step $k$ is silent and $h_0$ is subverted at $\tilde{g}_R(x)$} \;\middle|\; \text{$\alpha_2[k-1]$ is regular}\right] +\ell^4/2^n+\sqrt{\ell}\epsilon^{1/16}
\end{align*}    
Now consider any fixed $\alpha_2[k-1]$ in conjunction with a fixed
index $i$. Recall that fixing $\alpha_2[k-1]$ determines $R$ and hence
the constellation determined by the $k$th query; thus $i$ uniquely
determines a term $(y,h_i)$ of the constellation $y$ queried by $\cD$
at the $k$th step. Consider further a ``partial assignment'' $p_*$ to
the functions $h_*$ that provides values at all points in the domain
except for $(y,h_i)$. We say that $p_*$ has a \emph{silent
  completion} (w.r.t.\ $\alpha_2[k-1]$) if there is some assignment to
$(y,h_i)$ so that this term is silent (for $p_*$ and the $R$ given
by $\alpha[k-1]$). With this language, we may further upper bound the
expression just above by
\begin{align*}
  & \leq \sum_{i \in [\ell]}\Pr\left[\parbox{3cm}{$(x,h_i)$ is silent and $h_0$ is subverted at $\tilde{g}_R(x)$} \;\middle|\; \parbox{7cm}{$\alpha_2[k-1]$ is regular, the partial assignment excluding the $i$th term $(x,h_i)$ of the constellation $x$ queried at step $k$ has a silent completion}\right]+\ell^4/2^n+\sqrt{\ell}\epsilon^{1/16}\\
  & \leq \sum_{i \in [\ell]}\Pr\left[\parbox{45mm}{$h_0$ is subverted at $\tilde{g}_R(x)$ and $\tilde{h}_0(\tilde{g}_R(x))$ does not query $(x,h_i)$} \;\middle|\; \parbox{7cm}{$\alpha_2[k-1]$ is regular, the partial assignment excluding the $i$th term $(x,h_i)$ of the constellation $x$ queried at step $k$ has a silent completion}\right]\\
  & \mkern20mu + \sum_{i \in [\ell]}\Pr\left[\parbox{4cm}{$h_0$ is subverted at $\tilde{g}_R(x)$ and $\tilde{h}_0(\tilde{g}_R(x))$ queries $(x,h_i)$} \;\middle|\; \parbox{7cm}{$\alpha_2[k-1]$ is regular, the partial assignment excluding the $i$th term $(x,h_i)$ of the constellation $x$ queried at step $k$ has a silent completion}\right]+\ell^4/2^n +\sqrt{\ell}\epsilon^{1/16}.
\end{align*}
(Here, the partial assignment $p_*$ is simply the function $h_*$ with
the value at $(y,h_i)$ excluded.) Observe that for any fixed triple
$(\alpha_2[k-1], i, p_*)$ that may arise in the conditioning above,
there are exactly $2^{3n}$ completions of $p_*$ to a full function
$h_*$. Recalling the definition of silent, it follows that at least a
$1-\epsilon^{1/4}$ fraction of all completions are consistent with
$\alpha_2[k-1]$ and, furthermore, maintain the silence of
$(y,h_i)$. Note, additionally, that at most $2^{3n} \epsilon$ of these
completions can have the property that $h_0$ is subverted at
$\tilde{g}_R(x)$ and $\tilde{h}_0(\tilde{g}_R(x))$ does not query
$(x,h_i)$, while at most $2^{2.5n}q_{\tilde{H}}$ of these completions
can have the property that $h_0$ is subverted at $\tilde{g}_R(x)$ and
$\tilde{h}_0(\tilde{g}_R(x))$ queries $(x,h_i)$. Moreover, all such
completions have the same conditional probability
(Lemma~\ref{lemma:unpre}). We conclude that the probability above is
no more than
\begin{equation*}
    \leq \ell\left(\frac{2^{3n}\epsilon}{2^{3n}}+ \frac{2^{2.5n}q_{\tilde{H}}}{2^{3n}}\right)/(1-\epsilon^{1/4})+\ell^4/2^n +\sqrt{\ell}\epsilon^{1/16} = O(\ell q_{\tilde{H}}/\sqrt{2^n}+\sqrt{\ell}\epsilon^{1/16}).\qedhere
\end{equation*}
\end{proof}

\begin{theorem}\label{thm:subv}
  $\Pr[\subv] = O(\ell q_{\hat{D}}q_{\tilde{H}}/\sqrt{2^n}+\sqrt{\ell}q_{\hat{D}}\epsilon^{1/16}).$
\end{theorem}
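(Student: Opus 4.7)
The plan is to derive Theorem \ref{thm:subv} from Corollary \ref{Coro:subv} in exactly the same fashion that Theorem \ref{thm:unpred.} is derived from Corollary \ref{Coro:unpre}: decompose the global event $\subv$ into the disjoint sub-events indexed by the query at which $\subv$ first occurs, bound each by the corollary, and then apply a union bound.

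In more detail, I would first note that the distinguisher makes only $q_{\hat{\cD}}$ queries in Game \ref{Game:construction-clean}.2, so the crisis event $\subv$ (if it occurs at all) must occur at some step $k$ with $1 \leq k \leq q_{\hat{\cD}}$. Using the family of events $\subv[k]$ defined in the text (``$\subv$ occurs before the end of the $k$th query and answer''), the events $\subv[k] \wedge \neg \subv[k-1]$ are pairwise disjoint and
\[
  \subv \;=\; \subv[q_{\hat{\cD}}] \;=\; \bigsqcup_{k=1}^{q_{\hat{\cD}}} \bigl(\subv[k] \wedge \neg \subv[k-1]\bigr)\,.
\]
Hence
\[
  \Pr[\subv] \;=\; \sum_{k=1}^{q_{\hat{\cD}}} \Pr\bigl[\subv[k] \wedge \neg \subv[k-1]\bigr]\,.
\]

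Then I would invoke Corollary \ref{Coro:subv}, which gives
\[
  \Pr\bigl[\subv[k] \wedge \neg \subv[k-1]\bigr] \;=\; O\!\left(\ell q_{\tilde{H}}/\sqrt{2^n} + \sqrt{\ell}\,\epsilon^{1/16}\right)
\]
uniformly in $k$, and sum the bound over $k = 1, \ldots, q_{\hat{\cD}}$. This yields the claimed estimate $\Pr[\subv] = O(\ell q_{\hat{\cD}} q_{\tilde{H}}/\sqrt{2^n} + \sqrt{\ell}\, q_{\hat{\cD}}\, \epsilon^{1/16})$.

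There is no real obstacle in this final step: all of the substantive work---the silent-term argument, the partial-assignment resampling trick that converts subversion probability into a count of consistent completions, and conditioning on regularity of the transcript---has already been carried out inside Corollary \ref{Coro:subv}. The theorem is an immediate consequence via a union bound, directly parallel to the way Theorem \ref{thm:unpred.} is obtained from Corollary \ref{Coro:unpre}.
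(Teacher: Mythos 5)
Your proposal is correct and matches the paper's argument exactly: the paper likewise obtains Theorem~\ref{thm:subv} by applying a union bound over $k$ to Corollary~\ref{Coro:subv}, just as Theorem~\ref{thm:unpred.} follows from Corollary~\ref{Coro:unpre}. Your decomposition of $\subv$ into the disjoint events $\subv[k] \wedge \neg\subv[k-1]$ is the standard way to make that union bound precise, and no further work is needed.
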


Theorem \ref{thm:subv} can be proved by applying the union bound to Corollary \ref{Coro:subv} over $k$.

\subsection{Controlling the self-referential probability}

Now we proceed to prove that the crisis event $\selfref$ is negligible. We prove this result by establishing a stronger theorem that says the entire data table in Game~\ref{Game:construction-clean}.1 has a negligible chance to have a self-referential constellation.

\begin{theorem}\label{thm:selfref} In
  Game~\ref{Game:construction-clean}.1, for any distinguisher $\cD$
  \[
    \Pr_{h, R}[\selfref] = O(q_{\tilde{H}} \ell 2^{-2n}).
  \]
\end{theorem}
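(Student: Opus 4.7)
The plan is to establish a per-$x$ bound of $\Pr_{h,R}[\text{self-ref at } x] = O(\ell q_{\tilde{H}}/2^{3n})$ and then apply a union bound over the $2^n$ choices of $x \in \{0,1\}^n$ to obtain the total bound $O(\ell q_{\tilde{H}}/2^{2n})$. This matches the ``stronger'' form of the claim suggested in the lead-in to the theorem: the bound quantifies over \emph{all} $x$, not merely those queried by $\cD$, so it is independent of the distinguisher's behavior. For a fixed $x$, the computation of $\tilde{g}_R(x) = \bigoplus_{i=1}^{\ell} \tilde{h}_i(x \oplus r_i)$ issues at most $\ell q_{\tilde{H}}$ queries to $h$, and self-ref at $x$ corresponds to one of these queries being exactly $(0, \tilde{g}_R(x))$; a further union bound over the $\leq \ell q_{\tilde{H}}$ potential query slots reduces matters to showing a per-slot hit probability of at most $1/2^{3n}$.

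For the per-slot bound, I would exploit the XOR structure together with the randomness of $R$ by identifying a ``safe'' index $i^* \in [\ell]$ such that (a) the point $(i^*, x \oplus r_{i^*})$ is not queried by any other sub-computation $\tilde{H}(j, x \oplus r_j)$ with $j \neq i^*$, and (b) $\tilde{H}(i^*, x \oplus r_{i^*}) = h_{i^*}(x \oplus r_{i^*})$ (the unsubverted case, which by the constraint on $\tilde{H}$ holds with probability at least $1-\epsilon$ over uniform $r_{i^*}$). Conditioned on the existence of such an $i^*$, the term $h_{i^*}(x \oplus r_{i^*})$ is a uniform element of $\{0,1\}^{3n}$ that is independent of the rest of the execution (including all queries to $h_0$), so $\tilde{g}_R(x) = h_{i^*}(x \oplus r_{i^*}) \oplus \bigoplus_{j \neq i^*} \tilde{h}_j(x \oplus r_j)$ is uniform over $\{0,1\}^{3n}$ and independent of the query set. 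Thus any fixed $h_0$-query hits it with probability at most $1/2^{3n}$.

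The main obstacle is controlling the failure event ``no safe $i^*$ exists,'' which must be bounded by $O(\ell q_{\tilde{H}}/2^{3n})$ per $x$ in order not to dominate. The key observation is that each sub-computation $\tilde{H}(j, x \oplus r_j)$'s query set depends only on its own $r_j$ and $h$, so $r_i$ is independent of the queries made by other sub-computations that could ``hit'' the target $(i, x \oplus r_i)$. A direct counting argument gives a per-$i$ failure probability of at most $\ell q_{\tilde{H}}/2^n$. Exploiting the mutual independence of the $r_i$'s through a careful conditional argument---for instance, ordering the indices and revealing the $r_i$'s one at a time while tracking the conditional probability that each new target lies in the growing set of queries already fixed by previously exposed sub-computations---I would push this through to a product-style bound of the form $\Pr[\text{no safe } i^*] \lesssim (\ell q_{\tilde{H}}/2^n)^{\ell} + \ell \epsilon$. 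For $\ell > n+4$ and $\epsilon$ negligible, this failure probability is negligible and in particular much smaller than $\ell q_{\tilde{H}}/2^{3n}$. Assembling the pieces and invoking the outer union bound over $x$ then yields the claimed bound $O(\ell q_{\tilde{H}}/2^{2n})$.
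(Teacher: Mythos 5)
Your high-level skeleton (per-$x$ bound of $O(\ell q_{\tilde{H}}2^{-3n})$, union bound over all $2^n$ inputs, and a per-query-slot hit probability of $2^{-3n}$) matches the paper's, but the mechanism you propose for the per-slot bound is the ``good/invisible term plus resampling'' argument, which is what the paper uses for $\pred$ and $\subv$ --- \emph{not} for $\selfref$. For $\selfref$ the paper instead conditions on $h_*$ and on $r_t$ (which fixes the query set of $\tilde{h}_t(x\oplus r_t)$ as a function of $h_*$ alone) and proves, via a Fourier/Chernoff argument (Lemmas~\ref{lemma:tvd} and~\ref{lemma:no-con}), that the conditional distribution of $\tilde{g}_R(x)$ over the remaining coordinates of $R$ is within $2^{3n-1}(n2^{-n/2}+\epsilon)^{\ell/2}$ of uniform for all but a $2^{-\Omega(n^2\ell)}$ fraction of $h_*$. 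The crucial feature is that the error terms are \emph{super-exponentially} small, so they survive the union bound over all $2^n\ell$ pairs $(x,t)$.

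This is exactly where your argument has a genuine gap. Your bound $\Pr[\text{no safe }i^*]\lesssim(\ell q_{\tilde{H}}/2^n)^{\ell}+\ell\epsilon$ contains an \emph{additive} $\ell\epsilon$ term, and $\epsilon$ is only assumed negligible (say $2^{-\log^2 n}$), not exponentially small. Your claim that $\ell\epsilon$ is ``much smaller than $\ell q_{\tilde{H}}/2^{3n}$'' is backwards: a negligible quantity is in general vastly \emph{larger} than $2^{-3n}$, and after the union bound over $x$ the contribution $2^n\cdot\ell\epsilon$ need not even be $o(1)$. To make a good-term argument survive the union bound over all inputs, every $\epsilon$-dependence must appear exponentiated, as $\epsilon^{\Theta(\ell)}$ --- which is precisely what the paper's factor $(n2^{-n/2}+\epsilon)^{\ell/2}$ achieves and what your additive term does not. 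There is also a secondary soundness issue: your claim that the $h_0$-query set is independent of $h_{i^*}(x\oplus r_{i^*})$ fails for the sub-computation at index $i^*$ itself, which may query its own point and then adaptively query $h_0$ at locations depending on the answer; moreover, conditions (a) and (b) as you state them do not guarantee that $\tilde{H}$'s behavior is stable under \emph{resampling} $h_{i^*}(x\oplus r_{i^*})$ (the paper's ``honest''/``invisible'' definitions are explicitly resampling-robust for this reason, and even they only come with $\epsilon^{1/4}$-type, i.e.\ negligible, guarantees --- again too weak for a $2^n$-fold union bound). The introduction of the paper flags exactly this obstacle and adopts the Fourier route to circumvent it.
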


We state a standard Chernoff bound; see~\cite{MotwaniR95}, or
\cite{Vu:Chernoff} for this particular formulation.
\begin{lemma}(Chernoff Bound)\label{lemma:chernoff}
Let $X_1$,...,$X_n$ be discrete, independent random variable such that $E[X_i]$=0 and $|X_i| \leq 1$ for all i. Let $X=\sum^n_{i=1} X_i$ and $\sigma^2$ be the variance of X. Then $\Pr[|X| \geq \lambda\sigma]\leq  2e^{-\lambda^2/4}$
for any $0\leq\lambda\leq 2\sigma$.
\end{lemma}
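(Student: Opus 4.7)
The plan is to establish the bound via the standard exponential-moment (Cram\'er--Chernoff) method, applied to each tail separately and combined by a union bound at the end.

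First, I would prove the single-variable estimate: for any $t$ with $|t|\leq 1$ and any $i$,
\[
  \Exp[e^{tX_i}] \leq 1 + t^2\,\Exp[X_i^2].
\]
The key pointwise inequality is $e^{y}\leq 1+y+y^2$ for $|y|\leq 1$, which I would verify by analyzing $f(y)=1+y+y^2-e^y$: one checks $f(0)=f'(0)=0$, $f''(y)=2-e^y>0$ for $y<\ln 2$, and concludes that $f'\geq 0$ on $[0,1]$ and $f'\leq 0$ on $[-1,0]$, so $f\geq 0$ on $[-1,1]$. Applying this with $y=tX_i$ (permitted since $|tX_i|\leq 1$) and taking expectations, the hypothesis $\Exp[X_i]=0$ eliminates the linear term.

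Second, I would combine across $i$ using independence and the elementary inequality $1+x\leq e^x$:
\[
  \Exp[e^{tX}] = \prod_{i=1}^n \Exp[e^{tX_i}] \leq \prod_{i=1}^n\bigl(1+t^2\,\Exp[X_i^2]\bigr) \leq \exp\bigl(t^2 \sigma^2\bigr),
\]
using $\sigma^2=\sum_i \Exp[X_i^2]$ (since $\Exp[X_i]=0$ makes $\Exp[X_i^2]$ the variance and variances add under independence).

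Third, I would apply Markov on the exponential: for any $a>0$ and $0<t\leq 1$,
\[
  \Prob{X\geq a} \leq e^{-ta}\,\Exp[e^{tX}] \leq e^{-ta+t^2\sigma^2}.
\]
Setting $a=\lambda\sigma$ and optimizing over $t$ yields $t^\star=\lambda/(2\sigma)$ with resulting bound $e^{-\lambda^2/4}$. The hypothesis $\lambda\leq 2\sigma$ is exactly what ensures $t^\star\leq 1$, so that the single-variable estimate from Step~1 remains applicable. The identical argument applied to $-X$ (whose summands also satisfy zero mean and $|{-X_i}|\leq 1$) controls the lower tail, and a union bound over the two tails produces the factor of $2$.

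The main obstacle, such as it is, is simply tracking the validity regime of the pointwise inequality: the estimate $e^y\leq 1+y+y^2$ breaks once $|y|$ grows beyond $1$, and this is precisely what forces the hypothesis $\lambda\leq 2\sigma$ after optimizing the Markov exponent. Everything else is routine convexity, independence, and Markov.
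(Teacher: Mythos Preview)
Your proof is correct and follows the standard Cram\'er--Chernoff exponential-moment argument. The paper, however, does not prove this lemma at all: it is stated as a standard Chernoff bound with citations to \cite{MotwaniR95} and \cite{Vu:Chernoff}, and is used as a black box in the subsequent Fourier-analytic estimates. So there is nothing to compare against; you have simply supplied a full, self-contained derivation where the authors chose to cite.
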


In the next two lemmas, we introduce the Fourier transformation to analyze the total variation distance between the distribution of
$\tilde{g}_R(x)$ and the uniform distribution. 

\begin{definition}[Character and Dual group]
  Suppose $G$ is a finite abelian group. A character $\chi$ of $G$ is
  a group homomorphism from $G$ into the multiplicative group $T$ of
  complex numbers of norm 1. We define the dual group $\hat{G}$ to be
  the set of all characters of $G$; these make an abelian group under
  the pointwise product operation
  $[\chi \sigma](x) = \chi(x) \cdot \sigma(x)$. We let
  $\mathbf{1}: G \rightarrow T$ denote the trivial character
  $\mathbf{1}: g \mapsto 1$. 
\end{definition}

It is a fact that $|G| = |\hat{G}|$ for any finite abelian group $G$
and, furthermore, that the functions of $\hat{G}$ are linearly
independent: that is, if $\sum_\chi a_\chi \chi$ is the zero function
for some collection of coefficients $a_\chi \in \CC$, then all
$a_\chi = 0$.

\begin{definition}[Discrete Fourier transform]
  Suppose $G$ is a finite abelian group. Let $L^2(G)$ be the set of
  all complex-valued functions on $G$. The discrete Fourier transform
  on $f \in L^2(G)$ is defined by
  $$\hat{f}(\chi)=\sum_{a \in G}f(a)\overline{\chi(a)}=\langle f,\chi \rangle, \  \text{for}\ \chi \in \hat{G}.$$
  Here $\langle \,, \rangle$ denotes the inner product of
  complex-valued functions on $G$. We remark that if
  $f: G \rightarrow \RR$ is a probability distribution then
  $\hat{f}(\mathbf{1}) = 1$ and $|\hat{f}(\chi)| \leq 1$ for all
  $\chi$
\end{definition}

\begin{lemma}
  Define convolution by
\[
  (f \ast g)(x)=\sum_{y \in G}f(y)g(x-y),\ \text{for}\ x \in G.
\]
Then
\[
  \widehat{f \ast g}(\chi)=\hat{f}(\chi) \cdot \hat{g}(\chi),\ \text{for all}\ \chi \in \hat{G}\,.
\]
\end{lemma}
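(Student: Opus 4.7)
The statement is the classical convolution theorem for the discrete Fourier transform on a finite abelian group, and the plan is to prove it by direct computation, unfolding the definitions and then using the multiplicativity of a character to factor the sum. The only nontrivial ingredient is the homomorphism property $\chi(a+b) = \chi(a)\chi(b)$ of a character, and its immediate corollary $\overline{\chi(a+b)} = \overline{\chi(a)}\,\overline{\chi(b)}$; everything else is bookkeeping on finite sums.

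\paragraph{The computation I would carry out.} First I would expand the left-hand side using the definitions of the Fourier transform and convolution to obtain
\[
\widehat{f \ast g}(\chi) \;=\; \sum_{x \in G} (f \ast g)(x)\,\overline{\chi(x)}
\;=\; \sum_{x \in G} \sum_{y \in G} f(y)\, g(x-y)\, \overline{\chi(x)}\,.
\]
Next, because $G$ is finite, Fubini applies trivially; I would swap the order of summation and perform the change of variables $z := x-y$ (for each fixed $y$, the map $x \mapsto z$ is a bijection of $G$), yielding
\[
\widehat{f \ast g}(\chi) \;=\; \sum_{y \in G} f(y) \sum_{z \in G} g(z)\,\overline{\chi(y+z)}\,.
\]
Then I would invoke the character property $\chi(y+z) = \chi(y)\,\chi(z)$ (and hence $\overline{\chi(y+z)} = \overline{\chi(y)}\,\overline{\chi(z)}$) to separate the double sum into a product, giving
\[
\widehat{f \ast g}(\chi) \;=\; \Bigl(\sum_{y \in G} f(y)\,\overline{\chi(y)}\Bigr) \Bigl(\sum_{z \in G} g(z)\,\overline{\chi(z)}\Bigr) \;=\; \hat{f}(\chi)\cdot \hat{g}(\chi)\,,
\]
which is the desired identity.

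\paragraph{Obstacles.} There are essentially none: the result is a one-line calculation once the character homomorphism property is applied. The only ``care'' required is being explicit about the change of variables $z = x-y$ (noting that $G$ is an abelian group written additively, so subtraction is well-defined and the substitution is a bijection), and being explicit that passing the complex conjugate through the product $\chi(y)\chi(z)$ is legitimate. No analytic subtleties arise because $G$ is finite, so every sum converges absolutely and summation order can be swapped freely.
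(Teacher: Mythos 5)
Your computation is correct and is the standard proof of the convolution theorem: expand, swap the finite sums, substitute $z = x - y$, and factor via the homomorphism property $\overline{\chi(y+z)} = \overline{\chi(y)}\,\overline{\chi(z)}$. The paper states this lemma without proof as a classical fact, so there is nothing to compare against; your argument fills it in exactly as a textbook would.
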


\begin{lemma}\label{lemma:plancherel}
Let $Q$ be a probability distribution and $U$ be the uniform distribution on a finite abelian group $G$. Then, 
\begin{align*}
  \|Q-U\|_{\tv} \leq \frac{1}{2} \left (\sum\limits_{\chi \in \hat{G}, \chi \neq \mathbf{1}}\left|\hat{Q}(\chi)\right|^2 \right )^{1/2}\,.
\end{align*}
\end{lemma}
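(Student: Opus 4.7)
The plan is to reduce the total variation (an $\ell^1$-type quantity) to an $\ell^2$-type quantity via Cauchy--Schwarz, and then apply Plancherel's identity to pass to the Fourier side, where the uniform distribution conveniently only contributes at the trivial character.

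First I would bound $\|Q-U\|_{\tv} = \tfrac{1}{2}\sum_{x\in G}|Q(x)-U(x)|$ by Cauchy--Schwarz, writing $\sum_x |Q(x)-U(x)| = \sum_x 1\cdot|Q(x)-U(x)| \leq \sqrt{|G|}\cdot\bigl(\sum_x |Q(x)-U(x)|^2\bigr)^{1/2}$, so that
\[
\|Q-U\|_{\tv} \leq \tfrac{1}{2}\sqrt{|G|}\,\|Q-U\|_2\,.
\]
Next I would invoke Plancherel's identity for the normalization used in the excerpt (where $\hat{f}(\chi)=\sum_a f(a)\overline{\chi(a)}$), namely $\sum_x |f(x)|^2 = \tfrac{1}{|G|}\sum_{\chi\in\hat{G}} |\hat{f}(\chi)|^2$, applied to $f = Q-U$. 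This converts the $\ell^2$ norm into a sum over characters of $|\widehat{Q-U}(\chi)|^2/|G|$.

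The key simplification on the Fourier side is the observation that for the trivial character $\mathbf{1}$ we have $\hat{Q}(\mathbf{1}) = \sum_a Q(a) = 1 = \hat{U}(\mathbf{1})$, so $\widehat{Q-U}(\mathbf{1}) = 0$; and for any non-trivial $\chi$ the orthogonality of characters gives $\hat{U}(\chi) = \tfrac{1}{|G|}\sum_a \overline{\chi(a)} = 0$, so $\widehat{Q-U}(\chi) = \hat{Q}(\chi)$. Plugging these into Plancherel yields $\|Q-U\|_2^2 = \tfrac{1}{|G|}\sum_{\chi\neq\mathbf{1}} |\hat{Q}(\chi)|^2$, and combining with the Cauchy--Schwarz bound the factors of $|G|$ cancel to produce
\[
\|Q-U\|_{\tv} \leq \tfrac{1}{2}\Bigl(\sum_{\chi\neq\mathbf{1}} |\hat{Q}(\chi)|^2\Bigr)^{1/2}\,,
\]
as claimed. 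There is no real obstacle here; the only thing to be careful about is tracking the normalization convention used in the excerpt (unnormalized Fourier transform with $\tfrac{1}{|G|}$ appearing in Plancherel) so that the factors of $\sqrt{|G|}$ from Cauchy--Schwarz and $1/\sqrt{|G|}$ from Plancherel cancel exactly, which is what gives the clean constant $1/2$ in the final bound.
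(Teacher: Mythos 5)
Your proof is correct, and it is the standard argument (Cauchy--Schwarz followed by Plancherel, using that $\widehat{Q-U}$ vanishes at the trivial character and agrees with $\hat{Q}$ elsewhere), with the normalization constants tracked correctly so that the factors of $|G|$ cancel. The paper states this lemma without proof, treating it as a known fact (it is the Diaconis--Shahshahani upper bound lemma), so there is nothing to compare against; your write-up supplies exactly the omitted standard derivation.
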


In our application of the Fourier transform, $G$ is the group
$(\mathbb{Z}/2)^{3n}$. For any fixed $i \in [\ell]$ and
$x \in \{0,1\}^n$, we denote the distribution of $h_i(x \oplus r_i)$,
$\tilde{h}_i(x \oplus r_i)$ and $\tilde{g}_R(x)$ by $p_i^x$,
${p'}_i^{x}$ and $P^x$, respectively. These are random variables
defined by selection of $h_*$ and $R$. For a fixed value of
$h_*$, we further define $p^x_{i,h_*}$, $p'^x_{i,h_*}$, and
$P^x_{h_*}$ to be the distributions of $h_i(x \oplus r_i)$,
$\tilde{h}_i(x \oplus r_i)$ and $\tilde{g}_R(x)$ respectively, over the randomness of $R$. In most cases below we omit $x$ in the notation when there is
no ambiguity.

\begin{lemma}\label{lemma:tvd}
For any $x,r \in \{0,1\}^n$ and any $t \in [\ell]$,
\begin{align*}
  \Pr_{h_*} \left [\|P_{h_*;r,t}^x-U\|_{\tv} \geq 2^{3n-1}(n2^{-n/2}+\epsilon)^{\ell/2} \right ] \leq 2^{-\Omega(n^2 \ell)},
\end{align*}
where $P_{h_*;r,t}^x$ is the distribution of $\tilde{g}_R(x)$,
conditioned on $h_*$, over uniform $R$ with the constraint
$r_t=r$. $U$ is the uniform distribution on $G$ and $\epsilon$ here is
the (negligible) disagreement probability of~\eqref{eq:equal} above.
\end{lemma}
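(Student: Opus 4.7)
The plan is to attack the total variation distance via Fourier analysis on $G=(\mathbb{Z}/2)^{3n}$. By Lemma \ref{lemma:plancherel} together with the convolution identity $\widehat{f \ast g} = \hat f \hat g$, fixing $r_t = r$ freezes $\tilde h_t(x \oplus r)$ as a deterministic translate (given $h_*$), so that
$$|\hat{P}^x_{h_*;r,t}(\chi)|^2 \;=\; \prod_{i \neq t}|\hat{p}'_{i,h_*}(\chi)|^2\,,$$
where $p'_{i,h_*}$ is the pushforward distribution of $\tilde h_i$ on a uniform input (which is independent of $x$ since $x \oplus r_i$ is uniform). The task therefore reduces to establishing, with probability $1-2^{-\Omega(n^2\ell)}$ over $h_*$, a uniform upper bound on this product across every nontrivial character $\chi$.

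The next step is to decouple the subversion from the underlying hash. Because $\tilde h_i$ and $h_i$ disagree on at most an $\epsilon$ fraction of inputs, one has $|\hat p'_{i,h_*}(\chi) - \hat p_{i,h_*}(\chi)| \leq 2\epsilon$, where $p_{i,h_*}$ is the histogram of the clean $h_i$. The Fourier coefficient
$$2^n \hat p_{i,h_*}(\chi) \;=\; \sum_{Z \in \{0,1\}^n} \overline{\chi(h_i(Z))}$$
is a sum of $2^n$ independent, mean-zero, norm-one complex random variables whenever $\chi \neq \mathbf{1}$. Splitting into real and imaginary parts and invoking the Chernoff bound of Lemma \ref{lemma:chernoff} gives, for every fixed $i$ and every nontrivial $\chi$,
$$\Pr_{h_i}\bigl[\,|\hat{p}_{i,h_*}(\chi)| > n 2^{-n/2}\,\bigr] \;\leq\; 2^{-\Omega(n^2)}\,.$$

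To lift this per-coordinate estimate to a product bound, the key is to use independence of $h_1,\ldots,h_\ell$. For each fixed $\chi \neq \mathbf{1}$, the ``bad'' events $B_i^\chi := \{\,|\hat{p}_{i,h_*}(\chi)| > n 2^{-n/2}\,\}$ are independent Bernoullis of probability at most $p = 2^{-\Omega(n^2)}$, so the binomial tail estimate
$$\Pr\bigl[\,|\{i : B_i^\chi\}| \geq \ell/2\,\bigr] \;\leq\; \binom{\ell}{\ell/2}\, p^{\ell/2} \;\leq\; (4p)^{\ell/2} \;=\; 2^{-\Omega(n^2\ell)}$$
comfortably dominates the $2^{3n}$ loss from a union bound over characters. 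Hence, with probability $\geq 1 - 2^{-\Omega(n^2\ell)}$ over $h_*$, it holds simultaneously for every $\chi \neq \mathbf{1}$ that at least $\ell/2$ of the indices $i$ satisfy $|\hat p'_{i,h_*}(\chi)| \leq n 2^{-n/2} + 2\epsilon$. Bounding the remaining factors by $1$, applying Plancherel, and summing over the $2^{3n}$ nontrivial characters gives
$$\|P^x_{h_*;r,t} - U\|_{\tv} \;\leq\; \tfrac{1}{2}\cdot 2^{3n/2}\,(n 2^{-n/2} + 2\epsilon)^{(\ell-2)/2}\,,$$
which is absorbed by the stated bound $2^{3n-1}(n 2^{-n/2} + \epsilon)^{\ell/2}$.

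The main subtlety is the ``binomial amplification'' step: a naive union bound over $i$ of the per-coordinate Chernoff would only yield $2^{-\Omega(n^2)}$ failure probability, which is far too weak to pay for the $2^{3n}$ union bound over characters. What saves the argument is that the independence of $h_1,\ldots,h_\ell$ makes the events $B_i^\chi$ independent \emph{for each fixed} $\chi$, so we control the \emph{count} of bad coordinates rather than any single coordinate. This is precisely what produces the factor $\ell$ in the exponent and allows the product to be simultaneously small across all characters; the rest of the proof is routine Fourier bookkeeping.
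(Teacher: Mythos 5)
Your proposal is correct and follows essentially the same route as the paper: Plancherel plus the convolution identity reduces the claim to bounding $\prod_{i\neq t}|\hat p'_{i,h_*}(\chi)|$, a Chernoff bound on $\sum_Z \chi(h_i(Z))$ controls each clean coefficient, the $\epsilon$-fraction of subverted inputs perturbs each coefficient by $O(\epsilon)$, and the decisive step---that the product can only be large if at least $\ell/2$ independent coordinates are simultaneously bad, giving the $\binom{\ell}{\ell/2}p^{\ell/2}=2^{-\Omega(n^2\ell)}$ tail that survives the union bound over the $2^{3n}$ characters---is exactly the paper's argument. Your bookkeeping is in fact slightly more careful than the paper's (the $2\epsilon$ perturbation and the $(\ell-2)/2$ exponent accounting for the excluded index $t$), and these constant-factor discrepancies are harmless.
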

\begin{proof}
  Consider an arbitrary $i \in [\ell]$ and nontrivial
  $\chi \in \hat{G}$ (that is, $\chi \neq \mathbf{1}$).  Observe that
  for any $h_*$,
  \[
    \hat{p}_{i,h_*} = \sum_{v \in (\ZZ/2)^{3n}} \Pr_{r_i}[h_i(x \oplus r_i)
    = v] \overline{\chi(v)} = \frac{1}{2^n} \sum_{r_i} \chi(h_i(x \oplus r_i))\,,
  \]
  as $\chi()$ is always real. When $h_i$ is selected uniformly, the
  $\chi(h_i(x \oplus r_i))$ are i.i.d.\ random variables taking values
  in $\{\pm 1\}$ and we may apply Lemma \ref{lemma:chernoff} with
  $\lambda=n$; noting that the variance is $\sigma^2 = 2^n$ this yields
\begin{align*}
    \Pr_{h_*}[2^n\hat{p}_{i,h_*}(\chi) \geq n2^{n/2}] \leq 2e^{-n^2/4}.
\end{align*}
For any fixed $h_i$ and $x$, by assumption
$\Pr_{r_i}[\tilde{h}_i(x \oplus r_i) \neq h_i(x \oplus r_i)] \leq \epsilon$
and hence
$|\hat{p}_{i,h_*}(\chi) - \hat{p}'_{i,h_*}(\chi)| \leq \epsilon$ for
every $\chi$. In light of this, for any $t$ and $\chi$,
\begin{align*}
  \Pr_{h_*}[|\hat{P}_{h_*;t,r}(\chi)| \geq (n2^{-n/2}+\epsilon)^{\ell/2}] & = \Pr_{h_*} \left [\mathop \prod  \limits ^{\ell}_{i=1,i \neq t} |\hat{p'}_{i, h_*}(\chi)| \geq (n2^{-n/2}+\epsilon)^{\ell/2} \right ]\\
                                                                            & \leq \Pr_{h_*}[\exists \text{ half of the coordinates $i \in [\ell]/\{r_t\}$ such that $|\hat{p'}_{i, h_*}(\chi)| \geq n2^{-n/2}+\epsilon$}]\\
                                                                            & \leq \Pr_{h_*}[\exists \text{ half of the coordinates $i \in [\ell]/\{r_t\}$ such that $|\hat{p}_{i, h_*}(\chi)| \geq n2^{-n/2}$}]\\
                                                                            % & = 1-\Pr_{h_*}[\exists \text{ half of the coordinates $i \in [\ell]$ such that }\\
                                                                            % & %\mkern25mu  |\hat{p}_{i, h_*}(\chi)| \geq n2^{-n/2}]\\
                                                                            & \leq \binom{\ell}{\ell/2} \cdot (2e^{-n^2/4})^{\ell/2} = (8e^{-n^2/4})^{\ell/2}\,.
\end{align*}
%where $p'_{i,h_*}$ is the distribution $p'_i$ conditioned on $h_*$.
Thus,
\[
  \Pr_{h_*}[\exists \chi \in \hat{G} \setminus \{ \mathbf{1}\},\  |\hat{P}_{h_*;r,t}(\chi)| \geq (n2^{-n/2}+\epsilon)^{\ell/2}]
% & = 1-\Pr_{h_*,r}[\exists \chi \in \hat{G}, \chi \neq \mathbf{1},\  |\hat{P}_{h_*;r,t}(\chi)| \geq (n2^{-n/2}+\epsilon)^{\ell/2}]\\
  \leq 2^{6n} (8e^{-n^2/4})^{\ell/2}
\]
and we conclude that
\begin{align*}
  \Pr_{h_*} \left [\|P_{h_*;r,t}^x-U\|_{\tv} \geq \frac{1}{2}2^{3n}(n2^{-n/2}+\epsilon)^{\ell/2} \right ]
    & = \Pr_{h_*} \left [\frac{1}{2}\sum\limits_{\chi \in \hat{G}, \chi \neq \mathbf{1}}|\hat{P}_{h_*;r,t}(\chi)| \geq 2^{3n-1}(n2^{-n/2}+\epsilon)^{\ell/2} \right ]\\
    & \leq \Pr_{h_*}[\exists \chi \in \hat{G}, \chi \neq \mathbf{1},\  |\hat{P}_{h_*;r,t}(\chi)| \geq (n2^{-n/2}+\epsilon)^{\ell/2}]\\
    & \leq 2^{6n}(8e^{-n^2/4})^{\ell/2} = 2^{-\Omega(n^2 \ell)},
\end{align*}
where the first equality is Lemma \ref{lemma:plancherel}.
\end{proof}

Now we are ready to show that, with overwhelmingly probability, there is no self-referential constellation in the entire table.

\begin{lemma}\label{lemma:no-con}
  In Game~\ref{Game:construction-clean}.1, for some $R$, $h_*$ and
  $x \in \{0,1\}^n$, we say constellation $x$ is self-referential if
  there exists $t \in [\ell]$ such that $h_0(\tilde{g}_R(x))$ is
  queried by $\tilde{h}_t(x \oplus r_t)$.  Then, assuming that
  $\ell > n$, %, with overwhelming probability in $R$, $h$,
  \[
    \Pr_{R, h_*}[\text{there exists a self-referential constellation}]=O(q_{\tilde{H}} \ell 2^{-2n}).
    \]
\end{lemma}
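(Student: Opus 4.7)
The plan is to fix a candidate constellation $x \in \{0,1\}^n$ and candidate index $t \in [\ell]$, bound the probability (over $R, h_*$) that the value $\tilde{g}_R(x)$ lies among the at most $q_{\tilde{H}}$ queries that $\tilde{h}_t(x \oplus r_t)$ makes to $h_0$, and then take a union bound over the $2^n \ell$ choices of $(x,t)$. The essential lever is Lemma~\ref{lemma:tvd}: for a ``typical'' choice of $h_*$ and any fixed $r_t$, the residual randomness $(r_i)_{i \neq t}$ renders $\tilde{g}_R(x)$ nearly uniform on $\{0,1\}^{3n}$. Since conditioning on $h_*$ and $r_t$ also freezes the execution of $\tilde{h}_t(x \oplus r_t)$---in particular the set $Q_t$ of its queries to $h_0$, with $|Q_t| \leq q_{\tilde{H}}$, and $Q_t$ determined independently of the $r_i$ for $i \neq t$---the probability that $\tilde{g}_R(x)$ falls in $Q_t$ should be approximately $|Q_t|/2^{3n}$.

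Concretely, I would first apply Lemma~\ref{lemma:tvd} for every $r \in \{0,1\}^n$ and union bound over $r$ to conclude that, except on a set of $h_*$ of probability at most $2^n \cdot 2^{-\Omega(n^2 \ell)}$, the conditional distribution $P^x_{h_*; r, t}$ of $\tilde{g}_R(x)$ given $h_*$ and $r_t = r$ lies within total variation distance $\delta := 2^{3n-1}(n 2^{-n/2} + \epsilon)^{\ell/2}$ of uniform, \emph{simultaneously for every} $r \in \{0,1\}^n$. On this good event for $h_*$, further conditioning on the sampled $r_t$ determines $Q_t$; since this conditioning does not alter the distribution of $(r_i)_{i \neq t}$, the near-uniform estimate on $\tilde{g}_R(x)$ still applies, and I obtain $\Pr[\tilde{g}_R(x) \in Q_t \mid h_*, r_t] \leq |Q_t|/2^{3n} + \delta \leq q_{\tilde{H}}/2^{3n} + \delta$.

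A union bound over $(x,t)$ then gives a total failure probability of at most $2^n \ell \cdot (q_{\tilde{H}}/2^{3n} + \delta) + 2^n \ell \cdot 2^n \cdot 2^{-\Omega(n^2 \ell)}$. Under the hypothesis $\ell > n$, the quantity $\delta$ is bounded above by $2^{3n} \cdot (2n)^{\ell/2} \cdot 2^{-n\ell/4}$, which is super-polynomially smaller than $2^{-3n}$, and the bad-$h_*$ contribution is likewise negligible, so the dominant term is $q_{\tilde{H}} \ell / 2^{2n}$, matching the claim. The genuinely delicate estimate---the Fourier-analytic concentration bound---is already Lemma~\ref{lemma:tvd}; I do not anticipate any additional obstacle beyond careful bookkeeping of what is conditioned on when, i.e.\ ensuring $Q_t$ is a function of $(h_*, r_t)$ alone while $\tilde{g}_R(x)$ retains its near-uniform distribution over the independent randomness $(r_i)_{i \neq t}$.
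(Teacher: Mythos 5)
Your proposal is correct and follows essentially the same route as the paper's proof: both invoke Lemma~\ref{lemma:tvd} to get near-uniformity of $\tilde{g}_R(x)$ conditioned on $(h_*, r_t)$, observe that the $h_0$-query set of $\tilde{h}_t(x \oplus r_t)$ is frozen by that conditioning and has size at most $q_{\tilde{H}}$, and then union bound over $x$, $t$ (and $r_t$), with the $q_{\tilde{H}} \ell 2^{-2n}$ term dominating. The only cosmetic difference is that you union bound the bad-$h_*$ event over all $r$ up front rather than folding it into a per-$(x,r,t)$ conditioning as the paper does; both yield the stated bound.
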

\begin{proof}
  For any $x,r \in \{0,1\}^n$ and any $t \in [\ell]$, we say $h_*$ is
  $(x,r,t)$-\emph{good} if
  $\|P_{h_*;r,t}^x-U\|_{\tv} \leq
  2^{3n-1}(n2^{-n/2}+\epsilon)^{\ell/2}$. Otherwise, we say $h_*$ is
  $(x,r,t)$-\emph{bad}. Let $P_{r,t}^x$ be the distribution of
  $\tilde{g}_R(x)$ over the randomness of $h_*$ and $R$ with the
  constraint $r_t=r$. Then,
\begin{align*}
  & \mkern20mu \Pr_{R, h_*}[\exists x\in\{0,1\}^n, \text{the constellation of $x$ is self-referential}]\\ & \leq  \sum_{x \in \{0,1\}^n} \sum_{t=1}^{\ell}\sum_{r \in \{0,1\}^n}\Pr[r_t=r]\Pr_{R/r_t,h_*}[\text{$h_0(\tilde{g}_R(x))$ is queried by $\tilde{h}_t(x \oplus r_t)$ with $r_t=r$}]\\
  & \leq \ell 2^{n} \max_{x,r,t} \Bigl\{\Pr_{R/r_t,h_*}[\text{$h_0(\tilde{g}_R(x))$ is queried by $\tilde{h}_t(x \oplus r)$} \mid \text{$h_*$ is $(x,r,t)$-good}]\cdot \Pr_{h_*}[\text{$h_*$ is $(x,r,t)$-good}]\\
  & {}\phantom{\leq \ell 2^{n} \max_{x,r,t} \Bigl\{ {}} +\Pr_{h_*}[\text{$h_*$ is $(x,r,t)$-bad}] \Bigr\} \\
  & \leq \ell 2^{n} \left\{(2^{3n-1}(n2^{-n/2}+\epsilon)^{\ell/2}+2^{-3n}  \cdot q_{\tilde{H}})+O(2^{-\Omega(n^2\ell)})\right\}\\
  & =O(q_{\tilde{H}} \ell 2^{-2n})\,,
\end{align*}
so long as $\ell > n$. (Note that for sufficiently large $n$, we have
$(n2^{-n/2}+\epsilon)^{\ell/2} =O(n^{\ell/2}2^{-\Theta(n\ell)}+\epsilon^{\Theta(\ell)})$.) In the last inequality we use the fact
that if $\| D_1 - D_2\| \leq \delta$ for two distributions $D_1$ and
$D_2$ then, for any event $E$,
$\Pr_{D_1}[E] \leq \Pr_{D_2}[E] + \delta$.
\end{proof}

Theorem~\ref{thm:selfref} follows immediately from Lemma~\ref{lemma:no-con}.  

\subsection{Crooked indifferentiability in the full model}

Now we show the simulator $\cS$ achieving abbreviated crooked
indifferentiability can be lifted to a simulator that achieves full
indifferentiability (Definition~\ref{def:abbrev-indiff-crooked}).

\begin{theorem}\label{Abbr. to full indiff}
If the construction in Section \ref{sec:construction} is $(n',n,q_{\cD},q_{\tilde{H}},r,\epsilon')$-Abbreviated-$H$-crooked-indifferentiable from a random oracle $F$, it is $(n',n,q_{\cD},q_{\tilde{H}},r,\epsilon'+O(q_{\cD}^2q_{\tilde{H}} \ell 2^{-3n}))$-$H$-crooked-indifferentiable from $F$.
\end{theorem}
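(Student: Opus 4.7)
I would prove the theorem by a reduction that compiles the full distinguisher's first phase into an abbreviated subversion. Given a full distinguisher $\sD$, I would construct an abbreviated distinguisher $\sD^*$ with subversion $\tilde{H}^*$ as follows: $\sD^*$ simulates $\sD$'s first phase internally, answering each distinct $H$-query with a fresh uniform response and storing the pairs in a table $T$; it then outputs $\tilde{H}^*$, which runs $\tilde{H}$ exactly but intercepts every oracle call on an input in $T$ and returns the stored value. In the second phase, $\sD^*$ forwards $\sD$'s construction/$F$-queries verbatim and intercepts $\sD$'s $H$-queries against $T$ as before. Note that $\tilde{H}^*$ makes no more oracle queries than $\tilde{H}$, and its disagreement fraction with $h$ exceeds $\epsilon$ by at most $q_{\cD}q_{\tilde{H}}/2^{n}$, which remains negligible and permits the abbreviated hypothesis to apply to $\sD^*$.

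In parallel I would define the full simulator $\cS_{\mathrm{full}}$: during phase one it answers each $H$-query with a fresh uniform value, recording a table $T_1$; during phase two it invokes $\cS_{\mathrm{abb}}$ on $(R,\langle\tilde{H}\rangle)$ with its internal $h_*$-tables pre-seeded with $T_1$, so any query or internal reference to a point in $T_1$ returns the committed $T_1$-value rather than a fresh random draw. The argument then chains four distributions via the triangle inequality: (i) the real full execution of $\sD$; (ii) the real abbreviated execution of $\sD^*$, which is identically distributed to (i) because the $T$-patched random function is itself a uniformly random function on its domain; (iii) the ideal abbreviated execution of $\sD^*$ with $\cS_{\mathrm{abb}}$, within $\epsilon'$ of (ii) by the abbreviated indifferentiability hypothesis applied to $\sD^*$; and (iv) the ideal full execution of $\sD$ with $\cS_{\mathrm{full}}$.

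The hard part will be controlling the gap between (iii) and (iv). I would isolate a single bad event $\mathsf{BadP1}$: at some phase-two step, $\cS_{\mathrm{abb}}$ attempts to program $h_{0}(y)=F(x)$ at a point $y$ that has already been assigned a value inside the committed pool of $h_{0}$-entries---either placed there by $T_1$ during phase one, or written as a side-effect entry by $\tilde{H}^*$'s internal queries during a prior constellation completion. Away from $\mathsf{BadP1}$ the tables maintained in the two ideal executions can be coupled to produce pointwise identical transcripts, so their total variation is at most $\Pr[\mathsf{BadP1}]$. To bound this probability I would use that each programming target $y=\tilde{g}_R(x)$ is $2^{-\Omega(n^{2}\ell)}$-close to uniform in $\{0,1\}^{3n}$ (Lemma~\ref{lemma:tvd}), that the accumulated pool of committed $h_{0}$-entries never exceeds $O(q_{\cD}+q_{\cD}\ell q_{\tilde{H}})=O(q_{\cD}\ell q_{\tilde{H}})$ (the $T_1$ entries plus the side-effect $h_{0}$-queries generated across all constellation completions), and that there are at most $q_{\cD}$ programming events overall. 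A union bound yields $\Pr[\mathsf{BadP1}]=O(q_{\cD}^{2}\,q_{\tilde{H}}\,\ell\,2^{-3n})$, and the triangle inequality across (i)--(iv) delivers the claimed bound of $\epsilon'+O(q_{\cD}^{2}\,q_{\tilde{H}}\,\ell\,2^{-3n})$.
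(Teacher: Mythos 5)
Your plan follows essentially the same architecture as the paper's proof: compile the full distinguisher's first phase into the abbreviated subversion algorithm, build the full simulator by pre-committing phase-one answers and then replaying them into the abbreviated simulator, chain the four executions by the triangle inequality, and charge the extra loss to a collision event between phase-one committed $h_0$ entries and the programming targets $\tilde{g}_R(\cdot)$, bounded via the Fourier-analytic near-uniformity of $\tilde{g}_R$ over the late-drawn $R$. The paper's compiled subversion re-asks the phase-one queries to the real oracle (giving exact equality of the two real transcripts) whereas you patch them with hardwired fresh values (giving only distributional equality and a slight inflation of the disagreement parameter $\epsilon$); both work.

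The one step where your sketch is too quick is the direct invocation of Lemma~\ref{lemma:tvd} on ``each programming target $y=\tilde{g}_R(x)$.'' That lemma concerns the distribution of $\tilde{g}_R(x)$ when $h_*$ is a uniformly random function fixed \emph{a priori}; but in the ideal execution the value $\tilde{g}_R(x)$ realized by the simulator is computed from tables that are populated adaptively and in which some $h_0$ entries have been \emph{programmed} to agree with $F$, so it is not immediately a function of independent uniform data. The paper bridges exactly this gap by introducing the ``ideal subversion'' $I(x)$ (computed from the a priori data sets $\DS_1,\DS_2$) together with the \textbf{Crossref} event, showing first that with probability $1-O(q_{\cD}^2 q_{\tilde{H}}\ell 2^{-3n})$ the realized $\tilde{g}_R(x)$ coincides with $I(x)$, and only then applying the Fourier bound to $I(x)$. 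Your union bound needs this intermediate step (or an equivalent conditioning argument) to be sound; as written, it silently identifies the adaptive quantity with the ideal one.
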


\begin{proof}
  Consider the following simulator $\cS_{F}$ built on $\cS$:
\begin{enumerate}
    \item Draw two data sets, $\DS_1$ and $\DS_2$: $\DS_1$ contains uniformly selected values for each $F(x)$
    and $h_i(x)$ for all $0 < i \leq \ell$ and all $x \in \{0,1\}^n$;
    $\DS_2$ contains uniformly selected values $h_0(y)$ for all
    $y \in \{0,1\}^{3n}$.
    \item In the first phase, $\cS_{F}$ answers $h_i(x)$ ($0 < i \leq \ell$) and $h_0(y)$ queries according to $\DS_1$ and $\DS_2$, respectively. 
    \item The second phase, after which $\cS_F$ receives $R$, is
      divided into two sub-phases.
    \begin{itemize}
    \item First, $\cS_F$ simulates $\cS$ in Game \ref{Game 4} with
      data sets $\DS_1$ and $\DS_2$ generated above. It then plays the
      role of the distinguisher, and asks $\cS$ all the questions that
      were actually asked by the the distinguisher in the first
      phase. $\cS_F$ \emph{aborts} the game if, in this sub-phase,
      there are two (simulated) queries $(x,h_0)$ and $(y,h_i)$
      ($0 < i \leq \ell$) such that $x=\tilde{g}_R(y \oplus r_i)$.
    \item Second, $\cS_{F}$ simulates $\cS$ and answers the
      second-phase questions from the distinguisher.
    \end{itemize}
\end{enumerate}
For an arbitrary full model distinguisher $\cD_F$, we construct the an
abbreviated model distinguisher $\cD$ as follows. The proof will show
that, with high probability, the execution that takes place between
$\cD$ and $\cS$ can be ``lifted'' to an associated execution between
$\cD_F$ and $\cS_F$.
\begin{enumerate}
\item Prior to the game, $\cD$ must publish a subversion algorithm
  $\tilde{H}$. This program is constructed as follows. To decide how
  to subvert a certain term $h_i(x)$, $\tilde{H}$ first simulates the
  first phase of $\cD_F$; all queries made by this simulation are
  asked as regular queries by $\tilde{H}$ and, at the conclusion, this
  first phase of $\cD_F$ produces, as output, a subversion algorithm
  $\tilde{H}_F$. $\tilde{H}$ then simulates the algorithm
  $\tilde{H}_F$ on the term $h_i(x)$.
\item In the game, $\cD$ simulates the queries of $\cD_{F}$ in
  $\cD_{F}$'s first phase. After that, $\cD$ continues to simulate
  $\cD_{F}$ in the second phase. (Note that at the point in
  $\cD_{F}$'s game where it produces the subversion algorithm
  $\tilde{H}$, this is simply ignored by $\cD$.)
\end{enumerate}

Now we are ready to prove $\cS_{F}$ is secure against the arbitrarily
chosen distinguisher $\cD_F$. We organize the proof around four different transcripts:
\begin{center}
\begin{tabular}{|c|l|}
  \hline
  $\gamma_{FC}$ & transcript of $\cC$ interacting with $\cD_{F}$\\ \hline
  $\gamma_C$ & transcript of $\cC$ interacting with $\cD$\\ \hline
  $\gamma_{FS}$ & transcript of $\cS_{F}$ interacting with $\cD_F$\\ \hline
  $\gamma_{S}$  & transcript of $\cS$ interacting with $\cD$\\ \hline
\end{tabular}
\qquad
\begin{tikzcd}
  \cS  &     & \cC &       & \cS_F\\
  & \cD \arrow[ul,"\gamma_S"] \arrow[ur,"\gamma_C"] &     & \cD_F \arrow[ul,"\gamma_{FC}"] \arrow[ur,"\gamma_{FS}"]& 
\end{tikzcd}
\end{center}
(Here $\cC$ denotes the construction, as usual.)
  % 
%We denote the transcript of the
%construction $\cC$ interacting with $\cD_{F}$ (and $\cD$) by
%$\gamma_{FC}$ (and $\gamma_C$, respectively), and denote the
%transcript of $\cS_{F}$ (and $\cS$) interacting with $\cD_{F}$ (and
%$\cD$, respectively) by $\gamma_{FS}$ (and $\gamma_S$,
%respectively).
Since
\[
  \|\gamma_{FC}-\gamma_{FS}\|_{\tv} \leq
  \|\gamma_{FC}-\gamma_C\|_{\tv}+\|\gamma_C-\gamma_S\|_{\tv}+\|\gamma_S-\gamma_{FS}\|_{\tv}\,,
\]
it is sufficient to prove the three terms in the right-hand side of
the inequality are all negligible.
\begin{itemize}
\item[-]  $\|\gamma_{FC}-\gamma_C\|_{\tv}=0$.
  This is obvious by observing that $\gamma_{FC}=\gamma_C$ when the underlying values of $H$ are the same.
\item[-]  $\|\gamma_C-\gamma_S\|_{\tv}= \epsilon'$.
  This is true because $\cS$ achieves abbreviated crooked indifferentiability.
\item[-] $\|\gamma_S-\gamma_{FS}\|_{\tv}= O(q_{\cD}^2q_{\tilde{H}} \ell 2^{-3n})$.  To prove this
  statement, we suppose both the full model game and the abbreviated
  model game select all randomness \emph{a priori} (as in the
  descriptions above). Suppose the game between $\cS_F$ and $\cD_F$
  and the game between $\cS$ and $\cD$ share the same data sets
  $\DS_1$, $\DS_2$ and $R$. Notice that the two games have same
  transcripts unless, $\cS_F$ aborts the game in the first sub-phase
  of the second phase. We denote this bad event by \textbf{Conflict},
  which by the following lemma \ref{lemma: many}, is negligible. \qedhere
 \end{itemize}
\end{proof}

N.b. While the description of the simulator above calls for all
randomness to be generated in advance, it is easy to see that the
simulator can in fact be carried out lazily with tables.

To explore the failure event above, we define the following game,
named \textbf{Exp-Many}, played by an unbounded adversary
$\mathcal{M}$.

\begin{mdframed}
\begin{center}
  \textsc{Exp-Many}
\end{center}
\begin{enumerate}
    \item Select two data sets $DS_1$ and $DS_2$: $\DS_1$ contains uniformly selected values for each $F(x)$
    and $h_i(x)$ for all $0 < i \leq \ell$ and all $x \in \{0,1\}^n$;
    $\DS_2$ contains uniformly selected values $h_0(y)$ for all
    $y \in \{0,1\}^{3n}$.
    \item Given $DS_1$ and $DS_2$, $\mathcal{M}$ publishes a
      subversion algorithm $\tilde{H}$ and a sequence of
      $p(n)(<q_{\cD})$ terms in the form of
      $(y,h_i)$ ($0< i \leq \ell$) or $(x,h_0)$.
    \item $R$ is selected uniformly.
    \item Implement Game~\ref{Game 4} with $DS_1$, $DS_2$, $R$, $\tilde{H}$ and the queries prepared above.
    \item Output 1 if, in Game~\ref{Game 4} of the last step, there
      exist two queries $(x,h_0)$ and $(y,h_i)$ ($i>0$) such that
      $x=\tilde{g}_R(y \oplus r_i)$. Otherwise, output 0.
\end{enumerate}
\end{mdframed}

\begin{lemma}\label{lemma: many}
For any distinguisher $\mathcal{M}$, $\Pr[\text{\textbf{Exp-Many} outputs 1}]=O(q_{\cD}^2q_{\tilde{H}} \ell 2^{-3n})$.
\end{lemma}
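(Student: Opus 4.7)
The plan is to exploit the fact that $\mathcal{M}$ commits to its subversion algorithm $\tilde{H}$ and the full list of $p(n) < q_{\cD}$ queries before $R$ is drawn, so that each committed pair $((x, h_0), (y, h_i))$ is a deterministic function of $h_*$ and is independent of $R$. The event $x = \tilde{g}_R(y \oplus r_i)$ then reduces to the pointwise near-uniformity of $\tilde{g}_R$ over uniform $R$, which is essentially the content of the Fourier-analytic Lemma~\ref{lemma:tvd}; the desired bound then follows via union bound.

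First I would upgrade Lemma~\ref{lemma:tvd} to a uniform statement over inputs: call $h_*$ \emph{nice} if, for every $z \in \{0,1\}^n$, the distribution $D_z$ of $\tilde{g}_R(z)$ over uniform $R$ has total variation distance at most $\delta := 2^{3n-1}(n 2^{-n/2} + \epsilon)^{\ell/2}$ from the uniform distribution on $\{0,1\}^{3n}$. Since $D_z$ is a convex combination of the conditional distributions $P^z_{h_*;u,1}$ over $u \in \{0,1\}^n$, the bound on each conditional from Lemma~\ref{lemma:tvd} implies the same bound on $D_z$ by the triangle inequality. A union bound over $z$ and $u$ (with $t = 1$ fixed) inflates the failure probability by $2^{2n}$, which is absorbed by the $2^{-\Omega(n^2\ell)}$ bound of Lemma~\ref{lemma:tvd}, yielding $\Pr_{h_*}[h_* \text{ not nice}] \leq 2^{-\Omega(n^2\ell)}$. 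For $\ell > n+4$ one has $\delta = O(2^{-3n})$, so on nice $h_*$ the quantity $\Pr_R[\tilde{g}_R(z) = x]$ is $O(2^{-3n})$ uniformly in $z$ and $x$.

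Finally, the $p(n) < q_{\cD}$ committed queries form at most $q_{\cD}^2$ candidate pairs, and for each pair the triple $(x, y, i)$ is a fixed function of $h_*$ independent of $R$. For nice $h_*$, each pair contributes at most $O(2^{-3n})$ to the bad probability; adding the $2^{-\Omega(n^2\ell)}$ failure probability of niceness gives the desired $O(q_{\cD}^2 q_{\tilde{H}} \ell \cdot 2^{-3n})$ bound (in fact $O(q_{\cD}^2 \cdot 2^{-3n})$, with the extra factors in the claim being slack). The main technical obstacle is making the union bound uniform across \emph{all} query triples the adversary might commit to depending on $h_*$---simply fixing a specific triple would not suffice since the adversary's choice is adaptive in $h_*$---and this is resolved precisely by the niceness definition, which asks for near-uniformity simultaneously for all $z$; the Fourier analysis already executed in Lemma~\ref{lemma:tvd} does the real work.
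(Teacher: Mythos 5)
There is a genuine gap. Your argument treats $\tilde{g}_R(y\oplus r_i)$ as a pure function of $(h_*,R)$ --- namely the function analyzed in Lemma~\ref{lemma:tvd}, with $h_i$ given by $\DS_1$ for $i>0$ and $h_0$ by $\DS_2$ --- and then appeals to its near-uniformity over $R$. But \textbf{Exp-Many} runs Game~\ref{Game 4}, in which $h_0$ entries are \emph{programmed} (drawn from the $F$-part of $\DS_1$) whenever a completed constellation reaches a fresh $h_0$ location. The subversion algorithm $\tilde{H}$, while evaluating $\tilde{h}_j(\cdot)$ for a later query, may query $h_0$ at one of these programmed points and therefore return a value different from the one it would return against the ``ideal'' tables; consequently the in-game value of $\tilde{g}_R$ at a queried constellation need not coincide with the ideal value, and is not independent of the interaction history (hence not of $R$) in the way your union bound requires. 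This is exactly why the paper introduces the ideal subversion $I(\cdot)$, the trace of a constellation, and the event $\mathrm{Crossref}$, and spends Lemma~\ref{crossref} showing $\Pr[\mathrm{Crossref}]=O(q_{\cD}^2q_{\tilde{H}}\ell 2^{-3n})$ before reducing (Lemma~\ref{lemma:many vs one}) to the single-pair game bounded in Lemma~\ref{lemma:one}. Your proposal silently assumes $\mathrm{Crossref}$ never happens; without bounding it the proof does not go through, and bounding it is where the $q_{\tilde{H}}$ factor in the statement actually comes from.

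A secondary, fixable slip: your ``niceness'' asks only that the \emph{unconditional} distribution of $\tilde{g}_R(z)$ over uniform $R$ be near-uniform for every fixed $z$. But the relevant input is $z=y\oplus r_i$, which is correlated with $R$, so you must condition on $r_i=r$ and use near-uniformity of $\tilde{g}_R(y\oplus r)$ over the remaining coordinates --- this is precisely why Lemma~\ref{lemma:tvd} carries the constraint $r_t=r$ and why the paper's ``all-good'' condition quantifies over all triples $(x,r,t)$, not just $t=1$. The remainder of your plan (union bound over the at most $q_{\cD}^2$ committed pairs, a simultaneous goodness condition on $h_*$ to handle the adversary's $h_*$-adaptive choice of queries, and the Fourier bound doing the real work) matches the paper's strategy.
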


A quick thought reveals that Lemma \ref{lemma: many} implies the negligibility of \textbf{Conflict}. Suppose there exists a distinguisher $\mathcal{M}$ in the full model such that \textbf{Conflict} is non-negligible. Consider the following distinguisher $\mathcal{M'}$ in \textbf{Exp-Many}: In Step 2 of \textbf{Exp-Many}, simulate $\mathcal{M}$ and publish the queries $\mathcal{M}$ publishes. Obviously, by definition of \textbf{Exp-Many}, the probability that \textbf{Exp-Many} outputs 1 when against $\mathcal{M'}$ is non-negligible.

Now we proceed to prove Lemma \ref{lemma: many}. We introduce several concepts that are useful in the proof. Given data sets $DS_1$, $DS_2$ and a subversion algorithm $\tilde{H}$, for any term $(x,h_i)$ with $0< i \leq \ell$ and $x \in \{0,1\}^n$, we define its \emph{ideal subversion} $I(x,i)$ to be the subverted value of $h_i(x)$ via the subversion algorithm $\tilde{H}$, using $h_0$ values in $DS_2$ and $h_i$ ($i>0$) values in $DS_1$. The trace of $(x,h_i)$ is defined to be the set
\newcommand{\Tr}{\mathsf{Tr}}
\[
  \Tr(x,i)=\{y \in \{0,1\}^{3n} \mid \text{$(y,h_0)$ is queried in the evaluation of $I(x,i)$}\}\,.
\]

Given $DS_1$, $DS_2$, $R$ and $\tilde{H}$, for any constellation $x$, its ideal output is
\[
  I(x)=\bigoplus_{i = 1}^{\ell} I(x \oplus r_i,i)\,,
\quad\text{and its \emph{trace} is defined by}\quad
  \Tr(x)=\bigcup_{i=1}^{\ell}\Tr(x \oplus r_i,i)\,.
\]

In \textbf{Exp-Many}, we define the event 
\[
\text{Crossref} = \left\{\text{In step 4, a constellation $x$ is queried such that $I(x) \neq \tilde{g}_R(x)$} \right\}\,.
\]

\begin{lemma}\label{crossref}
For any distinguisher $\mathcal{M}$ in \textbf{Exp-Many}, $\Pr[\mathrm{Crossref}]=O(q_{\cD}^2q_{\tilde{H}} \ell 2^{-3n})$.
\end{lemma}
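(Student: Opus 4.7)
The plan is to reduce the event \textbf{Crossref} to a pairwise condition on queried constellations, and then bound the pairwise probability by combining an elementary trace decomposition with the near-uniformity of $\tilde{g}_R(\cdot)$ already established in Lemma~\ref{lemma:tvd}.

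First, I would observe that the simulator's value $\tilde{g}_R(x)$ computed in Game~\ref{Game 4} disagrees with the ideal value $I(x)$ only if some $h_0$ query encountered during the evaluation of the trace $\Tr(x)$ returns a \emph{programmed} value rather than a value drawn from $\DS_2$. Direct queries $(y, h_0)$ issued by $\mathcal{M}$ only freshly assign $h_0(y):=\DS_2(y)$ and therefore cannot induce such a disagreement; programming of $h_0(y)$ occurs only when the simulator completes a previously queried constellation $x''$ with $\tilde{g}_R(x'') = y$. Focusing on the \emph{first} instance of \textbf{Crossref}, an inductive argument gives $\tilde{g}_R(x'') = I(x'')$ for every previously completed constellation $x''$, because the simulator programs $h_0(\tilde{g}_R(x))$ only after the entire evaluation of $\tilde{g}_R(x)$ is finished, so intra-constellation programming cannot corrupt the current evaluation. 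Consequently, \textbf{Crossref} entails the existence of two distinct queried constellations $x'' \neq x$ satisfying $I(x'') \in \Tr(x)$, and a union bound over the at most $q_{\cD}^2$ such ordered pairs reduces the claim to the pointwise estimate
\[
  \Pr\bigl[I(x'') \in \Tr(x)\bigr] \;=\; O\bigl(\ell\, q_{\tilde{H}} \cdot 2^{-3n}\bigr)
\]
for every fixed pair $(x, x'')$ with $x \neq x''$.

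To establish this pointwise estimate I would decompose the trace by index and then condition on a single coordinate of $R$:
\[
  \Pr\bigl[I(x'') \in \Tr(x)\bigr] \;\leq\; \sum_{s=1}^{\ell} \mathop{\mathbb{E}}_{r_s}\Bigl[\Pr\bigl[I(x'') \in \Tr(x \oplus r_s,\, s) \,\mid\, r_s\bigr]\Bigr].
\]
For any fixed $r$ and any fixed $h_*$, the set $\Tr(x \oplus r, s)$ is entirely determined and has size at most $q_{\tilde{H}}$; meanwhile, Lemma~\ref{lemma:tvd} applied to $(x'', r, s)$ guarantees that, except with probability $2^{-\Omega(n^2 \ell)}$ over $h_*$, the distribution of $I(x'') = \tilde{g}_R(x'')$ over the residual randomness of $R$ (conditioned on $r_s = r$) lies within total variation distance $2^{3n-1}(n 2^{-n/2} + \epsilon)^{\ell/2}$ of uniform on $\{0,1\}^{3n}$. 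Hence for such ``good'' $h_*$,
\[
  \Pr\bigl[I(x'') \in \Tr(x \oplus r,\, s) \,\mid\, h_*,\, r_s = r\bigr] \;\leq\; \frac{q_{\tilde{H}}}{2^{3n}} + 2^{3n-1}(n 2^{-n/2} + \epsilon)^{\ell/2}.
\]
Averaging over $r$, summing over $s \in [\ell]$, and accounting for the $2^{-\Omega(n^2 \ell)}$ failure probability of Lemma~\ref{lemma:tvd} yields the pointwise estimate $\ell q_{\tilde{H}}/2^{3n} + 2^{-\Omega(n\ell)}$; for $\ell > n$ the error term is absorbed into the main term, and a final union bound over the $q_{\cD}^2$ pairs delivers the announced estimate.

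The main obstacle I expect is the inductive reduction in the first paragraph: one must take care to exclude self-referential configurations ($x'' = x$) and to handle the possibility that a direct distinguisher query $(y, h_0)$ might create an $h_0$-assignment that interferes with a later trace evaluation. Both issues are resolved by the simulator's order of operations---programming of $h_0(\tilde{g}_R(x))$ strictly follows the full evaluation of $\tilde{g}_R(x)$, so self-reference is impossible---and by the observation that direct $h_0$ queries are answered from $\DS_2$ and therefore always match the ideal $I$. Once this reduction is in place, the Fourier-analytic near-uniformity of $\tilde{g}_R$ from Lemma~\ref{lemma:tvd} carries the argument in a manner parallel to the proof of Lemma~\ref{lemma:no-con}.
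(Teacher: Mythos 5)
Your proposal is correct and follows essentially the same route as the paper: reduce \textbf{Crossref} to a pairwise trace-membership condition $I(x'') \in \mathsf{Tr}(x)$ over the at most $q_{\cD}^2$ pairs of queries, then bound each pair by conditioning on a single coordinate of $R$ to fix a trace component of size at most $q_{\tilde{H}}$ and invoking the near-uniformity of $I(x'')$ from Lemma~\ref{lemma:tvd}. Your first paragraph in fact supplies more detail than the paper does for the reduction step (the paper dispatches it in one sentence), and your pointwise estimate is the paper's argument without the intermediate ``all-good'' bookkeeping.
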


\begin{proof}
  Notice that, if \textbf{Crossref} occurs, there are two queries
  $(x,h_i)$ and $(y,h_j)$ ($0<i,j< \ell$) such that $I(x \oplus r_i) \in
  \Tr(y \oplus r_j)$. We define the event \textbf{Two-cross} to be
  \[
    \parbox{10cm}{In \textbf{Exp-Many}, among the $p(n)$ queries made by $\mathcal{M}$, there are two terms $(x,h_i)$ and
        $(y,h_j)$ ($0<i,j< \ell$) such that
        $I(x \oplus r_i) \in
  \Tr(y \oplus r_j)$.}
\]
It is sufficient to show $\Pr[\textbf{Two-cross}]= O(q_{\cD}^2q_{\tilde{H}} \ell 2^{-3n})$.

For any two terms $(x,h_i)$ and $(y,h_j)$ ($0<i,j< \ell$) in the
queries of $\mathcal{M}$, whether $I(x \oplus r_i) \in
  \Tr(y \oplus r_j)$ only depends on
$DS_1$, $DS_2$, $\tilde{H}$ and $R$ (i.e., it has nothing to do with
the queries chosen by $\mathcal{M}$).

For any $x,r \in \{0,1\}^n$ and any $t \in [\ell]$, we say $(DS_1,DS_2)$ is
  $(x,r,t)$-\emph{good} if
  $\|P_{h_*;r,t}^x-U\|_{\tv} \leq
  2^{3n-1}(n2^{-n/2}+\epsilon)^{\ell/2}$, where $P_{h_*;r,t}^x$ is the distribution of $I(x)$,
conditioned on $DS_1$ and $DS_2$, over uniform $R$ with the constraint
$r_t=r$, and $U$ is the uniform distribution. Using the same proof in Lemma \ref{lemma:tvd}, we can show for any $x \in \{0,1\}^n$ and $t \in [\ell]$,
\begin{align*}
  \Pr_{DS_1,DS_2,r}  [\text{$(DS_1,DS_2)$ is
  $(x,r,t)$-\emph{good}}] \geq 1-2^{-\Omega(n^2 \ell)}.
\end{align*}

By Markov's inequality and the union bound, with probability smaller than $\ell 2^{-\Omega(n^2 \ell)}$ in the choice of $(DS_1,DS_2)$, we have 
\begin{align*}
  \Pr_{r}  [\text{$(DS_1,DS_2)$ is not
  $(x,r,t)$-\emph{good}} \mid (DS_1,DS_2)] \leq 2^{-\Omega(n^2 \ell)},
\end{align*}
for any $x \in \{0,1\}^n$ and $t \in [\ell]$. We say $(DS_1,DS_2)$ is all-good if it satisfies the inequality above.

By a Fourier transform argument similar to that in
Lemma~\ref{lemma:no-con}, if $(DS_1,DS_2)$ is all-good in \textbf{Exp-Many}, then for any $\tilde{H}$, any $0<i,j\leq \ell$ and any $x,y \in \{0,1\}^n$, and $\ell>n$,
$$ \Pr_{R}[I(x \oplus r_i) \in \Tr(y \oplus r_j)] \leq \sum_{k=1}^{\ell} \Pr_{R}[I(x \oplus r_i) \in \Tr(y \oplus r_j,k)] = O(q_{\tilde{H}} \ell 2^{-3n})\,$$
%so long as $\ell>n$.

Finally,
\begin{align*}
    & \mkern20mu \Pr[\textbf{Two-cross}]\\
    & < \Pr[\text{$(DS_1,DS_2)$ is all-good}]+\Pr[\textbf{Two-cross} \mid \text{$(DS_1,DS_2)$ is not all-good}]\\  
    & < 2^{-\Omega(n^2 \ell)}+\binom{p(n)}{2}O(q_{\tilde{H}} \ell 2^{-3n})\\
    & = O(q_{\cD}^2q_{\tilde{H}} \ell 2^{-3n}).
\end{align*}
\end{proof}

To complete the proof, we consider an even simpler game that focuses attention on a pair of queries.

\begin{mdframed}
\begin{center}
  \textsc{Exp-One}
\end{center}
\begin{enumerate}
    \item Select two data sets $DS_1$ and $DS_2$: $\DS_1$ contains uniformly selected values for each $F(x)$
    and $h_i(x)$ for all $0 < i \leq \ell$ and all $x \in \{0,1\}^n$;
    $\DS_2$ contains uniformly selected values $h_0(y)$ for all
    $y \in \{0,1\}^{3n}$.
    \item Given $DS_1$ and $DS_2$, $\mathcal{M'}$ publishes a subversion algorithm $\tilde{H}$ and a triple $((x,h_i),y)$ for some $0\leq i \leq \ell$, $x \in \{0,1\}^n$ and $y \in \{0,1\}^{3n}$.
    \item $R$ is selected uniformly.
    \item Implement Game~\ref{Game 4} with $DS_1$, $DS_2$, $R$, $\tilde{H}$ prepared above and the query $(x,h_i)$.
    \item Output 1 if in the last step of Game~\ref{Game 4},
      $\tilde{g}_R(x)=y$. Otherwise, output 0.
\end{enumerate}
\end{mdframed}

\begin{lemma}\label{lemma:many vs one}
  For any adversary $\mathcal{M}$ in \textbf{Exp-Many}, there exists an adversary $\mathcal{M'}$ in \textbf{Exp-One} such that 
\[
\Pr[\text{\textbf{Exp-One} outputs 1 against $\mathcal{M'}$}] > \frac{1}{p^2(n)} (\Pr[\text{\textbf{Exp-Many} outputs 1 against $\mathcal{M}$}]-\Pr[\text{Crossref}] )
\]
\end{lemma}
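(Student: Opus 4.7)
The plan is to reduce \textbf{Exp-Many} to \textbf{Exp-One} by a standard guessing argument. Given $\mathcal{M}$, I construct $\mathcal{M'}$ as follows. On input $(DS_1,DS_2)$, $\mathcal{M'}$ internally simulates $\mathcal{M}(DS_1,DS_2)$ to obtain the subversion algorithm $\tilde{H}$ and the list $L=(q_1,\ldots,q_{p(n)})$ of queries that $\mathcal{M}$ would publish; this simulation is exact since $\mathcal{M}$'s output is determined by its input. Then $\mathcal{M'}$ samples a pair $(a,b)$ uniformly at random from $[p(n)]\times[p(n)]$, and, if $q_a = (x,h_0)$ and $q_b = (y,h_j)$ with $j>0$, publishes $\tilde{H}$ together with the triple naturally extracted from the pair so that the \textbf{Exp-One} success condition corresponds to ``completing the constellation triggered by $q_b$ in Game~\ref{Game 4} yields $\tilde{g}_R$-value equal to the point $x$ of $q_a$.'' Otherwise, $\mathcal{M'}$ publishes an arbitrary (failing) triple.

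The key structural observation is that, conditioned on $\neg\mathrm{Crossref}$, the equality $\tilde{g}_R(x') = I(x')$ holds for every constellation $x'$ touched during the \textbf{Exp-Many} execution. Because $I$ depends only on $(DS_1,DS_2,R,\tilde{H})$ and is independent both of the order in which queries are processed and of any simulator programming of $h_0$, the event ``\textbf{Exp-Many} outputs $1$ and $\neg\mathrm{Crossref}$'' coincides with the purely static event that there exist indices $a,b\in[p(n)]$ whose queries form a ``colliding pair'' in the sense $I(y\oplus r_j) = x$. Whenever such a pair exists, $\mathcal{M'}$'s random guess selects a valid colliding pair with probability at least $1/p^2(n)$, and since the guess is computed using only $(DS_1,DS_2)$, it is independent of the later draw of $R$.

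On a correct guess, I claim \textbf{Exp-One} also outputs $1$. Indeed, in \textbf{Exp-One} the simulator processes exactly one query, so no prior programming of $h_0$ has occurred: its evaluation of $\tilde{g}_R$ on the constellation triggered by $q_b$ reads internal $h_0$-queries directly from $DS_2$ and $h_j$-queries ($j>0$) directly from $DS_1$, which is exactly the recipe defining $I$. Hence the colliding-pair property $I(y\oplus r_j)=x$ lifts to $\tilde{g}_R(y\oplus r_j)=x$ in \textbf{Exp-One}, yielding success. Combining the three points,
\[
\Pr[\text{Exp-One outputs 1}] \;\geq\; \frac{1}{p^2(n)}\,\Pr[\text{Exp-Many outputs 1} \wedge \neg\mathrm{Crossref}] \;\geq\; \frac{1}{p^2(n)}\bigl(\Pr[\text{Exp-Many outputs 1}] - \Pr[\mathrm{Crossref}]\bigr).
\]

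The main subtle step is the equivalence, under $\neg\mathrm{Crossref}$, between the dynamic event ``$\mathcal{M}$ produces a colliding pair in the full \textbf{Exp-Many} execution'' (where simulator programming can in principle perturb $\tilde{g}_R$) and the static predicate ``$\exists\, a,b : I(y\oplus r_j)=x$'' determined solely by $(DS_1,DS_2,R,\tilde{H})$. This equivalence, which follows directly from the definition of \textbf{Crossref}, is precisely what enables the reduction to a single-query game, since \textbf{Exp-One} has no mechanism for reproducing any history-dependent programming; verifying it carefully from the definition is the only nontrivial bookkeeping in the argument.
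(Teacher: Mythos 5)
Your proposal is correct and follows essentially the same route as the paper: simulate $\mathcal{M}$ on $(DS_1,DS_2)$ to recover $\tilde{H}$ and its query list, uniformly guess one $h_0$-type and one $h_i$-type ($i>0$) query, and use $\neg\mathrm{Crossref}$ to replace the dynamic quantity $\tilde{g}_R$ by the static ideal subversion $I$, incurring the $1/p^2(n)$ guessing loss. Your explicit justification that the single-query evaluation in \textbf{Exp-One} reads only from $DS_1,DS_2$ and hence reproduces $I$ exactly is a detail the paper leaves implicit, but it does not change the argument.
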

\begin{proof}
For an arbitrary adversary $\mathcal{M}$ in \textbf{Exp-Many}, consider the following adversary $\mathcal{M'}$ in \textbf{Exp-One}:
\begin{itemize}
    \item In step 2 of \textbf{Exp-One}
    \begin{enumerate}
    \item When given $DS_1$ and $DS_2$ in step 2 of \textbf{Exp-One}, $\mathcal{M'}$ simulates $\mathcal{M}$, gets a sequence of polynomial many terms in the form of $(x,h_i)$, and a subversion algorithm $\tilde{H}$. 
    \item Write the sequence as the union of two sets $A$ and $B$, where $A$ consists of elements of the form $(x,h_i)$ ($i>0$) and $B$ consists of elements of the form $(y,h_0)$. Uniformly select an element $(x,h_i)$ from $A$ and an element $(y,h_0)$ from $B$. Output $((x,h_i),y)$ and the subversion algorithm $\tilde{H}$.
    \end{enumerate}
  \item In step 4 of \textbf{Exp-One}\\
    $\mathcal{M'}$ play Game~\ref{Game 4} according to $DS_1$,
    $DS_2$, $R$ and $\tilde{H}$ determined in previous steps. In this
    game, $\mathcal{M'}$ queries $(x,h_i)$.
\end{itemize}

For \textbf{Exp-Many} with the adversary $\mathcal{M}$, if
\text{Crossref} does not occur, then for any term $(x,h_i)$ ($i>0$) in step 2, we have
$I(x \oplus r_i)=\tilde{g}_R(x \oplus r_i)$. If at the same time
\textbf{Exp-Many} outputs 1, there must exist $(y,h_0)$ and $(x,h_i)$
($i>0$) in step 2 such that
$y=\tilde{g}_R(x \oplus r_i)=I(x \oplus r_i)$.
\begin{align*}
    & \mkern20mu \Pr[\text{\textbf{Exp-One} outputs 1 against $\mathcal{M'}$}] \\
    & > \frac{1}{p^2(n)}\Pr[\text{There must exist $(y,h_0)$ and $(x,h_i)$ ($i>0$) in the sequence such that $y=\tilde{g}_R(x \oplus r_i)=I(x \oplus r_i)$}]\\
    & > \frac{1}{p^2(n)} (\Pr[\text{\textbf{Exp-Many} outputs 1}]-\Pr[\text{Crossref}]).
\end{align*}
\end{proof}

\begin{lemma}\label{lemma:one}
$\Pr[\text{\textbf{Exp-One} outputs 1}]=O(q_{\tilde{H}} \ell 2^{-3n}).$
\end{lemma}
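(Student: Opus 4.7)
The plan is to reduce the bound to a Fourier-analytic near-uniformity statement about $\tilde{g}_R(x)$ that follows from essentially the same argument as in Lemma~\ref{lemma:tvd} and Lemma~\ref{lemma:no-con}. The crucial observation is that in \textbf{Exp-One} the randomness $R = (r_1, \ldots, r_\ell)$ is drawn \emph{after} the adversary $\mathcal{M}'$ commits to $\tilde{H}$, $x$, and $y$ (all of which may depend on $DS_1, DS_2$). Consequently it suffices to show that, with overwhelming probability over $DS_1, DS_2$ and simultaneously for every subsequent choice of $x \in \{0,1\}^n$, the distribution $P$ of $\tilde{g}_R(x) = \bigoplus_{j=1}^\ell \tilde{h}_j(x \oplus r_j)$ over uniform $R$ is within total variation $o(2^{-3n})$ of the uniform distribution $U$ on $\{0,1\}^{3n}$. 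Once this is done, $\Pr_R[\tilde{g}_R(x) = y] \leq 2^{-3n} + o(2^{-3n}) = O(2^{-3n})$ for any adversarial $y$, which is strictly stronger than the stated bound.

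The near-uniformity of $P$ is established by the Fourier argument of Lemma~\ref{lemma:tvd}. For fixed $DS_1, DS_2, \tilde{H}, x$ and each coordinate $j$, let $p_j$ denote the distribution of $\tilde{h}_j(x \oplus r_j)$ over uniform $r_j$; independence of the $r_j$ gives $P = p_1 \ast \cdots \ast p_\ell$ and hence $\hat{P}(\chi) = \prod_j \hat{p}_j(\chi)$ for every character $\chi$ of $(\mathbb{Z}/2)^{3n}$. For each nontrivial $\chi$ and each $j$, the Chernoff bound from the proof of Lemma~\ref{lemma:tvd} gives $|\hat{p}_j(\chi)| \leq n 2^{-n/2} + \epsilon$ except with probability at most $2 e^{-n^2/4}$ over $DS_1, DS_2$ (the $n 2^{-n/2}$ term comes from the un-subverted $h_j$, and the $\epsilon$ term from the closeness assumption~\eqref{eq:equal}). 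The majority-coordinate argument then yields $|\hat{P}(\chi)| \leq (n 2^{-n/2} + \epsilon)^{\ell/2}$ except with probability $(8 e^{-n^2/4})^{\ell/2}$. Union-bounding over all $2^{3n}$ nontrivial characters and all $2^n$ candidate points $x$, Plancherel (Lemma~\ref{lemma:plancherel}) gives $\|P - U\|_{\tv} \leq 2^{3n-1}(n 2^{-n/2} + \epsilon)^{\ell/2}$ uniformly in $x$, outside an event of probability $2^{4n} \cdot (8 e^{-n^2/4})^{\ell/2} = 2^{-\Omega(n^2 \ell)}$ over $DS_1, DS_2$. Under the standing assumption $\ell > n + 4$, the quantity $2^{3n-1}(n 2^{-n/2} + \epsilon)^{\ell/2}$ is $o(2^{-3n})$, delivering the desired near-uniformity.

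The main subtlety will be ensuring that the Fourier bound is robust to the adversary's choice of $\tilde{H}$ based on $DS_1, DS_2$, since there are exponentially many possible $\tilde{H}$ and a naive union bound over them would fail. This is handled using the same device as in Lemma~\ref{lemma:tvd}: the per-coordinate Chernoff bound is stated for the un-subverted $h_j$ (which is fixed once $DS_1, DS_2$ are fixed, before $\tilde{H}$ is chosen), and is then transferred to the subverted $\tilde{h}_j$ via the universal closeness guarantee $\Pr_{r_j}[\tilde{h}_j(x \oplus r_j) \neq h_j(x \oplus r_j)] \leq \epsilon$ from~\eqref{eq:equal}, at an additive cost of $\epsilon$ per Fourier coefficient. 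Since this transfer holds for \emph{every} $\tilde{H}$ satisfying the subversion constraint, taking the union bound over only $x$ and $\chi$ (inside the good event for $DS_1, DS_2$) suffices, and the claimed bound $O(q_{\tilde{H}} \ell 2^{-3n})$ follows.
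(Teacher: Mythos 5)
Your proposal is correct and follows essentially the same route as the paper's proof, which likewise reduces \textbf{Exp-One} to the Fourier-analytic near-uniformity of $\tilde{g}_R$ from Lemma~\ref{lemma:tvd}, exploiting that $R$ is drawn only after $\mathcal{M}'$ commits to $\tilde{H}$, $(x,h_i)$ and $y$. One small imprecision: the constellation actually evaluated is that of $x\oplus r_i$, so the relevant distribution is the conditional one with $r_i$ fixed (an $(\ell-1)$-fold convolution plus the constant term $\tilde{h}_i(x)$), exactly the object $P_{h_*;r,t}^x$ of Lemma~\ref{lemma:tvd}, rather than the unconditional $\ell$-fold convolution at a fixed point; the identical Chernoff-plus-majority bound applies, so this does not affect correctness.
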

\begin{proof}
It is sufficient to show $\tilde{g}_R(x \oplus r_i)$ is unpredictable for any $(x,h_i)$ ($i>0$) in $DS_1$. 

The unpredictability comes from the fact that $R$ is unknown when the adversary publishes the queries. For any $(x,h_i)$ ($i>0$) in $DS_1$, select and fix an arbitrary $r_i$. The distribution of $\tilde{g}_R(x \oplus r_i)$ is then same as that of the sum of $\tilde{h}_j(x \oplus r_i \oplus r_j)$ for $j \neq i$, which can be proved to be unpredictable by a Fourier transform similar to that in Lemma~\ref{lemma:no-con}.
\end{proof}

\begin{proof}[Proof of Lemma \ref{lemma: many}]
Lemma \ref{lemma: many} can be derived by simply summing up Lemma \ref{crossref}, Lemma \ref{lemma:many vs one} and Lemma \ref{lemma:one}.
\end{proof}

\begin{proof}[Proof of Theorem \ref{thm:ind}]
Theorem \ref{thm:ind} can be proved by taking the sum of inequalities in Theorem \ref{thm:unpred.}, Theorem \ref{thm:subv}, Theorem \ref{thm:selfref} and Theorem \ref{Abbr. to full indiff}.
\end{proof}

\ignore{%%%%%%%

\begin{corollary}\label{cor:main} Consider a pair of algorithms $(H, Q)$ and a
  polynomial $p(n)$. Given a random string $z$ of length $p(n)$,
  $H^h(z; x)$ defines a subverted random oracle $\tilde{h}$, as above; the
  algorithm $Q^h(z;R)$, also given access to the randomness $R$
  defining $\tilde{h}$, generates a family of adaptive queries
  $q_1, \ldots, q_s$. Assume that for all $h$ and $z$,
  \begin{equation}\label{eq:test}
    \Pr_{x \in \{0,1\}^n}[\tilde{h}(x) \neq h(x)] = \negl(n)\,.
  \end{equation}
  Then, with high probability in $R$ and $z$ and conditioned on the
  $h(q_1), \ldots, h(q_s)$, for all $x$ outside the ``queried'' set $\{x \mid \text{$h_i(x \oplus r_i)$ was queried}\}$,
  %\begin{enumerate}
  %\item for any event $E$ with negligible probability,
    %$\Pr[\tilde{g}_R(x) \in E]$ is negligible, and
  %\item 
  the distribution of $\tilde{h}_R(x)$ has negligible
    statistical distance from uniform.
  %\end{enumerate}
\end{corollary}
}%%%%%%

%%% Local Variables:
%%% mode: latex
%%% TeX-master: "mainTrigger"
%%% End:

%!TEX root = mainTrigger.tex

\section{Conclusions and Open Problems}
In this paper, we initiate the study of correcting subverted random oracles which are adversarially tampered and disagree with the original random oracle at a negligible fraction of inputs. We give a simple construction that can be proven indifferentiable from a random oracle. Our analysis involves, for a given output produced: identifying a good term of the construction which is both honestly evaluated and independent of other terms for its input; and developing a new machinery of rejection resampling lemma (to assure the existence of this term). Our work provides a general tool to transform a buggy implementation of random oracle into a well-behaved one and directly applies to the kleptographic setting.

%Our analysis involves identifying a good term of the construction which is both honestly evaluated and independent with other terms for each input; and developing a new machinery of rejection resampling lemma. Our work provides a general tool to transform a buggy implementation of random oracle into a well-behaved one and directly apply in the kleptographic setting.

There are many interesting problems worth further exploring. Here we only list a few. First, a better construction that may tolerate a larger fraction of errors in the subverted random oracle. Second, develop a parallel theory of self-correcting distributions. Third, consider correcting other subverted ideal objectives such as ideal cipher. Fourth, given our extended model of indifferentiability in the presence of crooked elements, consider stronger indifferentiability models (e.g., with a global random oracle, more robust replacement theorem with more subverted components). Fifth, we may consider other type of constructions such as the Feistel structure or sponge construction that maybe more efficient. Last but not least, can we build connections to (or even a unified theory of) error correction, self-correcting programs, randomness extraction  and our problem of correcting subverted ideal objects?
% or consider tampered generic group model.
%As it is the first step of such a study, and random oracles are widely used in practice, there are many interesting open problems worth exploring. 

\section*{Acknowledgement}
Alexander Russell is supported in part by NSF grant CNS \#1801487, Qiang Tang was supported in part by NSF grant CNS \#1801492, and Hong-Sheng Zhou is supported in part by NSF grant CNS \#1801470.

\bibliographystyle{alpha}

%%%\bibliography{./crypto/abbrev3,./crypto/crypto,./crypto/extra,./crypto/extra2}

\newcommand{\etalchar}[1]{$^{#1}$}
\begin{thebibliography}{DGG{\etalchar{+}}15}

\bibitem[AAB{\etalchar{+}}15]{CACM:AABB+15}
Harold Abelson, Ross~J. Anderson, Steven~M. Bellovin, Josh Benaloh, Matt Blaze, Whitfield Diffie, John Gilmore, Matthew Green, Susan Landau, Peter~G. Neumann, Ronald~L. Rivest, Jeffrey~I. Schiller, Bruce Schneier, Michael~A. Specter, and Daniel~J. Weitzner.
\newblock Keys under doormats.
\newblock {\em Commun. {ACM}}, 58(10):24--26, 2015.

\bibitem[AMV15]{CCS:AteMagVen15}
Giuseppe Ateniese, Bernardo Magri, and Daniele Venturi.
\newblock Subversion-resilient signature schemes.
\newblock In Indrajit Ray, Ninghui Li, and Christopher Kruegel, editors, {\em ACM CCS 2015}, pages 364--375. {ACM} Press, October 2015.

\bibitem[BBCL13]{IEEESP:BBCL13}
Steven~M. Bellovin, Matt Blaze, Sandy Clark, and Susan Landau.
\newblock Going bright: Wiretapping without weakening communications infrastructure.
\newblock {\em {IEEE} Security {\&} Privacy}, 11(1):62--72, 2013.

\bibitem[BBP04]{EC:BelBolPal04}
Mihir Bellare, Alexandra Boldyreva, and Adriana Palacio.
\newblock An uninstantiable random-oracle-model scheme for a hybrid-encryption problem.
\newblock In Christian Cachin and Jan Camenisch, editors, {\em EUROCRYPT~2004}, volume 3027 of {\em {LNCS}}, pages 171--188. Springer, Heidelberg, May 2004.

\bibitem[BCFW09]{AC:BCFW09}
Alexandra Boldyreva, David Cash, Marc Fischlin, and Bogdan Warinschi.
\newblock Foundations of non-malleable hash and one-way functions.
\newblock In Mitsuru Matsui, editor, {\em ASIACRYPT~2009}, volume 5912 of {\em {LNCS}}, pages 524--541. Springer, Heidelberg, December 2009.

\bibitem[BF05]{C:BolFis05}
Alexandra Boldyreva and Marc Fischlin.
\newblock Analysis of random oracle instantiation scenarios for {OAEP} and other practical schemes.
\newblock In Victor Shoup, editor, {\em CRYPTO~2005}, volume 3621 of {\em {LNCS}}, pages 412--429. Springer, Heidelberg, August 2005.

\bibitem[BF06]{AC:BolFis06}
Alexandra Boldyreva and Marc Fischlin.
\newblock On the security of {OAEP}.
\newblock In Xuejia Lai and Kefei Chen, editors, {\em ASIACRYPT~2006}, volume 4284 of {\em {LNCS}}, pages 210--225. Springer, Heidelberg, December 2006.

\bibitem[BH15]{EC:BelHoa15}
Mihir Bellare and Viet~Tung Hoang.
\newblock Resisting randomness subversion: Fast deterministic and hedged public-key encryption in the standard model.
\newblock In Elisabeth Oswald and Marc Fischlin, editors, {\em EUROCRYPT~2015, Part~II}, volume 9057 of {\em {LNCS}}, pages 627--656. Springer, Heidelberg, April 2015.

\bibitem[BHK13]{C:BelHoaKee13}
Mihir Bellare, Viet~Tung Hoang, and Sriram Keelveedhi.
\newblock Instantiating random oracles via {UCEs}.
\newblock In Ran Canetti and Juan~A. Garay, editors, {\em CRYPTO~2013, Part~II}, volume 8043 of {\em {LNCS}}, pages 398--415. Springer, Heidelberg, August 2013.

\bibitem[BJK15]{CCS:BelJaeKan15}
Mihir Bellare, Joseph Jaeger, and Daniel Kane.
\newblock Mass-surveillance without the state: Strongly undetectable algorithm-substitution attacks.
\newblock In Indrajit Ray, Ninghui Li, and Christopher Kruegel, editors, {\em ACM CCS 2015}, pages 1431--1440. {ACM} Press, October 2015.

\bibitem[BK89]{STOC:BluKan89}
Manuel Blum and Sampath Kannan.
\newblock Designing programs that check their work.
\newblock In {\em 21st ACM STOC}, pages 86--97. {ACM} Press, May 1989.

\bibitem[BLR90]{STOC:BluLubRub90}
Manuel Blum, Michael Luby, and Ronitt Rubinfeld.
\newblock Self-testing/correcting with applications to numerical problems.
\newblock In {\em 22nd ACM STOC}, pages 73--83. {ACM} Press, May 1990.

\bibitem[Blu88]{Blum88}
Manuel Blum.
\newblock Designing programs that check their work, November 1988.
\newblock Technical Report TR-88-009, International Computer Science Institure. Available at \url{http://www.icsi.berkeley.edu/pubs/techreports/tr-88-009.pdf}.

\bibitem[BNR20]{BNR20}
Rishiraj Bhattacharyya, Mridul Nandi, and Anik Raychaudhuri.
\newblock Crooked indifferentiability revisited.
\newblock Private Communication, 2020.

\bibitem[BPR14]{C:BelPatRog14}
Mihir Bellare, Kenneth~G. Paterson, and Phillip Rogaway.
\newblock Security of symmetric encryption against mass surveillance.
\newblock In Juan~A. Garay and Rosario Gennaro, editors, {\em CRYPTO~2014, Part~I}, volume 8616 of {\em {LNCS}}, pages 1--19. Springer, Heidelberg, August 2014.

\bibitem[BR93]{CCS:BelRog93}
Mihir Bellare and Phillip Rogaway.
\newblock Random oracles are practical: {A} paradigm for designing efficient protocols.
\newblock In Dorothy~E. Denning, Raymond Pyle, Ravi Ganesan, Ravi~S. Sandhu, and Victoria Ashby, editors, {\em ACM CCS 93}, pages 62--73. {ACM} Press, November 1993.

\bibitem[BSW11]{TCC:BonSahWat11}
Dan Boneh, Amit Sahai, and Brent Waters.
\newblock Functional encryption: Definitions and challenges.
\newblock In Yuval Ishai, editor, {\em TCC~2011}, volume 6597 of {\em {LNCS}}, pages 253--273. Springer, Heidelberg, March 2011.

\bibitem[Can97]{C:Canetti97}
Ran Canetti.
\newblock Towards realizing random oracles: Hash functions that hide all partial information.
\newblock In Burton~S. {Kaliski Jr.}, editor, {\em CRYPTO'97}, volume 1294 of {\em {LNCS}}, pages 455--469. Springer, Heidelberg, August 1997.

\bibitem[Can01]{FOCS:Canetti01}
Ran Canetti.
\newblock Universally composable security: A new paradigm for cryptographic protocols.
\newblock In {\em 42nd FOCS}, pages 136--145. {IEEE} Computer Society Press, October 2001.

\bibitem[CCG{\etalchar{+}}16]{juniper_paper}
Stephen Checkoway, Shaanan Cohney, Christina Garman, Matthew Green, Nadia Heninger, Jacob Maskiewicz, Eric Rescorla, Hovav Shacham, and Ralf-Philipp Weinmann.
\newblock A systematic analysis of the {Juniper Dual EC} incident.
\newblock In {\em Proceedings of ACM CCS 2016}, 2016.
\newblock Full version available at \url{http://eprint.iacr.org/2016/376}.

\bibitem[CD08]{ICALP:CanDak08}
Ran Canetti and Ronny~Ramzi Dakdouk.
\newblock Extractable perfectly one-way functions.
\newblock In Luca Aceto, Ivan Damg{\aa}rd, Leslie~Ann Goldberg, Magn{\'u}s~M. Halld{\'o}rsson, Anna Ing{\'o}lfsd{\'o}ttir, and Igor Walukiewicz, editors, {\em ICALP 2008, Part~II}, volume 5126 of {\em {LNCS}}, pages 449--460. Springer, Heidelberg, July 2008.

\bibitem[CDGS18]{EC:CDGS18}
Sandro Coretti, Yevgeniy Dodis, Siyao Guo, and John~P. Steinberger.
\newblock Random oracles and non-uniformity.
\newblock In Jesper~Buus Nielsen and Vincent Rijmen, editors, {\em EUROCRYPT~2018, Part~I}, volume 10820 of {\em {LNCS}}, pages 227--258. Springer, Heidelberg, April~/~May 2018.

\bibitem[CDL17]{C:CamDriLeh17}
Jan Camenisch, Manu Drijvers, and Anja Lehmann.
\newblock Anonymous attestation with subverted {TPMs}.
\newblock In Jonathan Katz and Hovav Shacham, editors, {\em CRYPTO~2017, Part~III}, volume 10403 of {\em {LNCS}}, pages 427--461. Springer, Heidelberg, August 2017.

\bibitem[CDMP05]{C:CDMP05}
Jean-S{\'e}bastien Coron, Yevgeniy Dodis, C{\'e}cile Malinaud, and Prashant Puniya.
\newblock {Merkle-Damg{\aa}rd} revisited: How to construct a hash function.
\newblock In Victor Shoup, editor, {\em CRYPTO~2005}, volume 3621 of {\em {LNCS}}, pages 430--448. Springer, Heidelberg, August 2005.

\bibitem[CGH98]{STOC:CanGolHal98}
Ran Canetti, Oded Goldreich, and Shai Halevi.
\newblock The random oracle methodology, revisited (preliminary version).
\newblock In {\em 30th ACM STOC}, pages 209--218. {ACM} Press, May 1998.

\bibitem[CGH04]{TCC:CanGolHal04}
Ran Canetti, Oded Goldreich, and Shai Halevi.
\newblock On the random-oracle methodology as applied to length-restricted signature schemes.
\newblock In Moni Naor, editor, {\em TCC~2004}, volume 2951 of {\em {LNCS}}, pages 40--57. Springer, Heidelberg, February 2004.

\bibitem[CHK{\etalchar{+}}16]{JC:CHKPST16}
Jean-S{\'e}bastien Coron, Thomas Holenstein, Robin K{\"u}nzler, Jacques Patarin, Yannick Seurin, and Stefano Tessaro.
\newblock How to build an ideal cipher: The indifferentiability of the {Feistel} construction.
\newblock {\em Journal of Cryptology}, 29(1):61--114, January 2016.

\bibitem[CMR98]{STOC:CanMicRei98}
Ran Canetti, Daniele Micciancio, and Omer Reingold.
\newblock Perfectly one-way probabilistic hash functions (preliminary version).
\newblock In {\em 30th ACM STOC}, pages 131--140. {ACM} Press, May 1998.

\bibitem[CNE{\etalchar{+}}14]{Checkoway14}
Stephen Checkoway, Ruben Niederhagen, Adam Everspaugh, Matthew Green, Tanja Lange, Thomas Ristenpart, Daniel~J. Bernstein, Jake Maskiewicz, Hovav Shacham, and Matthew Fredrikson.
\newblock On the practical exploitability of dual {EC} in {TLS} implementations.
\newblock In {\em Proceedings of the 23rd {USENIX} Security Symposium, San Diego, CA, USA, August 20-22, 2014.}, pages 319--335, 2014.

\bibitem[CRT{\etalchar{+}}19]{PKC:CRTYZZ19}
Sherman S.~M. Chow, Alexander Russell, Qiang Tang, Moti Yung, Yongjun Zhao, and Hong-Sheng Zhou.
\newblock Let a non-barking watchdog bite: Cliptographic signatures with an offline watchdog.
\newblock In Dongdai Lin and Kazue Sako, editors, {\em PKC~2019, Part~I}, volume 11442 of {\em {LNCS}}, pages 221--251. Springer, Heidelberg, April 2019.

\bibitem[DFP15]{FSE:DegFarPoe15}
Jean~Paul Degabriele, Pooya Farshim, and Bertram Poettering.
\newblock A more cautious approach to security against mass surveillance.
\newblock In Gregor Leander, editor, {\em FSE~2015}, volume 9054 of {\em {LNCS}}, pages 579--598. Springer, Heidelberg, March 2015.

\bibitem[DGG{\etalchar{+}}15]{EC:DGGJR15}
Yevgeniy Dodis, Chaya Ganesh, Alexander Golovnev, Ari Juels, and Thomas Ristenpart.
\newblock A formal treatment of backdoored pseudorandom generators.
\newblock In Elisabeth Oswald and Marc Fischlin, editors, {\em EUROCRYPT~2015, Part~I}, volume 9056 of {\em {LNCS}}, pages 101--126. Springer, Heidelberg, April 2015.

\bibitem[DGK17]{EC:DodGuoKat17}
Yevgeniy Dodis, Siyao Guo, and Jonathan Katz.
\newblock Fixing cracks in the concrete: Random oracles with auxiliary input, revisited.
\newblock In Jean-S{\'{e}}bastien Coron and Jesper~Buus Nielsen, editors, {\em EUROCRYPT~2017, Part~II}, volume 10211 of {\em {LNCS}}, pages 473--495. Springer, Heidelberg, April~/~May 2017.

\bibitem[DKT16]{EC:DacKatThi16}
Dana {Dachman-Soled}, Jonathan Katz, and Aishwarya Thiruvengadam.
\newblock 10-round {Feistel} is indifferentiable from an ideal cipher.
\newblock In Marc Fischlin and Jean-S{\'{e}}bastien Coron, editors, {\em EUROCRYPT~2016, Part~II}, volume 9666 of {\em {LNCS}}, pages 649--678. Springer, Heidelberg, May 2016.

\bibitem[DM04]{JC:DziMau04}
Stefan Dziembowski and Ueli~M. Maurer.
\newblock Optimal randomizer efficiency in the bounded-storage model.
\newblock {\em Journal of Cryptology}, 17(1):5--26, January 2004.

\bibitem[DMS16]{C:DodMirSte16}
Yevgeniy Dodis, Ilya Mironov, and Noah {Stephens-Davidowitz}.
\newblock Message transmission with reverse firewalls---secure communication on corrupted machines.
\newblock In Matthew Robshaw and Jonathan Katz, editors, {\em CRYPTO~2016, Part~I}, volume 9814 of {\em {LNCS}}, pages 341--372. Springer, Heidelberg, August 2016.

\bibitem[DOP05]{C:DodOliPie05}
Yevgeniy Dodis, Roberto Oliveira, and Krzysztof Pietrzak.
\newblock On the generic insecurity of the full domain hash.
\newblock In Victor Shoup, editor, {\em CRYPTO~2005}, volume 3621 of {\em {LNCS}}, pages 449--466. Springer, Heidelberg, August 2005.

\bibitem[DP06]{TCC:DodPun06}
Yevgeniy Dodis and Prashant Puniya.
\newblock On the relation between the ideal cipher and the random oracle models.
\newblock In Shai Halevi and Tal Rabin, editors, {\em TCC~2006}, volume 3876 of {\em {LNCS}}, pages 184--206. Springer, Heidelberg, March 2006.

\bibitem[DP07]{EC:DodPun07}
Yevgeniy Dodis and Prashant Puniya.
\newblock {Feistel} networks made public, and applications.
\newblock In Moni Naor, editor, {\em EUROCRYPT~2007}, volume 4515 of {\em {LNCS}}, pages 534--554. Springer, Heidelberg, May 2007.

\bibitem[DPSW16]{C:DPSW16}
Jean~Paul Degabriele, Kenneth~G. Paterson, Jacob C.~N. Schuldt, and Joanne Woodage.
\newblock Backdoors in pseudorandom number generators: Possibility and impossibility results.
\newblock In Matthew Robshaw and Jonathan Katz, editors, {\em CRYPTO~2016, Part~I}, volume 9814 of {\em {LNCS}}, pages 403--432. Springer, Heidelberg, August 2016.

\bibitem[DS16]{C:DaiSte16}
Yuanxi Dai and John~P. Steinberger.
\newblock Indifferentiability of 8-round {Feistel} networks.
\newblock In Matthew Robshaw and Jonathan Katz, editors, {\em CRYPTO~2016, Part~I}, volume 9814 of {\em {LNCS}}, pages 95--120. Springer, Heidelberg, August 2016.

\bibitem[GK03]{FOCS:GolKal03}
Shafi Goldwasser and Yael~Tauman Kalai.
\newblock On the (in)security of the {Fiat}-{Shamir} paradigm.
\newblock In {\em 44th FOCS}, pages 102--115. {IEEE} Computer Society Press, October 2003.

\bibitem[GKMZ16]{SCN:GKMZ16}
Matthew~D. Green, Jonathan Katz, Alex~J. Malozemoff, and Hong-Sheng Zhou.
\newblock A unified approach to idealized model separations via indistinguishability obfuscation.
\newblock In Vassilis Zikas and Roberto {De Prisco}, editors, {\em SCN 16}, volume 9841 of {\em {LNCS}}, pages 587--603. Springer, Heidelberg, August~/~September 2016.

\bibitem[KLT15]{RSA:KatLucThi15}
Jonathan Katz, Stefan Lucks, and Aishwarya Thiruvengadam.
\newblock Hash functions from defective ideal ciphers.
\newblock In Kaisa Nyberg, editor, {\em CT-RSA~2015}, volume 9048 of {\em {LNCS}}, pages 273--290. Springer, Heidelberg, April 2015.

\bibitem[KNTX10]{PKC:KMTX10}
Akinori Kawachi, Akira Numayama, Keisuke Tanaka, and Keita Xagawa.
\newblock Security of encryption schemes in weakened random oracle models.
\newblock In Phong~Q. Nguyen and David Pointcheval, editors, {\em PKC~2010}, volume 6056 of {\em {LNCS}}, pages 403--419. Springer, Heidelberg, May 2010.

\bibitem[KOS10]{C:KilONeSmi10}
Eike Kiltz, Adam O'Neill, and Adam Smith.
\newblock Instantiability of {RSA}-{OAEP} under chosen-plaintext attack.
\newblock In Tal Rabin, editor, {\em CRYPTO~2010}, volume 6223 of {\em {LNCS}}, pages 295--313. Springer, Heidelberg, August 2010.

\bibitem[KP09]{EC:KilPie09}
Eike Kiltz and Krzysztof Pietrzak.
\newblock On the security of padding-based encryption schemes - or - why we cannot prove {OAEP} secure in the standard model.
\newblock In Antoine Joux, editor, {\em EUROCRYPT~2009}, volume 5479 of {\em {LNCS}}, pages 389--406. Springer, Heidelberg, April 2009.

\bibitem[Lev]{Vu:Chernoff}
Kirill Levchenko.
\newblock Chernoff bound.
\newblock \url{http://cseweb.ucsd.edu/~klevchen/techniques/chernoff.pdf}.

\bibitem[Lis07]{SAC:Liskov06}
Moses Liskov.
\newblock Constructing an ideal hash function from weak ideal compression functions.
\newblock In Eli Biham and Amr~M. Youssef, editors, {\em SAC 2006}, volume 4356 of {\em {LNCS}}, pages 358--375. Springer, Heidelberg, August 2007.

\bibitem[LN09]{C:LeuNgu09}
Ga{\"e}tan Leurent and Phong~Q. Nguyen.
\newblock How risky is the random-oracle model?
\newblock In Shai Halevi, editor, {\em CRYPTO~2009}, volume 5677 of {\em {LNCS}}, pages 445--464. Springer, Heidelberg, August 2009.

\bibitem[Men13]{ReutersRSA2013}
Joseph Menn.
\newblock {Exclusive: Secret contract tied NSA and security industry pioneer}.
\newblock {\em Reuters}, December 2013.

\bibitem[MR95]{MotwaniR95}
Rajeev Motwani and Prabhakar Raghavan.
\newblock {\em Randomized Algorithms}.
\newblock Cambridge University Press, 1995.

\bibitem[MRH04]{TCC:MauRenHol04}
Ueli~M. Maurer, Renato Renner, and Clemens Holenstein.
\newblock Indifferentiability, impossibility results on reductions, and applications to the random oracle methodology.
\newblock In Moni Naor, editor, {\em TCC~2004}, volume 2951 of {\em {LNCS}}, pages 21--39. Springer, Heidelberg, February 2004.

\bibitem[MS15]{EC:MirSte15}
Ilya Mironov and Noah {Stephens-Davidowitz}.
\newblock Cryptographic reverse firewalls.
\newblock In Elisabeth Oswald and Marc Fischlin, editors, {\em EUROCRYPT~2015, Part~II}, volume 9057 of {\em {LNCS}}, pages 657--686. Springer, Heidelberg, April 2015.

\bibitem[Mye01]{EC:Myers01}
Steven Myers.
\newblock Efficient amplification of the security of weak pseudo-random function generators.
\newblock In Birgit Pfitzmann, editor, {\em EUROCRYPT~2001}, volume 2045 of {\em {LNCS}}, pages 358--372. Springer, Heidelberg, May 2001.

\bibitem[Nie02]{C:Nielsen02a}
Jesper~Buus Nielsen.
\newblock Separating random oracle proofs from complexity theoretic proofs: The non-committing encryption case.
\newblock In Moti Yung, editor, {\em CRYPTO~2002}, volume 2442 of {\em {LNCS}}, pages 111--126. Springer, Heidelberg, August 2002.

\bibitem[NIT08]{PKC:NumIssTan08}
Akira Numayama, Toshiyuki Isshiki, and Keisuke Tanaka.
\newblock Security of digital signature schemes in weakened random oracle models.
\newblock In Ronald Cramer, editor, {\em PKC~2008}, volume 4939 of {\em {LNCS}}, pages 268--287. Springer, Heidelberg, March 2008.

\bibitem[Pas03]{C:Pass03}
Rafael Pass.
\newblock On deniability in the common reference string and random oracle model.
\newblock In Dan Boneh, editor, {\em CRYPTO~2003}, volume 2729 of {\em {LNCS}}, pages 316--337. Springer, Heidelberg, August 2003.

\bibitem[PLS13]{PLS13}
Nicole Perlroth, Jeff Larson, and Scott Shane.
\newblock {N.S.A.} able to foil basic safeguards of privacy on web.
\newblock The New York Times, 2013.
\newblock \url{http://www.nytimes.com/2013/09/06/us/nsa-foils-much-internet-encryption.html}.

\bibitem[RSS11]{EC:RisShaShr11}
Thomas Ristenpart, Hovav Shacham, and Thomas Shrimpton.
\newblock Careful with composition: Limitations of the indifferentiability framework.
\newblock In Kenneth~G. Paterson, editor, {\em EUROCRYPT~2011}, volume 6632 of {\em {LNCS}}, pages 487--506. Springer, Heidelberg, May 2011.

\bibitem[RTYZ16]{AC:RTYZ16}
Alexander Russell, Qiang Tang, Moti Yung, and Hong-Sheng Zhou.
\newblock Cliptography: Clipping the power of kleptographic attacks.
\newblock In Jung~Hee Cheon and Tsuyoshi Takagi, editors, {\em ASIACRYPT~2016, Part~II}, volume 10032 of {\em {LNCS}}, pages 34--64. Springer, Heidelberg, December 2016.

\bibitem[RTYZ17]{CCS:RTYZ17}
Alexander Russell, Qiang Tang, Moti Yung, and Hong-Sheng Zhou.
\newblock Generic semantic security against a kleptographic adversary.
\newblock In Bhavani~M. Thuraisingham, David Evans, Tal Malkin, and Dongyan Xu, editors, {\em ACM CCS 2017}, pages 907--922. {ACM} Press, October~/~November 2017.

\bibitem[RTYZ18]{C:RTYZ18}
Alexander Russell, Qiang Tang, Moti Yung, and Hong-Sheng Zhou.
\newblock Correcting subverted random oracles.
\newblock In Hovav Shacham and Alexandra Boldyreva, editors, {\em CRYPTO~2018, Part~II}, volume 10992 of {\em {LNCS}}, pages 241--271. Springer, Heidelberg, August 2018.

\bibitem[Rub91]{Rubinfeld:1991}
Ronitt~A. Rubinfeld.
\newblock {\em A Mathematical Theory of Self-checking, Self-testing and Self-correcting Programs}.
\newblock PhD thesis, University of California at Berkeley, Berkeley, CA, USA, 1991.
\newblock UMI Order No. GAX91-26752.

\bibitem[SFKR15]{EPRINT:SFKR15}
Bruce Schneier, Matthew Fredrikson, Tadayoshi Kohno, and Thomas Ristenpart.
\newblock Surreptitiously weakening cryptographic systems.
\newblock Cryptology ePrint Archive, Report 2015/097, 2015.
\newblock \url{http://eprint.iacr.org/2015/097}.

\bibitem[YY96]{C:YouYun96}
Adam Young and Moti Yung.
\newblock The dark side of ``black-box'' cryptography, or: Should we trust capstone?
\newblock In Neal Koblitz, editor, {\em CRYPTO'96}, volume 1109 of {\em {LNCS}}, pages 89--103. Springer, Heidelberg, August 1996.

\bibitem[YY97]{EC:YouYun97}
Adam Young and Moti Yung.
\newblock Kleptography: Using cryptography against cryptography.
\newblock In Walter Fumy, editor, {\em EUROCRYPT'97}, volume 1233 of {\em {LNCS}}, pages 62--74. Springer, Heidelberg, May 1997.

\end{thebibliography}
\newcommand{\etalchar}[1]{$^{#1}$}

\end{document}